\pgfplotsset{width=10cm,compat=newest}
\def\eps{\varepsilon}
\pgfplotsset{width=10cm,compat=newest}
\pgfplotsset{width=10cm,compat=1.9}
\def\eps{\varepsilon}
\pgfplotsset{compat=1.18}
\definecolor{darkgreen}{rgb}{0,0.6,0}
\newcommand{\kibitz}[2]{\ifnum\Comments=1{\textcolor{#1}{{#2}}}\fi}
\newcommand{\rmr}[1]{\kibitz{red}{[RM:#1]}}
\newcommand{\gan}[1]{\kibitz{blue}{[GG:#1]}}
	    \newcommand{\red}[1]{{\leavevmode\color{red}{#1}}}
\newcommand\todo[1]{{\red{TODO: {#1}}}}
\newcommand\toref{\red{[REF]}}
\newcommand{\ppm}{p}
\newcommand{\newpar}[1]{\vspace{-1mm}\paragraph{#1}}
\def\calD{\mathcal{D}}
\def\calN{\mathcal{N}}
\def\ol{\overline}
\def\ul{\underline}
\newcommand{\newsec}[1]{\vspace{-3mm}\section{#1}}
\newcommand{\newsubsec}[1]{\vspace{-2mm}\subsection{#1}}
\title{On Condorcet's Jury Theorem with  Abstention}
\author{
   Reshef Meir\textsuperscript{\rm 1} and Ganesh Ghalme\textsuperscript{\rm 2} 
}
 \newcommand{\floor}[1]{\left \lfloor{#1}\right \rfloor }
\definecolor{ForestGreen}{rgb}{.13,.54,.13}
\definecolor{BrickRed}{rgb}{.80,.26,.33}
\newtheorem{theorem}{Theorem}
\newtheorem{observation}[theorem]{Observation}
\newtheorem{example}{Example}
 \renewcommand{\P}{\mathbb{P}}
\newtheorem{lemma}[theorem]{Lemma}
\newtheorem{proposition}[theorem]{Proposition}
\newtheorem{definition}{Definition}
\DeclareMathOperator*{\Exp}{\mathbb{E}}
\begin{document}

\maketitle 

\begin{abstract}

The well-known Condorcet Jury Theorem states that, under majority rule, the better of two alternatives is chosen with probability approaching one as the population grows. We study an asymmetric setting where voters face varying participation costs and share a possibly heuristic belief about their pivotality (ability to influence the outcome).

In a costly voting setup where voters abstain if their participation cost is greater than their pivotality estimate, we identify a single property of the heuristic belief---weakly vanishing pivotality---that  gives rise to multiple stable equilibria in which elections are nearly tied. In contrast, strongly vanishing pivotality (as in the standard Calculus of Voting model) yields a unique, trivial equilibrium where only zero-cost voters participate as the population grows.  We then characterize when nontrivial equilibria satisfy  a version of the Jury Theorem: below a sharp threshold, the majority-preferred candidate wins with probability approaching one; above it, both candidates either win with equal probability. 
\end{abstract}

\newsec{Introduction}
Consider a population of $N$ voters voting over two alternatives  $A$ and $B$,  with $A$ being the {\em better alternative}  according to some pre-defined criterion. Consider further that the preference of each individual voter is determined independently by an outcome of a coin toss biased in favour of the better alternative $A$. That is, each individual voter supports alternative $A$ with probability $p$ and $B$  with the probability $1-p$. 
Under this setting, the famous Condorcet's Jury Theorem (CJT) states that the majority rule selects candidate $A$ with certainty when population size increases to infinity.  



An implicit assumption in Condorcet's  theorem is that \emph{everyone votes}, or at least that the decision  to vote 
 is independent of one's preference over alternatives. 
In contrast, in many practical situations such as  political elections, or a local or national referendum, abstention is found to be a  common and prominent phenomenon. For instance, the voter turnout in US presidential elections has been around 52\%-62\% over the past 90 years~\cite{martinez2005effects}. Abstention is also observed to be a significant phenomenon  in  lab experiments~\cite{Blais,Owen1984}.



 From a rational, economic perspective, the surprise is not why some voters abstain, but why anyone votes at all. As Anthony Downs noted in 1957, a rational voter compares the benefit of voting—which occurs only if they are pivotal—against a fixed cost. In large electorates, the probability of being pivotal is so low that the expected benefit rarely outweighs the cost. This leads to the so-called paradox of voting~\cite{downs57}, where the only equilibrium is trivial: only voters with zero (or arbitrarily small\footnote{In large but finite populations, only voters with near-zero voting costs participate.}) costs turn out.

While Condorcet's result holds under the assumption that everyone votes,  Down's theory of rational voting predicts meagre participation in large-scale elections. 
Our goal in this paper is to understand the equilibria that arise from rational voting but with a flexible estimation of one's chances to be pivotal, in an attempt to reconcile the theoretical predictions with the moderate turnout rates we see in practice. We ask: (1) are there equilibria where a significant portion of the population votes? (2) how likely is the better candidate to win, in equilibrium? 
We are mainly interested in the answer as the size of the population grows to infinity: does the winning probability of the better candidate approach 1? Or does the paradox of voting `kill' Condorcet's Jury theorem?   We now present the standard model of voter turnout from the literature, which formalizes how voters estimate their likelihood of being pivotal.



\label{sec:intro}  
  \newpar{The Calculus of Voting model:}  
  Originally proposed by \citet{downs57} and later developed by 
 \citet{riker68}, this  model  of rational voting attributes each  voter's decision to abstain to the expected cost-benefit analysis. Let $p_i$ denote the \emph{pivotality} of voter~$i$, $\texttt{V}_i$ denote the  personal benefit she receives if her preferred candidate wins an election,   $\texttt{D}_i$ denote the social benefit she receives by performing a civic duty of voting and $\texttt{G}_i$ denote the   costs of voting she incurs. These costs  include the cost of obtaining and processing information and the actual cost of  registering and going to polls (see also~\cite{Aldrich}  for discussion of voting and rational choice).  A voter $i$ votes if and only if 
  \begin{equation}
     p_i\cdot \texttt{V}_i + \texttt{D}_i \geq \texttt{G}_i. 
     \label{eq:calculus}
 \end{equation} 

 The calculus of voting  model considers  $p_i$ to be the probability that  that all voters except $i$ reach a tie. The tie probabilities are derived from the aggregated stochastic votes, and thus the pivot  computation and subsequent equilibrium analysis quickly become intractable. 
 



    \newpar{Enter heuristics} Several recent models maintain the fundamental game-theoretic approach  of equilibrium among strategic voters, but relax the assumption that voters calculate their true pivot probabilities, or engage in probabilistic calculations at all. This includes  {minmax regret equilibrium}~\cite{Ferejohn74}, {sampling equilibrium}~\cite{OSBORNE2003434}, 
     or {Local-dominance equilibrium}~\cite{Meir14,meir2015plurality}. \citet{Merrill1981}  considers   (as we also do later) voters maximizing expected utility but without specifying how they estimate their pivot probability.
     \newpar{Exit rationality} Finally, there are  models that suggest voting heuristics people may use, without engaging in any equilibrium analysis. A particularly simple example (not related to turnout) is the '$k$-pragmatist' heuristic~\cite{reijngoud2012voter}, and there are many others, see a recent survey in \cite{2018Meir}. Some of these essentially model the decision as a function of the (estimated) margin between candidates~\cite{bowman2014potential,fairstein2019modeling}.
 
\medskip
Our key takeaway from the long list of existing models, along the entire `rational-to-heuristic' spectrum, is that the estimated \emph{margin} plays a major role, in addition to the size of the population (that is not always considered). There is also empirical evidence of the connection between (narrow) margin and (high) turnout~\cite{Aldrich}, and strong experimental evidence that the decision to abstain is positively correlated with a high margin and with large population~\cite{levine2007paradox}. \citet{gerber2020one} conducted a large field experiment, that  showed people substantially over-estimate  the chance of a small margin (and thus their chances of being pivotal). 

It is important to mention a stream of papers that assume voters get a noisy signal of the `truth', but actually have shared interest (sometimes called `epistemic voting'~\cite{coleman1986democracy}). In these models a voter may prefer to vote differently from her signal due to Bayesian reasoning, thereby providing additional reasons for failure of the CJT~\cite{austen1996information}. We discuss this in Sec.~\ref{sec:discussion}. However we follow the more common assumption that voters (if they vote) always follow their signal.

\newsubsec{Our Contribution}

Analyzing every model from the literature separately would be tedious and leave us with an isolated set of narrow results. Instead, we  stay within the Downsian framework where voters are rational in the sense of aiming to maximize their expected utility in equilibrium, but allow a wide range of ways in which voters estimate their pivotality. The dashed rectangle in Fig.~\ref{fig:boxes} shows the scope of models in our framework. 

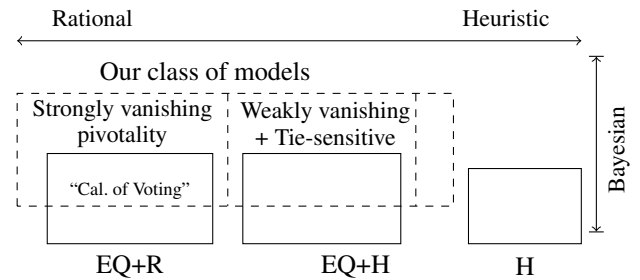
\begin{figure}[ht!]
\centering
\begin{tikzpicture}
\begin{scope}[scale=1]
    \draw (-0.1,0) rectangle (2.1,1.2);
    \node at (01, -0.3) {EQ+R};
    
    \draw (2.5,0) rectangle (4.6,1.2);
    \node at (4, -0.3) {EQ+H};
    
    \draw (5.5,0) rectangle (7,1);
    \node at (6.25, -0.3) {H};

    \draw[dashed] (-0.5,0.5) rectangle (5.3,2);
    \node at (2, 2.3) {Our class of models};
    \node at (0.9,1.6)  {\parbox{3cm}{\centering \small Strongly vanishing pivotality}};
    \node at (3.6,1.6) {{\parbox{3cm}{\centering\small Weakly vanishing \\+ Tie-sensitive}}};
    \node at (1, 0.7) {\scriptsize ``Cal. of Voting"};

    \draw[dashed] (2.3,2) -- (2.3,0.5);
    \draw[dashed] (4.8,2) -- (4.8,0.5);
    \draw[|<->|] (7.2,2.5) -- (7.2,0.15);
    \node[rotate=90] at (7.5,1.25) {\small Bayesian};
    \draw[<->] (-0.5,2.7) -- (7,2.7);
    \node at (0.5,3) {\small Rational};
    \node at (6,3) {\small Heuristic};
\end{scope}
\end{tikzpicture}
\caption{Position of our work within the literature: EQ=Equilibrium; R=Rational; and H=Heuristic.}
\vspace{-2mm}
\label{fig:boxes}
\end{figure}
\paragraph{Our Results:} We fully characterize the CJT result for the models considered in the paper.  At the heart of our characterization result  lie two key elements: (1) the pivot points, which emerge as the limiting points of the sequence of non-trivial equilibria with increasing population size , and (2) the dependence of voter pivotality on two parameters: population size $n$ and margin of victory $m$. While strongly vanishing pivotality models (Calculus of voting model, for instance) tends to collapse the equilibrium into a unique and trivial outcome, it is the more nuanced, weakly vanishing and tie-sensitive models that  gives rise to richer, more complex equilibria. This  leads to outcomes that fundamentally diverge from the classical implications of Condorcet’s Jury Theorem, revealing new dynamics in collective decision-making under uncertainty. In particular, we show that under weakly vanishing models of pivotality estimation it is possible that a significant fraction of population votes but  fails to converge to selecting better alternative even in the limit (hence also for any population size). That is, there is a nonzero probability of a {\em surprise} in the election irrespective of the population size.

\newsec{Model}
\label{sec:model}
 We start with the standard Calculus of Voting model, later expanding it to allow heuristic pivotality estimation.  
\newsubsec{Calculus of Voting} 
We study a two-candidate (referred to as $A$ and $B$)   election with   $N$ voters.  Each  voter  is either a supporter of $A$ (prefers ~$A$, i.e.,  $A \succ_{i} B$)  or a supporter of $B$. 
We adapt the classical two-candidate calculus of voting model  as follows.
The action of each voter is to vote for one of the two candidates $\{A,B\}$, or abstain ($\bot$).  
We set the utility of having $i$'s favorite candidate selected as $1$, and denote  as  $c_i := \max \{ 0,\frac{\texttt{G}_i - \texttt{D}_i}{\texttt{V}_i} \} $ the \emph{effective} cost of voting for voter $i$  (with $\texttt{G}_i$, $\texttt{D}_i$ and $\texttt{V}_i $ as defined in Eq.~\eqref{eq:calculus}).

\if 0
The core supporters (voters with zero effective voting cost) derive more utility from voting for their preferred candidate  than costs incurred  in participating in the voting process.  Voters with non-zero effective costs, on the other hand, vote only if the perceived pivotality of their vote exceeds $c_i$.  For mathematical convenience, we  normalize the effective cost of voting to lie between 0 and 1.  
\fi

We consider the effective cost of voting and the preference of voter~$i$ as  an independent sample from a commonly known  joint distribution $\mathcal{D}$ over $[0,1] \times \{A,B\}$ and is denoted by the tuple $(c_i, T_i)$. We assume that $\calD$ has no atoms, except possibly at $c=0$ (``core voters'').

\paragraph{Rational equilibrium}
Note that the only private information a voter has (other than the type distribution, which is common knowledge), is her own type. Moreover, since voting for the less preferred candidate is strictly dominated, we can assume that the only two actions for voter $i$ are $\top$ (vote for $T_i$) and $\bot$ (abstain). A (pure, ex-ante) strategy profile is therefore a mapping $v$ from the type space $[0,1] \times \{A,B\}$ to an action $\{\top,\bot\}$. 
Now, a voting game is composed of a distribution $\calD$ and population size $N$. Every game, together with a strategy profile $v$, induces a joint distribution on the number of votes for $A$ and $B$.

The expected (normalized) utility gain from actively voting is thus exactly the probability that the voter is pivotal $p^0$, minus the voting cost $c_i$. Since $p^0$  in turn also depends on $v$, we define the explicit function $P^0(v)=P_{N,\calD}^0(v)$ that represents the probability that a single player is pivotal\footnote{For simplicity we consider $p_0$ as the probability that both candidates are \emph{exactly} tied with $\frac{N-1}{2}$ votes each, which is indeed the pivot probability when $N$ is even. When $N$ is odd, things get more complicated, but since for large $N$ the differences are negligible, we maintain our simplifying assumption.} under strategy profile $v$ in the game $(\calD,N)$ (we often omit the game from the definition).  

In a (pure) Bayes-Nash equilibrium, each player picks an action that maximizes her expected utility, given the distribution on other players' actions. This distribution, in turn,  is induced by the type distribution of the other players (conditional on the player's realized type) and the strategy profile. In our case:
\begin{itemize}[leftmargin = *]
    \item The type distribution is an i.i.d. sample from $\calD$, and the voter's realized type reveals no further information;
    \item The utility-maximizing action of each voter is to vote ($\top$) iff Eq.~\eqref{eq:calculus} holds for her type.
\end{itemize}

\begin{definition}[Threshold profile]
A strategy profile $v$ is called a \emph{threshold profile} if there is a threshold $c=c(v)$ such that  
     $\forall c_i\in[0,1], \forall T_i\in\{A,B\}, ~~~v(c_i,T_i)=\top$ iff $c_i\leq c$.    
\end{definition}

\begin{observation}
\label{cor:one}
    A threshold  strategy profile $v$ is a Bayes-Nash equilibrium of the Calculus of Voting game $(\calD,N)$, if and only if 
     $c := c(v) = P^0_{\calD,N}(v).$

\end{observation}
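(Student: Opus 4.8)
The plan is to collapse the equilibrium condition into a single scalar comparison between $c(v)$ and $P^0_{\calD,N}(v)$. The first step is to compute the expected payoff of a unilateral deviation for an \emph{arbitrary} type. Fix a voter $i$ and condition on her realized type $(c_i,T_i)$. Since all $N$ types are i.i.d.\ draws from $\calD$, conditioning on $(c_i,T_i)$ does not change the joint law of the other $N-1$ types, hence does not change the joint law of the other voters' actions under the fixed profile $v$ --- this is exactly the ``the voter's realized type reveals no further information'' observation made above. Therefore the probability that $i$'s vote changes the outcome is exactly $P^0_{\calD,N}(v)$, independently of $(c_i,T_i)$, and by the utility normalization behind Eq.~\eqref{eq:calculus} (favorite candidate selected is worth $1$, effective cost is $c_i$) the expected utility of playing $\top$ minus that of playing $\bot$ for type $(c_i,T_i)$ equals $P^0_{\calD,N}(v)-c_i$. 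Consequently $\top$ is a (weak) best response for type $(c_i,T_i)$ iff $c_i\le P^0_{\calD,N}(v)$, and $\bot$ is a best response iff $c_i\ge P^0_{\calD,N}(v)$.

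Given this, both directions are one-liners. For the ``if'' direction, assume $c(v)=P^0_{\calD,N}(v)$: every type with $c_i\le c(v)$ has $c_i\le P^0_{\calD,N}(v)$, so the prescribed action $\top$ is a best response, while every type with $c_i>c(v)$ has $c_i>P^0_{\calD,N}(v)$, so the prescribed action $\bot$ is the strict best response; hence $v$ is a Bayes--Nash equilibrium. For the ``only if'' direction, let the threshold profile $v$ be an equilibrium and suppose for contradiction that $c(v)\ne P^0_{\calD,N}(v)$. If $P^0_{\calD,N}(v)<c(v)$, choose a cost $c_i\in\big(P^0_{\calD,N}(v),\,c(v)\big)\subseteq[0,1]$; there $v$ prescribes $\top$, yet the deviation gain $P^0_{\calD,N}(v)-c_i<0$, so abstaining is strictly better --- contradicting equilibrium. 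If instead $P^0_{\calD,N}(v)>c(v)$, choose $c_i\in\big(c(v),\,P^0_{\calD,N}(v)\big)\subseteq[0,1]$; there $v$ prescribes $\bot$, yet the gain $P^0_{\calD,N}(v)-c_i>0$, so voting is strictly better --- again a contradiction. Hence $c(v)=P^0_{\calD,N}(v)$. (Both open intervals used are nonempty subsets of $[0,1]$ exactly when the corresponding strict inequality holds, so the endpoint cases $c(v)\in\{0,1\}$ need no separate treatment.)

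The only step that calls for genuine care --- and the one I would state explicitly rather than wave at --- is the first: that the pivot probability seen by a deviating type is still exactly $P^0_{\calD,N}(v)$, which rests on the independence of the $N$ type draws and on the already-recorded fact that one's own type is uninformative about the rest. Everything after that is bookkeeping: a comparison of the scalar $c_i$ with the scalar $P^0_{\calD,N}(v)$, using only that a threshold profile plays $\top$ precisely on $\{c_i\le c(v)\}$. No concentration estimate, large-population asymptotics, or limiting argument is involved, so I do not anticipate a real obstacle beyond making the conditioning argument precise.
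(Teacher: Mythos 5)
Your proof is correct and follows essentially the same route as the paper's: observe that the pivot probability $P^0_{\calD,N}(v)$ is the same for every type (by the i.i.d.\ structure and type-independence), so voting is a best response exactly when $c_i\le P^0_{\calD,N}(v)$, and then match this cutoff against the threshold $c(v)$. The paper's own proof is just a terser version of this argument; your added care in the ``only if'' direction (choosing a cost in the open interval between the two quantities) fills in details the paper leaves implicit but introduces no new idea.
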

\begin{proof}
Let $i$ be a  voter ~$(c_i,T_i)$ and 
note that $P^0(v)$ does not depend on her type $T_i$. As $v$ is a threshold profile,  we have  $p_i= p^0=P^0(v)$.  The utility-maximizing action of voter  is to vote if and only if $c_i\leq P^0(v)$.  
\end{proof}
We  restrict our attention to threshold profiles $c\in [0,1]$ unless explicitly said otherwise. Note that we did not explicitly say what is the function $P^0$. However clearly such a function is well-defined, and even without a formal definition it is clear that fixing $v$, $P_{\calD,N}^0(v)$ decreases very rapidly with $N$. This means that only voters with essentially zero cost will vote, regardless of the type distribution. Indeed, this is the well-known paradox of voting, and it is easy to see that it may lead to an almost-certain win of the minority candidate, if  happens to be supported by more low-cost voters. 

 \paragraph{Support functions}
While  $\calD$ contains all necessary information regarding the distribution of costs in the population, we would like to present costs in a more intuitive way. 

For $T\in \{A,B\}$ 
let $s_T:[0,1]\rightarrow [0,1]$  be a continuous, non-decreasing function, where $s_T(c)$ should be read as the fraction of the distribution that prefers $T$  and  has individual cost at most $c$. We call $s_T$ the \emph{support function} of $T$, and note that it does not depend on the identity of voter $i$. 
\begin{restatable}{proposition}{support}\label{prop:support}
    Let voter costs and types are sampled i.i.d. from a  distribution $\calD$. Then any distribution $\calD$ induces a unique pair of support functions $s_A,s_B$ with $s_A(1)+s_B(1)=1$, and vice-versa.
\end{restatable}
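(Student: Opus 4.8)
The plan is to write down the correspondence explicitly in both directions, handling the two candidate labels coordinatewise, and to check that the two maps are mutually inverse; the only point requiring care is the endpoint $c=0$, where an atom (the mass of core voters) is permitted. \emph{From a distribution to support functions:} given $\calD$, I would set $s_T(c):=\calD([0,c]\times\{T\})$ for $T\in\{A,B\}$, $c\in[0,1]$. Monotonicity of $\calD$ makes each $s_T$ non-decreasing, and $s_T(c)\le\calD([0,1]\times\{A,B\})=1$ gives the range $[0,1]$. Right-continuity at every $c$ follows from continuity of measures along $[0,c+\tfrac1n]\times\{T\}\downarrow[0,c]\times\{T\}$; left-continuity at any $c>0$ holds since $s_T(c)-\lim_{c'\uparrow c}s_T(c')=\calD(\{c\}\times\{T\})=0$ because $\calD$ has no atom at $c>0$, and left-continuity at $c=0$ is vacuous, so $s_T$ is continuous on $[0,1]$. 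Additivity of $\calD$ gives $s_A(1)+s_B(1)=\calD([0,1]\times\{A\})+\calD([0,1]\times\{B\})=1$. This pair is unique because the displayed formula is precisely the intended reading of $s_T$.

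\emph{From support functions to a distribution.} Conversely, given continuous non-decreasing $s_A,s_B:[0,1]\to[0,1]$ with $s_A(1)+s_B(1)=1$, I would let $\mu_T$ be the Lebesgue--Stieltjes measure on the Borel sets of $[0,1]$ with $\mu_T(\{0\})=s_T(0)$ and $\mu_T((a,b])=s_T(b)-s_T(a)$ for $0\le a<b\le1$ (well defined since $s_T$ is non-decreasing and right-continuous), and define $\calD(E\times\{T\}):=\mu_T(E)$ for Borel $E\subseteq[0,1]$, extended additively over $T\in\{A,B\}$. Then $\calD([0,1]\times\{A,B\})=\mu_A([0,1])+\mu_B([0,1])=s_A(1)+s_B(1)=1$, so $\calD$ is a probability measure; for $c>0$, $\calD(\{c\}\times\{T\})=s_T(c)-s_T(c^-)=0$ by continuity of $s_T$, so $\calD$ has no atom outside $c=0$ (while $\calD(\{0\}\times\{T\})=s_T(0)$ may be positive, which is allowed); and $\calD([0,c]\times\{T\})=\mu_T([0,c])=s_T(c)$, so applying the first map to this $\calD$ recovers $(s_A,s_B)$.

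\emph{Uniqueness of $\calD$ and the bijection.} Any probability measure $\calD'$ with induced support functions $(s_A,s_B)$ satisfies $\calD'([0,c]\times\{T\})=s_T(c)$ for all $c,T$. The family $\{\emptyset\}\cup\{[0,c]\times\{T\}:c\in[0,1],\,T\in\{A,B\}\}$ is a $\pi$-system (the intersection of two members is a member when the labels agree, and is $\emptyset$ otherwise) that generates the Borel $\sigma$-algebra of $[0,1]\times\{A,B\}$, so by Dynkin's $\pi$--$\lambda$ theorem $\calD'$ is determined on the whole $\sigma$-algebra by these values; hence $\calD'=\calD$. Combined with the two constructions above, this shows the maps $\calD\mapsto(s_A,s_B)$ and $(s_A,s_B)\mapsto\calD$ are well defined and mutually inverse, i.e.\ a bijection.

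\emph{Main obstacle.} There is no serious difficulty here: the statement is the classical dictionary between cumulative distribution functions and measures, applied to the two ``sub-distribution functions'' $s_A,s_B$ whose sum is the CDF of the cost marginal. The one step that needs attention is the endpoint $c=0$---one must see that ``continuous on $[0,1]$'' forbids atoms at every $c>0$ while still permitting $s_T(0)>0$, and conversely verify that the Lebesgue--Stieltjes construction can place an atom only at $0$.
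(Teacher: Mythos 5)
Your proposal is correct and follows essentially the same route as the paper: both identify $s_T(c)$ with $\calD([0,c]\times\{T\})$ (the paper phrases this as the expectation of an empirical fraction, which is the same quantity), derive continuity from the no-atoms assumption away from $c=0$, and recover $\calD$ in the reverse direction from the pair of sub-distribution functions. The only difference is that you make the uniqueness of the recovered $\calD$ explicit via the Lebesgue--Stieltjes construction and the $\pi$--$\lambda$ theorem, whereas the paper leaves this implicit in the statement that $s_T(c)/s_T(1)$ is the conditional CDF of the cost given the type.
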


\newsubsec{Perceived Pivotality}
The informal statement above regarding the negligible turnout is not only disappointing, but also unrealistic, as in practice a substantial fraction of the population usually votes. We would therefore  relax some of the rationality assumptions. 
These assumptions essentially correspond to the two bullets in the equilibrium characterization in Cor.~\ref{cor:one}: The second suggests that voters act based on their true probability of being pivotal; and the first means that they maximize their expected utility given this probability and Eq.~\eqref{eq:calculus}. 
In what follows, we will maintain the utility maximization assumption  but allow voters much more freedom in estimating their pivot probability $P$. 
\if 0

Without loss of generality, we assume  that (weakly) more voters support $A$  in expectation. 
We therefore consider $A$ as the ``better'' or ``popular'' candidate, and see the outcome where $A$ is elected as preferable.

\medskip
As discussed above, Condorcet Jury Theorem states that  \emph{if all voters vote}, the probability that $A$ wins goes to 1 as the number of voters grows.
However, this is clearly not always true if $A$ voters are more likely abstain than $B$ voters. This, in turn, may depend both on \emph{individual voters' costs}, and on \emph{voters' perceived pivotality} (which we assume to be the same for all voters). 
\fi

\paragraph{Pivot functions}
We highlighted earlier that the two most important factors that  determine the probability of various outcomes are (1) the size of voting population, $n$; and (2) the margin, $m$. Our simplifying assumption (following e.g.,~\cite{MyersonWeber}) is that voters only consider $n$ and $m$ as expected values. Thus, an \emph{expectation-based Perceived Pivotality Model} (PPM) is specified by a function $p$, which maps any pair of $n$ and $m$ to $p(n,m)\in(0,1]$, and is continuous, non-increasing in both parameters, and strictly decreasing when strictly below $1$.

PPM  quantifies a subjective ex-ante belief of the individual voter about the importance of her vote.  The equilibrium analysis crucially depends on the PPM model under consideration. In Sec. \ref{sec:PPM}, we  provide several concrete pivotality models, which can either approximate the real pivot probability $P^0$ or reflect beliefs and other factors affecting utility. 

Replacing $P^0$ with a general PPM $p(n,m)$ allows us to consider a broad set of voters' behaviors. To see why this is useful, we first observe that the expected margin and the number of voters can be easily derived for any threshold profile $c$.
We  define the two following functions:
\begin{align}
    n(c,N)&:=(s_A(c)+s_B(c))N  \label{eq:n_c}; &\text{(expected  voters)}\\
    m(c)&:=\frac{|s_A(c)-s_B(c)|}{s_A(c)+ s_B(c)}\label{eq:m_c}.&\text{(expected margin)}
\end{align} 

\begin{observation}
    Given support functions (i.e. a type distribution) $(s_A,s_B)$, a threshold profile $c$, and population size $N$, the expected number of active voters is $n(c,N)$ and the expected margin is $m(c)$. 
\end{observation}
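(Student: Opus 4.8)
The plan is to reduce the statement to a direct application of linearity of expectation together with the defining property of the support functions from Proposition~\ref{prop:support}. First I would fix the threshold profile $c$ and the population size $N$, and recall that by definition a threshold profile has voter $i$ cast an active vote for $T_i$ precisely when $c_i \le c$. Since the pair $(c_i, T_i)$ is an i.i.d.\ draw from $\calD$, I would introduce for each $i \in \{1,\dots,N\}$ the indicator random variables $X_i^A := \one[\,T_i = A,\ c_i \le c\,]$ and $X_i^B := \one[\,T_i = B,\ c_i \le c\,]$, so that $X_i^A$ (resp.\ $X_i^B$) equals $1$ exactly when voter $i$ actively votes for $A$ (resp.\ for $B$). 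By the characterization in Proposition~\ref{prop:support}, $s_A(c)$ is exactly the $\calD$-probability that a voter both prefers $A$ and has cost at most $c$; hence $\E[X_i^A] = s_A(c)$, and likewise $\E[X_i^B] = s_B(c)$, for every $i$.

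Next I would let $N_A := \sum_{i=1}^N X_i^A$ and $N_B := \sum_{i=1}^N X_i^B$ denote the (random) vote totals for $A$ and $B$, and $n := N_A + N_B$ the total number of active voters. Linearity of expectation immediately gives $\E[N_A] = s_A(c)\,N$, $\E[N_B] = s_B(c)\,N$, and therefore $\E[n] = (s_A(c)+s_B(c))\,N = n(c,N)$, which is the first claim. For the second claim I would compute the margin exactly as it is used throughout the paper, namely as the normalized lead evaluated at the expected vote counts: $\frac{|\E[N_A]-\E[N_B]|}{\E[N_A]+\E[N_B]} = \frac{|s_A(c)-s_B(c)|\,N}{(s_A(c)+s_B(c))\,N} = \frac{|s_A(c)-s_B(c)|}{s_A(c)+s_B(c)} = m(c)$, matching Eq.~\eqref{eq:m_c}.

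There is no genuinely hard step here; the only point that deserves care—and the one I would state explicitly so the reader is not misled—is the meaning of ``expected margin.'' The quantity $m(c)$ is \emph{not} the expectation of the random ratio $|N_A - N_B|/(N_A+N_B)$, which in general differs from the ratio of the expectations; rather it is the margin obtained by plugging in the expected vote shares. This is consistent with, and in fact is exactly, the modelling assumption stated just above the observation—that voters reason about $n$ and $m$ only through their expected values, following~\cite{MyersonWeber}. With that convention fixed, the observation follows immediately from the two displayed computations.
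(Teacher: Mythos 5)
Your proof is correct and matches the paper's (implicit) reasoning: the paper states this observation without proof, treating it as immediate from the definitions in Eqs.~\eqref{eq:n_c} and~\eqref{eq:m_c}, and your indicator-variable/linearity-of-expectation argument is exactly the intended unpacking. Your explicit caveat that $m(c)$ is the margin evaluated at expected vote counts (a ratio of expectations, not the expectation of the random ratio) is the right reading of the paper's modelling convention and is worth stating, since the paper itself glosses over this point.
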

\newpar{Election equilibrium}
We can now broaden our class of games. An \emph{Election Game} is a tuple $(s_A,s_B,p,N)$, where $s_A,s_B$ are the support functions of the type distribution $\calD$; $p$ is a PPM; and $N$ is the size of the population. For the special case where $p=P^0_{N,\calD}$, we get the Calculus of Voting game. However, in a general election game, a voter votes according to how much she \emph{perceives herself as pivotal}. We extend the equilibrium definition accordingly. 
\begin{definition}
     A strategy profile $v$ is an \emph{election equilibrium} of election game $(s_A,s_B,p,N)$, if  
     \begin{enumerate} 
         \item $v$ is a threshold profile; and
         \item $c:= c(v) = p(n(c,N),m(c))$.
     \end{enumerate}
\end{definition}
 
\begin{restatable}{proposition}{propTwo}\label{prop:eq_exists}
Every election has at least one equilibrium.
\end{restatable}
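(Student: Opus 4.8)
The plan is to realize an election equilibrium as a fixed point of a one‑dimensional self‑map and extract it via the Intermediate Value Theorem. Fix an election game $(s_A,s_B,p,N)$ and define $f:[0,1]\to(0,1]$ by $f(c):=p\big(n(c,N),\,m(c)\big)$, where $n(\cdot,N)$ and $m(\cdot)$ are the expected‑turnout and expected‑margin functions introduced above. By the definition of election equilibrium, the threshold profile with threshold $c$ is an equilibrium of the game precisely when $c=f(c)$, so it suffices to produce a fixed point of $f$ and then take $v^\ast$ to be the threshold profile with that threshold.

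First I would check that $f$ is continuous on $[0,1]$. The maps $c\mapsto n(c,N)=(s_A(c)+s_B(c))N$ and $c\mapsto m(c)$ are continuous because $s_A,s_B$ are continuous (and $\calD$ has no atoms away from $0$), and $p$ is continuous by assumption, so $f$ is a composition of continuous maps. The one place this needs care is $c\mapsto m(c)$ at values where $s_A(c)+s_B(c)=0$: this can occur only on an initial interval $[0,c_0]$ (the case with no core voters and no one of small cost), where $n(c,N)=0$ and the margin is immaterial; there I would extend $m$ by a fixed constant and verify continuity at the transition point $c_0$ using $n(c,N)\to 0$ together with continuity of $s_A,s_B$ and of $p$.

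Next, the boundary behaviour: since $p$ takes values in $(0,1]$ we have $f(0)>0$ and $f(1)\le 1$. Hence the continuous function $g(c):=f(c)-c$ satisfies $g(0)=f(0)>0$ and $g(1)=f(1)-1\le 0$, so by the Intermediate Value Theorem there is $c^\ast\in[0,1]$ with $g(c^\ast)=0$, i.e.\ $c^\ast=p\big(n(c^\ast,N),m(c^\ast)\big)$. The threshold profile $v^\ast$ with threshold $c^\ast$ then satisfies $c(v^\ast)=c^\ast=p\big(n(c(v^\ast),N),m(c(v^\ast))\big)$, which is exactly the equilibrium condition, so $v^\ast$ is an election equilibrium and the claim follows.

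I expect the only genuine obstacle to be the continuity of $c\mapsto m(c)$ (equivalently, the behaviour of the model when the expected turnout is $0$) in that degenerate initial regime; away from it the argument is a routine application of IVT. Nothing more is needed for mere existence — the finer structure (how many equilibria there are, their stability, and their limits as $N\to\infty$) is exactly what the later sections develop.
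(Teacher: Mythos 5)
Your argument is exactly the paper's: define $f(c):=p(n(c,N),m(c))$, note it is continuous because $s_A,s_B$ and $p$ are, and extract a fixed point of $f$ on $[0,1]$ via the Intermediate Value Theorem applied to $f(c)-c$. Your extra care about the degenerate regime where $s_A(c)+s_B(c)=0$ (so that $m(c)$ is not defined by Eq.~\eqref{eq:m_c}) is a point the paper silently glosses over, and your fix is fine.
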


The proof follows from the fact that $g(c):=p(n(c,N),m(c))$ is a continuous function from $[0,1]$ onto itself and therefore must have a fixed point.


\newsubsec{Issues and Elections}
We want to be able to analyze elections as the population size grows. 
An \emph{issue} is a triple $I=(s_A,s_B,p)$. Thus an issue together with a specific population size $N$ defines an election game $(s_A,s_B,p,N)$ (or just $(I,N)$) as above.
Alternatively, an issue can be thought of as a series of election games, one for every population size $N$. Denote by $C(I,N)\subseteq [0,1]$ the set of all equilibrium points of election $E=(I,N)$. 

\begin{definition}[Issue equilibrium]An equilibrium of issue $I$ is a series of points $\overline{c}=(c_N)_{N}$ s.t. $\forall N, c_N\in C(I,N)$, and $\overline{c}$ has a limit. We denote the limit by $c^*$.
\end{definition}

For an issue equilibrium $\ol c$ with limit $c^*$, if  $c_N>c^*$ for all $N$ we say that $\overline{c}$ is a \emph{right equilibrium}. Similarly, if  $c_N<c^*$ for all $N$ we say that $\overline{c}$ is a \emph{left equilibrium}. 
\paragraph{Trivial equilibria}
An equilibrium is \emph{trivial} if its limit is $0$, meaning only core supporters vote. 
\newsec{ Perceived Pivotality Models}
\label{sec:PPM}


We first consider models that closely approximate the actual probability that a single voter is pivotal, i.e., the probability of a tie $V_A=V_B$. 

\newsubsec{Fully Rational models}
\label{ssec:fully rational}
We first argue that our model captures the Calculus of Voting as a special case, i.e. that $P^0$ is also a PPM.
\def\ns{\!\!\!\!}
\begin{proposition}
    For every $N$, there is a PPM $p^{CoV}_N$ s.t. for every threshold profile $c$,
    $P^0_{\calD, N}(c)=p^{CoV}_N(n(c,N),m(c))$.
    More precisely, 
     \begin{equation*}
  \ppm^{CoV}_N(n,m):=   \Exp \limits_{n' \sim Bin(N,\frac{n}{N})}  \big [ \Pr\limits_{x\sim \text{Bin}   ( n', \frac{1+m}{2} )} (x = \floor{n'/2})\big].
  \end{equation*}
\end{proposition}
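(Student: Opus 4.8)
The plan is to unfold the definition of $P^0_{\calD,N}(c)$ under a threshold profile and recognise the resulting quantity as the stated double expectation; the only genuinely non‑routine ingredient is the thinning property of the multinomial. First I would fix a threshold profile with threshold $c$ and observe, from the definition of a threshold profile, that every voter independently votes $A$ with probability $s_A(c)$, votes $B$ with probability $s_B(c)$, and abstains with the remaining probability, so the vector of tallies $(V_A,V_B,V_\bot)$ is $\mathrm{Multinomial}\big(N;\,s_A(c),\,s_B(c),\,1-s_A(c)-s_B(c)\big)$. Write $q:=s_A(c)+s_B(c)$; then the number of active voters $n':=V_A+V_B$ is $\mathrm{Bin}(N,q)$ (merging two multinomial cells), and by the standard thinning/conditioning property of the multinomial, given $n'$ the front‑runner's tally satisfies $V_A\mid n'\sim \mathrm{Bin}\big(n',\,s_A(c)/q\big)$ (assume w.l.o.g.\ $s_A(c)\ge s_B(c)$, so $A$ is the front‑runner). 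By \eqref{eq:n_c}, $q=n(c,N)/N$, and from the definition of the margin, $s_A(c)/q=\tfrac12\big(1+\tfrac{s_A(c)-s_B(c)}{s_A(c)+s_B(c)}\big)=\tfrac{1+m(c)}{2}$.

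Next I would invoke the paper's simplifying convention for the pivot event (the footnote in Sec.~\ref{sec:model}): $P^0_{\calD,N}(c)$ is the probability of the near‑tie event in which, among the $n'$ active voters, the front‑runner's tally equals $\floor{n'/2}$. Conditioning on $n'$ and using the tower rule then gives
\[
P^0_{\calD,N}(c)=\Exp_{n'\sim \mathrm{Bin}(N,\,q)}\!\Big[\Pr_{x\sim \mathrm{Bin}(n',\,s_A(c)/q)}\big(x=\floor{n'/2}\big)\Big]=\Exp_{n'\sim \mathrm{Bin}(N,\,n(c,N)/N)}\!\Big[\Pr_{x\sim \mathrm{Bin}(n',\,\frac{1+m(c)}{2})}\big(x=\floor{n'/2}\big)\Big],
\]
which is exactly $\ppm^{CoV}_N\big(n(c,N),m(c)\big)$. (Alternatively one can skip the conditioning step and verify the identity purely algebraically, writing out the multinomial pmf of $\{V_A=k,\,V_B=k\}$, summing over $k$, and using $\binom{N}{2k}\binom{2k}{k}=\frac{N!}{k!\,k!\,(N-2k)!}$.) In particular the right‑hand side depends on $c$ only through the pair $\big(n(c,N),m(c)\big)$, so $\ppm^{CoV}_N$ is a well‑defined function of $(n,m)$.

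It then remains to confirm that $\ppm^{CoV}_N$ is a PPM in the sense of Sec.~\ref{sec:model}: continuity is clear since the expression is a finite sum of products of powers of $n/N$ and $\tfrac{1+m}{2}$, hence a polynomial in those quantities; it lies in $(0,1]$ as a nonnegative finite sum of probabilities, equalling $1$ precisely when $n=0$ (then $n'=0$ a.s.); and it is non‑increasing in each argument — monotonicity in $m$ follows by coupling $\mathrm{Bin}(n',\tfrac{1+m}{2})$ across $m$ (a larger skew makes a near‑tie less likely), and monotonicity in $n$ is the familiar fact that the tie/pivot probability decreases in the electorate size, shown by a short coupling or by comparing consecutive binomial terms, with strict decrease wherever the value is $<1$. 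The substantive content of the proposition is the displayed identity; the step to watch is the multinomial thinning, together with the bookkeeping that matches the paper's near‑tie convention (and its use of $N$ rather than $N-1$ trials) to the $\floor{n'/2}$ appearing in the formula.
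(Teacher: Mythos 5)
Your proof is correct and follows essentially the same route as the paper's (appendix) argument: a two-stage sampling of the multinomial tallies, first drawing the number of active voters $n'\sim \mathrm{Bin}(N,\,n/N)$ and then conditioning so that the front-runner's tally is $\mathrm{Bin}\big(n',\tfrac{1+m}{2}\big)$. You are in fact somewhat more careful than the paper, which silently identifies the exact-tie event with the $\floor{n'/2}$ convention and does not verify that $\ppm^{CoV}_N$ satisfies the PPM axioms.
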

Note that as $N$ grows, $n'$ is highly concentrated around $n$. We can therefore define an approximate version with a PPM $p$ that does not depend on $N$:


 \begin{example}[Binomial PPM] 

 \begin{equation} 
  \ppm^{Bin}(n,m):=\Pr_{x\sim \text{Bin} \big ( n, (1+m)/2 \big)}(x = \floor{n/2}).\label{eq:Binom}
  \end{equation}
 \end{example}
  A later model by ~\cite{Myerson1998} suggested drawing the scores of each candidate independently from a Poisson distribution (see Appendix~\ref{apx:CoV}). 
Conceptually, the Poisson model is more appropriate in situations where voters can abstain (as the total number of active voters is not fixed),  
However it behaves very similarly to the Binomial model, and for our purpose they are almost the same. In fact, all three models belong in a much larger class of PPMs, characterized by \emph{strong vanishing pivotality}:
\begin{definition}[Vanishing Pivotality]\label{def:vp}
 We say that a PPM $p$ has [strong]  vanishing pivotality if   $\lim_{n\rightarrow \infty}p (n,m)=0$ for all $m> 0$  [$ m \geq 0$].
\end{definition} 

As we will later see, issues with  vanishing pivotality (v.p.) always admit a trivial equilibrium. 
Clearly at the trivial equilibrium, Jury theorems are irrelevant: the candidate with more core support always wins with probability that approaches $1$ as the population grows, regardless of who is more popular overall. It is not hard to verify (e.g., using Stirling approximation) that in both the Binomial and Poisson PPMs, $p(n,m)=\Theta(\frac{1}{\sqrt{n}})$ for  $m=0$, and decreases exponentially fast in $n$ for any $m>0$.  

 \newsubsec{Tie-Sensitive models}
We saw that even in the rational models (which have strong vanishing pivotality), the case of $m=0$ is different, with a substantially higher probability to be pivotal. A simple and perhaps more cognitively plausible assumption is that voters \emph{consider themselves pivotal} if the margin is small enough, regardless of the number of voters.
\begin{definition}[Tie-sensitive pivotality]\label{def:tie_PPM}We say that a PPM is $q$-\emph{tie-sensitive} if $p(n,0)\geq q$ for all $n$.
\end{definition}
That is, if the expected outcome is a tie, everyone thinks they are pivotal at least to some extent, regardless of the number of active voters. 
By definition, any PPM has either strong v.p. \emph{or} tie-sensitivity. If it is tie-sensitive and has v.p. we say it has \emph{weak} vanishing pivotality. Tie-sensitivity may occur due to various reasons, and we provide and discuss examples in Section~\ref{sec:PPM_class}.
\section{Characterizing Equilibrium Limits}

\label{sec:charEquilibria}

We begin by characterizing the trivial equilibrium. 
\begin{restatable}{proposition}{propFour}\label{prop:weak_trivial}
    Suppose $s_A(0)\neq s_B(0)$. Any issue with  vanishing pivotality admits a trivial equilibrium.
\end{restatable}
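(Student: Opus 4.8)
The plan is to produce, for every population size $N$, an equilibrium threshold $c_N \in C(I,N)$, chosen so that $c_N \to 0$; by definition such a series $\overline c = (c_N)_N$ is then a trivial issue equilibrium. Throughout I write $g_N(c) := p(n(c,N), m(c))$, so that $c$ is an equilibrium of $(I,N)$ exactly when $g_N(c) = c$, and I recall from (the proof of) Proposition~\ref{prop:eq_exists} that $g_N$ is continuous on $[0,1]$ and that $C(I,N)\neq\emptyset$ for every $N$.

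First I would fix notation near $c=0$. Assume without loss of generality $s_A(0) > s_B(0) \ge 0$; then $\delta := s_A(0)+s_B(0) > 0$ and $\mu := m(0) = \frac{s_A(0)-s_B(0)}{\delta} \in (0,1]$. Since $s_A,s_B$ are continuous and non-decreasing, $n(c,N) = (s_A(c)+s_B(c))N \ge \delta N$ for all $c\in[0,1]$, and by continuity of $m(\cdot)$ at $0$ there is $\epsilon_0\in(0,1]$ with $m(c)\ge \mu/2$ for every $c\in[0,\epsilon_0]$.

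Next I would confine the fixed-point problem to a shrinking neighborhood of $0$. For $c\in[0,\epsilon_0]$, monotonicity of $p$ in both arguments gives $g_N(c)\le p(\delta N,\mu/2)$, and since $p$ has vanishing pivotality with $\mu/2>0$, this upper bound tends to $0$ as $N\to\infty$. Hence for each $\epsilon\in(0,\epsilon_0]$ there is $N(\epsilon)$ such that $g_N(\epsilon)<\epsilon$ whenever $N\ge N(\epsilon)$; meanwhile $g_N(0)=p(\delta N,\mu)>0$ because $p$ is valued in $(0,1]$. Applying the intermediate value theorem to $c\mapsto g_N(c)-c$ on $[0,\epsilon]$ (positive at $0$, negative at $\epsilon$) then yields an equilibrium $c_N\in(0,\epsilon)$ for all $N\ge N(\epsilon)$. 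To assemble a single convergent series, set $\epsilon_k:=\min\{\epsilon_0,1/k\}$ and $N_k:=N(\epsilon_k)$; the $N_k$ are non-decreasing with $N_k\to\infty$ (were they bounded by some $M$, non-monotonicity of $p$ in its first argument would force $p(\delta M,\mu/2)=0$, a contradiction). For $N<N_1$ take any $c_N\in C(I,N)$, and for $N_k\le N<N_{k+1}$ take $c_N\in(0,\epsilon_k)\cap C(I,N)$ as above; then $0<c_N<\epsilon_k\to 0$, so $\overline c$ converges to $c^*=0$.

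The argument is largely bookkeeping, and the one place that needs care is that $p$ is \emph{strictly} positive, so no finite election literally has $c=0$ as an equilibrium: the triviality must be witnessed by genuinely positive thresholds shrinking to $0$, which is why the construction delivers $c_N\in(0,\epsilon_k)$ rather than $c_N=0$. The substantive step is the sandwich $g_N(c)\le p(\delta N,\mu/2)\to 0$: it relies on the hypothesis $s_A(0)\neq s_B(0)$ to keep the expected margin bounded away from $0$ near $c=0$, and on $\delta>0$ (a positive mass of core voters, which $s_A(0)\neq s_B(0)$ forces) to make $n(c,N)\to\infty$ there --- precisely the two features that break down when $s_A(0)=s_B(0)$.
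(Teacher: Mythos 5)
Your proof is correct and follows essentially the same route as the paper's: bound the margin away from zero near $c=0$ using $s_A(0)\neq s_B(0)$, invoke vanishing pivotality to push $g_N$ below the identity at a small $\epsilon$, and apply the intermediate value theorem on $[0,\epsilon]$. Your explicit assembly of the full sequence $(c_N)_N$ (including arbitrary equilibria for small $N$) is a bit more careful than the paper, which only exhibits equilibria below $\delta$ for some $N_\delta$, but this is a presentational rather than substantive difference.
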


\begin{restatable}
{proposition}{propThree}\label{prop:strong_trivial}
    Suppose $s_A(0)+s_B(0)>0$. Any  issue with strong vanishing pivotality admits \textbf{only}   trivial equilibrium.
\end{restatable}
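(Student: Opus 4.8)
The plan is to show that \emph{every} issue equilibrium $\ol c=(c_N)_{N}$ of an issue $I=(s_A,s_B,p)$ with strong vanishing pivotality has limit $c^*=0$, i.e.\ is trivial. The role of the hypothesis $s_A(0)+s_B(0)>0$ is exactly to guarantee that the expected number of active voters blows up along \emph{any} equilibrium sequence; once that is in hand, strong vanishing pivotality finishes the argument immediately, with no case analysis on $c^*$.

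First I would unpack the definitions. Since $c_N\in C(I,N)$, it satisfies the fixed-point equation $c_N=p\big(n(c_N,N),\,m(c_N)\big)$, where $n(c_N,N)=(s_A(c_N)+s_B(c_N))N$ and $m(c_N)\ge 0$ (Eqs.~\eqref{eq:n_c}--\eqref{eq:m_c}). Because $s_A$ and $s_B$ are non-decreasing and $c_N\ge 0$, we have $s_A(c_N)+s_B(c_N)\ge s_A(0)+s_B(0)=:\alpha>0$, and hence $n(c_N,N)\ge \alpha N\to\infty$. Now I would invoke monotonicity of the PPM \emph{in both coordinates} to discard the margin term:
\[
c_N \;=\; p\big(n(c_N,N),\,m(c_N)\big)\;\le\; p\big(n(c_N,N),\,0\big)\;\le\; p(\alpha N,0).
\]
By the strong vanishing pivotality property (the ``$m\ge 0$'' case of Definition~\ref{def:vp}), $p(\alpha N,0)\to 0$ as $N\to\infty$, so $c_N\to 0$ and therefore $c^*=0$. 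That a trivial equilibrium indeed exists follows from Proposition~\ref{prop:weak_trivial} when $s_A(0)\neq s_B(0)$, and more generally from the fixed-point argument behind Proposition~\ref{prop:eq_exists} together with $p(\alpha N,0)\to 0$, so ``only trivial equilibrium'' is the correct statement.

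I expect essentially no obstacle; the one thing to get right is the monotonicity step. We do not control $m(c_N)$ directly --- it may stay bounded away from $0$ or tend to $0$ depending on the support functions --- so the clean move is to replace it by $0$ using that $p$ is non-increasing in the margin, and then replace $n(c_N,N)$ by its guaranteed lower bound $\alpha N$ using that $p$ is non-increasing in $n$. This also explains why the hypothesis is $s_A(0)+s_B(0)>0$ rather than something like $c^*>0$: it is precisely the positive mass of core voters that forces $n(c_N,N)\to\infty$ \emph{uniformly along the whole sequence}, even near a putative trivial limit, which is what lets the argument proceed directly instead of by contradiction.
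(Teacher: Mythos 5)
Your proof is correct and rests on the same core mechanism as the paper's: bound $c_N = p\big(n(c_N,N),m(c_N)\big)\le p\big(n(c_N,N),0\big)$ by monotonicity in the margin and let strong vanishing pivotality at $m=0$ force $c_N\to 0$. The only difference is presentational---you argue directly via the uniform lower bound $n(c_N,N)\ge (s_A(0)+s_B(0))N$, whereas the paper argues by contradiction and selects a specific $N^*$ with $n(c_{N^*},N^*)=n^*$; your version makes the role of the hypothesis $s_A(0)+s_B(0)>0$ more transparent.
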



Next, we  show  that  the intersection points of the support functions (where the margin is 0) form the limiting points of equilibria.   Recall that by our assumption there is a finite number of such points (but see Appendix~\ref{apx:overlap}).

\begin{definition}[Pivot Points]
For a given pair of support functions $s_A,s_B$, a \emph{pivot point} is any  $c\in(0,1)$ where $s_A(c)=s_B(c)$.
\end{definition}

For technical reasons we will assume throughout the paper that there is only a finite number of intersection points where $s_A(c)=s_B(c)$, and that all derivatives  of $s_A,s_B$ are bounded in some environment of each such point. We explain the more general case in Appendix~\ref{apx:overlap}. 

\def\dinit{\delta}
\def\uld{{\ul \delta}}
\begin{theorem}\label{thm:nontrivial_eq}
    Let $I$ be an issue with a PPM $p$ having   weakly vanishing pivotality \emph{and} is $q$-tie-sensitive.  Any pivot point $c^*<q$ has a right-  and left-equilibrium with limit $c^*$. For support functions with finite intersection points, 
    the limit of any equilibrium  is either 0 or a  pivot point.
\end{theorem}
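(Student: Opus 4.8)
<br>

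The plan is to prove the two halves of Theorem~\ref{thm:nontrivial_eq} separately: first the existence of a right- and left-equilibrium at each small pivot point, then the characterization that every equilibrium limit is $0$ or a pivot point.

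\textbf{Existence of right/left equilibria at a pivot point.} Fix a pivot point $c^*<q$, so $s_A(c^*)=s_B(c^*)$ and hence $m(c^*)=0$. Recall that $c$ is an election equilibrium of $(I,N)$ iff $c=g_N(c):=p(n(c,N),m(c))$, where $g_N$ is continuous. The idea is to show $g_N(c)-c$ changes sign in a small interval around $c^*$, for every large $N$, with the sign pattern forcing one root above $c^*$ and one below. First I would observe that at $c=c^*$ we have $m(c^*)=0$, so by $q$-tie-sensitivity $g_N(c^*)=p(n(c^*,N),0)\ge q > c^*$; thus $g_N(c^*)-c^*>0$ uniformly in $N$. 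Next I need a point slightly away from $c^*$ on each side where $g_N$ drops below the diagonal. Pick $\delta>0$ small enough that on $[c^*-\delta,c^*+\delta]$ the only intersection of $s_A,s_B$ is $c^*$ (possible since there are finitely many), and that the margin $m(c)$ is bounded below by some $\mu(\delta)>0$ at the endpoints $c^*\pm\delta$ (this uses that $s_A-s_B$ has a nonzero one-sided behaviour away from $c^*$; if $s_A-s_B$ vanishes to high order the bounded-derivative assumption still gives a strictly positive margin at a fixed distance $\delta$). Because $p$ has vanishing pivotality, $p(n,\mu)\to 0$ as $n\to\infty$ for the fixed $\mu=\mu(\delta)$; and $n(c^*\pm\delta,N)=(s_A+s_B)(c^*\pm\delta)\,N\to\infty$ since $s_A(c^*\pm\delta)+s_B(c^*\pm\delta)\ge 2s_B(c^*)>0$ (note $c^*>0$ implies the support at $c^*$ is positive as $s_A(c^*)=s_B(c^*)$ and one of them is positive because... actually I should argue $s_A(c^*)=s_B(c^*)>0$: if both were $0$ then $c^*$ could only be $0$ by monotonicity, contradiction). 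Hence for all $N\ge N_0(\delta)$, $g_N(c^*+\delta)<\delta/2<c^*+\delta$, wait — I want $g_N(c^*+\delta)<c^*+\delta$, which follows once $p(\cdot)<c^*+\delta$; since $p\to 0$ and $c^*+\delta>0$ this holds. Similarly $g_N(c^*-\delta)<c^*-\delta$ provided $c^*-\delta>0$, which I arrange by shrinking $\delta$; and symmetrically $g_N(c^*+\delta)<c^*+\delta$. Combined with $g_N(c^*)>c^*$, the IVT applied to $g_N(c)-c$ on $[c^*,c^*+\delta]$ gives a root $c_N^{R}\in(c^*,c^*+\delta)$, and on $[c^*-\delta,c^*]$ a root $c_N^{L}\in(c^*-\delta,c^*)$. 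Doing this for a sequence $\delta_k\to 0$ and a diagonal argument produces sequences $\overline{c}^R$, $\overline{c}^L$ of equilibria with $c_N^R>c^*$, $c_N^L<c^*$ for all $N$ and both converging to $c^*$, i.e. a right- and a left-equilibrium with limit $c^*$.

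\textbf{Every equilibrium limit is $0$ or a pivot point.} Let $\overline{c}=(c_N)_N$ be an issue equilibrium with limit $c^*$, and suppose $c^*>0$. I want to show $s_A(c^*)=s_B(c^*)$, i.e. $m(c^*)=0$. Suppose not: $m(c^*)=m_0>0$. By continuity of $m$ (which follows from continuity of $s_A,s_B$ and positivity of $s_A+s_B$ near $c^*>0$), there is a neighbourhood of $c^*$ on which $m(c)\ge m_0/2$, so $m(c_N)\ge m_0/2$ for all large $N$. Also $n(c_N,N)=(s_A(c_N)+s_B(c_N))N\ge \tfrac12(s_A(c^*)+s_B(c^*))N\to\infty$ for large $N$ (again using $s_A(c^*)+s_B(c^*)>0$, which holds since $c^*>0$ and at least one support function is positive there). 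Then the equilibrium condition gives $c_N=p(n(c_N,N),m(c_N))\le p(n(c_N,N),m_0/2)$; since $p$ is non-increasing in $n$ and has vanishing pivotality at the fixed margin $m_0/2>0$, the right side $\to 0$, forcing $c^*=\lim c_N=0$, contradicting $c^*>0$. Hence $m(c^*)=0$; and since $c^*\in(0,1)$ this is exactly the definition of a pivot point. (The finiteness-of-intersections hypothesis is what guarantees we can also carry out the neighbourhood arguments cleanly; it is invoked implicitly in claiming $m$ is bounded below away from the isolated pivot points.)

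\textbf{Main obstacle.} The delicate point is controlling the margin $m(c)$ near a pivot point: I need that at a fixed distance $\delta$ from $c^*$ the margin is bounded below by a constant $\mu(\delta)>0$ that does not depend on $N$, so that $p(n,\mu(\delta))$ can be driven to $0$ by $n\to\infty$. This is where the ``finitely many intersections + bounded derivatives'' assumption does the work: it rules out $s_A-s_B$ oscillating back toward $0$ or being flat on an interval, so that $|s_A(c)-s_B(c)|$ is strictly positive on $0<|c-c^*|\le\delta$ and bounded below on the boundary. Everything else is a fixed-point / IVT argument plus the definitions of vanishing pivotality and tie-sensitivity, so the heart of the proof is packaging this margin lower bound correctly (and handling the corner case where a would-be interval $[c^*-\delta,c^*+\delta]$ pokes below $0$, handled by shrinking $\delta$ since $c^*>0$).
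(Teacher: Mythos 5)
Your proposal is correct and follows essentially the same route as the paper's proof: $q$-tie-sensitivity puts $p$ above the diagonal at the pivot point, weakly vanishing pivotality at a fixed positive margin (guaranteed at distance $\delta$ by the finite-intersection/bounded-derivative assumption) pushes it below the diagonal at $c^*\pm\delta$ for large $N$, and the intermediate value theorem yields the two one-sided equilibria, with the converse obtained by the same contradiction from a margin bounded away from zero. The only harmless local slips are that $s_A(c^*-\delta)+s_B(c^*-\delta)\ge 2s_B(c^*)$ can fail on the left by monotonicity (continuity near $c^*$ is what you actually need), and that $s_A(c^*)=s_B(c^*)>0$ follows from the finiteness of intersection points rather than monotonicity alone.
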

\begin{proof}
We start with the existence of the right equilibrium. The proof for the left equilibrium is symmetric. Let $c^*<q$ be some pivot point of $I$, and let $\dinit>0$. We need to show there is some $N_\dinit$ and some $c_\dinit\in(c^*,c^*+\dinit)$ s.t. $c_\dinit\in C(I,N_\dinit)$. 

Since all derivatives are bounded, there is some open interval $(c^*,c^*+t)$ where $s_A,s_B$ differ, and w.l.o.g. $s_A(c)>s_B(c)$ for any $c\in (c^*,c^*+t)$. Let $\uld:=\min\{t,\dinit,q-c^*\}$ and note that by the definition of pivot point, $\eps:=m(c^*+\uld)>0$. Also, $n(c^*+\uld,N)=(s_A(c^*+\uld)+s_B(c^*+\uld))N<N$. Thus by weakly vanishing pivotality:
$$p(n(c^*+\uld,N),m(c^*+\uld))\leq p(N,\eps)\xrightarrow[N\rightarrow \infty]{} 0,$$
so there is  $N_\dinit$ for which $p(n(c^*+\uld,N_\dinit),m(c^*+\uld))<\uld$.
From tie-sensitivity, for any $N$ (including $N_\dinit$):
$$p(n(c^*\!,\!N),m(c^*))=p(n(c^*\!,\!N),0) \geq p(N\!,0)\geq q > c^*\!+\uld.$$
Let $g(x):=p(n(c^*+x,N_\dinit),m(c^*+x))-(c^*+x)$. Then  $f$ is continuous in  $x\in[0,\uld)$ with $g(0)>0$ and $g(\uld)<0$. From intermediate value theorem there is some $x^*$ where $g(x^*)=0$ and thus $c_\dinit:=c^*+x^*$ is an equilibrium of $(I,N_\dinit)$.
 
In the other direction, assume towards a contradiction that there is a nontrivial equilibrium with limit $\hat c$ that is not a pivot point. Note that by our assumption of finite intersection points, and due to bounded derivatives, $s_A(c)-s_B(c)>\eps>0$ in some interval $[\hat c-\dinit,\hat c+\dinit]$.
Thus in any point in this interval the pivotality goes to 0 for sufficiently large $N$, and in particular is lower than $\hat c-\dinit$, which means it is not an equilibrium of $(I,\ol{N})$ for any $\ol{N}\geq N$. Note that if $q$ is tight (i.e., the PPM is not $q'$-tie-sensitive for any $q'>q$),  points above $q$ cannot be the limit of any equilibrium, as the pivotality at any $c$ is at most $q$.
\end{proof}

So we have a rather complete characterization of equilibria, or at least of their limit points, in every issue. Two natural questions are: (a) whether these equilibria are inherently stable; and (b) are these equilibria ``good'' in the sense of the Condorcet Jury Theorem.

 \subsection{Stability of  Equilibrium Points}
 \label{sec:stability}
 Intuitively, stability means that a small perturbation will not cause us to drift  from the equilibrium point, but to gravitate back to it  \cite{granovetter1978threshold,palfrey1990testing}. 
 
\begin{definition}[Stability (informal)]\label{def:stable_thresh}An equilibrium $c_N$ is \emph{stable}, if there is some $\eps>0$ such that
for any threshold profile  $c$ with $|c-c_N|<\eps$, the trend\footnote{That is, when starting from $c_N$, the best response of voters in $\varepsilon$-neighborhood of $c_N$ gets closer to $c_N$ until convergence as we increase $N$.} at $c$ is towards $c_N$.
\end{definition}
\begin{restatable}{theorem}{stableEq}
\label{thm:stable}
         Let $\ol c$ be a  right-equilibrium of issue $I$ with limit $c^*>0$. Then for sufficiently large $N$, $c_N$ is stable. 
 \end{restatable}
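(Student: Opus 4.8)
Let me think through this carefully.

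We have a right-equilibrium $\bar{c} = (c_N)_N$ with limit $c^* > 0$, which by Theorem 4.6 must be a pivot point. We want to show $c_N$ is stable for large $N$.

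The key function is $g_N(c) := p(n(c,N), m(c)) - c$. An equilibrium is a zero of $g_N$. Stability in the sense of Definition 4.8 means: in a neighborhood of $c_N$, the "best response trend" points toward $c_N$. The best response at a threshold profile $c$ is $p(n(c,N), m(c))$ — i.e., voters with cost below that value vote. So the trend at $c$ is toward $c_N$ iff $g_N(c) > 0$ for $c$ slightly below $c_N$ and $g_N(c) < 0$ for $c$ slightly above $c_N$. In other words, we need $g_N$ to cross zero from positive to negative at $c_N$.

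Near the pivot point $c^*$: since $c^* < q$ (it's a pivot point that's the limit of an equilibrium; by the last sentence of the proof of Theorem 4.6, limits above $q$ are impossible when $q$ is tight, so WLOG $c^* \le q$; if $c^* = q$ exactly, a tiny argument handles it, or we just take $c^* < q$ as the interesting case), we have $p(n(c^*,N), 0) \ge p(N, 0) \ge q > c^*$, so $g_N(c^*) > 0$ for all $N$. On the other side, for $c = c^* + \bar\delta$ with $\bar\delta$ small, $m(c) = \varepsilon > 0$, so $p(n(c,N), \varepsilon) \le p(N,\varepsilon) \to 0$, hence $g_N(c^* + \bar\delta) < 0$ for large $N$. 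So on the interval $[c^*, c^* + \bar\delta]$, $g_N$ goes from positive to negative; $c_N$ lies in this interval (for large $N$, since $c_N \to c^*$ from the right). The remaining task: show that between $c^*$ and $c^* + \bar\delta$ there are no *other* sign changes near $c_N$ that would ruin stability — i.e., that $g_N$ crosses zero downward at $c_N$ and that in a fixed $\varepsilon$-neighborhood of $c_N$ it has the right sign pattern.

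**Plan.** (1) Set up $g_N$ and restate stability as: $\exists \varepsilon > 0$ with $g_N(c) > 0$ on $(c_N - \varepsilon, c_N)$ and $g_N(c) < 0$ on $(c_N, c_N + \varepsilon)$. (2) Show $g_N(c^*) > 0$ via tie-sensitivity (uniformly in $N$), and fix a small $\bar\delta \le \min\{t, q - c^*\}$ so that on $(c^*, c^* + \bar\delta]$ the support functions are separated, say $s_A > s_B$, giving $m(c) \ge \varepsilon_0 > 0$ for $c \in [c^* + \bar\delta/2, c^* + \bar\delta]$ — more carefully, $m$ is increasing away from $c^*$ in this interval, so $m(c) \ge m(c^* + x)$ for $c \ge c^* + x$. (3) Use weakly-vanishing pivotality: for $c \in [c^*+\eta, c^*+\bar\delta]$, $p(n(c,N), m(c)) \le p(N, m(c^*+\eta)) \to 0$, so for $N$ large enough, $g_N(c) < 0$ on $[c^* + \eta, c^* + \bar\delta]$ for every fixed $\eta > 0$; combined with $g_N(c^*) > 0$ and continuity, every equilibrium in $(c^*, c^* + \bar\delta]$ must lie in $(c^*, c^* + \eta)$ — so $c_N \to c^*$ and in fact $c_N < c^* + \eta$ eventually. (4) For the *left* side of $c_N$: here is the main obstacle — between $c^*$ and $c_N$ the margin $m$ is positive but small (order $c_N - c^*$), and $n$ is large, so controlling the sign of $g_N$ requires the rate at which $p$ decays in its arguments near the pivot point, not just the qualitative vanishing. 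I expect one needs the monotonicity of $p$ plus the fact that $c_N$ is the *smallest* equilibrium above $c^*$ in the relevant window (or argue: pick $\varepsilon$ so small that $(c_N - \varepsilon, c_N + \varepsilon) \subseteq (c^*, c^* + \bar\delta)$; then on $(c_N - \varepsilon, c_N)$, if $g_N$ were $\le 0$ somewhere, since $g_N(c^*) > 0$ there'd be another equilibrium in $(c^*, c_N - \varepsilon)$, and one then argues by choosing $c_N$ to be — or invoking that $\bar c$ is — the equilibrium nearest $c^*$, or uses that for large $N$, $g_N$ restricted to a shrinking neighborhood of $c^*$ is strictly decreasing because $p(n(c,N), m(c))$ decreases strictly in $m$ which is strictly increasing in $c$ on $(c^*, c^*+t)$, while $-c$ is strictly decreasing — hence $g_N$ is strictly decreasing there, giving a unique downward crossing).

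**Main obstacle.** The crux is step (4): showing $g_N$ is *monotonically* decreasing (not merely eventually negative) on a fixed one-sided neighborhood to the right of $c^*$ that contains $c_N$. The clean way: on $(c^*, c^* + t)$ we have $m(c)$ strictly increasing (since $s_A - s_B$ is strictly increasing there, having bounded nonzero derivative), so $p(n(c,N), m(c))$ is non-increasing in $c$ (as $p$ is non-increasing in both arguments and $n$ is non-decreasing in $c$), and strictly decreasing wherever $p < 1$; since $c^* > 0$ forces $p(n(c,N),m(c)) = c < 1$ near the equilibrium, we are in the strictly-decreasing regime. Then $g_N(c) = [\text{non-increasing}] - c$ is strictly decreasing on that interval, so it has exactly one zero $c_N$ there, with $g_N > 0$ left of it and $g_N < 0$ right of it — which is precisely stability. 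One must just be careful that the "strictly decreasing when $p < 1$" clause in the PPM definition is invoked correctly and that the neighborhood $\varepsilon$ can be taken uniform in $N$ for large $N$; both follow from $c_N \to c^* $ and the fixed separation of the support functions on $(c^*, c^* + t)$.
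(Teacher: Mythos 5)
Your core argument is essentially the paper's own: the paper's proof rests on a lemma showing that $c\mapsto p(n(c,N),m(c))$ is non-increasing on a fixed interval $[c^*,c^*+z]$ to the right of the pivot point (because $n$ is non-decreasing and $m$ is strictly increasing there, and $p$ is non-increasing in both arguments), so that $g_N(c)=p(n(c,N),m(c))-c$ is strictly decreasing and crosses zero exactly once, at $c_N$ --- which is precisely your step (4), and your observation that subtracting $c$ already gives strict monotonicity (so the ``strictly decreasing below its maximum'' clause is not even essential for the sign pattern) is correct. The one piece you do not address is that the paper's formal definition of stability (given only in the appendix) quantifies over \emph{arbitrary} profiles in an $\eps$-neighborhood, not just threshold profiles; the paper handles this with an extra routine step showing that the best response to any such profile is itself a threshold profile whose threshold, by continuity and bounded derivatives of $p$, still lands inside the monotone window, after which your argument applies. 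Since you were working from the informal main-text definition (which mentions only threshold profiles), this omission is understandable, but a complete proof needs that additional case.
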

We provide here the proof outline. The full proof with the exact definition of stability  is given  in Appendix~\ref{apx:stable}. Note that close to the equilibrium point, both $n(c)$ and $m(c)$ are increasing in $c$, 
and thus $p(n(c),m(c))$ is \emph{decreasing} in $c$. So any perturbation that ends up with fewer active voters will mean higher pivotality, and some voters will join back (and vice versa). 
 Since for sufficiently large $N$ the left- and right-equilibria are the \emph{only} equilibria, and the right ones are stable, the left ones must be unstable. 
 
 

\if 0
 We provide the intuition of the proof.  The detailed proof is given in Appendix. Notice in equilibrium  both types have the same cost  threshold and hence the equilibrium points must be on the line  $c_A= c_B$ as shown in Figure~\ref{fig:stabilityOne}. Suppose first that  $c_A > c_B$. In this case,  the   supporters of $A$ shown  in the shaded area (in Figure \ref{fig:stabilityOne}) to the right of $c^+$ are less pivotal and hence are   incentivized to abstain.  Similarly,  when $c_A <  c_B$,  the $A$ voters in the shaded region on the left side of $c^+$ are more pivotal and have an incentive to participate. The $B$ supporters on the other hand will participate in the first case and abstain  in the second case.  Thus, the population of participating voters adjusts itself (i.e. $c_A \downarrow c^+$ and $s_B \uparrow c^+$ when $c_A > c_B$ and $c_A \uparrow c^+$ and $s_B \downarrow c^+$ when $c_A<c_B$) such that the equilibrium $c^+$ is restored.

The left-side equilibrium, on the other, hand may not always be stable. Consider $c'' < c^-$ be an equilibrium with cost thresholds $c_A$ and $c_B$ with $c_A <  c_B$. Then,  the supporters of $B$ are  incentivized to participate as, with additional participation from supporters of $B$, the margin $m$ increases, increasing  the probability of win for $B$. The supporters of $A$, on the other hand  are also incentivized to participate. An additional participation from $A$ supporters would mean that $m$ decreases. Hence stability of $c^-$ depends on the relative increase in the participation from each type of agent. 


The emergence of an unstable equilibrium between two stable ones (one of which is trivial) also occurs e.g. in markets with positive externalities~\cite{katz1985network}. In our case externalities behave non-monotonically (positive under $\hat c$ and negative above)  but the results are similar.
\fi

\newsec{Jury Theorems for Pivot Points}\label{sec:CJT}
Given an election instance $E=(I,N)$ and a threshold profile $c\in [0,1]$, a random variable  counting the number of active votes for $A$ (and likewise for $B$)
 $V_A := \sum_{i\in N}\mathbb{1}[c_i\leq c \wedge T_i=A]$.
 We further denote the \emph{Winning Probability} of $A$ in profile $c$ of election $(I,N)$ as
$\mathbb{WP}_A(I,N,c):=\Pr(V_A > V_B| I,N,c)$. Finally, for any issue $I$ with issue equilibrium $\ol c$, we define
$$\mathbb{WP}_A(I,\ol c):=\lim_{N\rightarrow \infty}\mathbb{WP}_A(I,N,c_N).$$

 We emphasize that we, as `outsiders' to the election, care about the \emph{actual} probability of the event, which is not affected by the perceived pivotality model $p$, once $c$ is determined.

Our main question regards the non-trivial equilibria, whose limits are the pivot points. 
For an equilibrium $\ol c$ with limit $c^*$, we define the convergence rate as $cr(\ol c):=\lim_{N\rightarrow \infty}\sqrt N|c_N-c^*|$.

\begin{definition}
    We say that $\ol c$ is converging \emph{fast} if $cr(\ol c)=0$; and \emph{slow} if $cr(\ol c)=+\infty$.
        Otherwise, we say that $\ol c$ has \emph{moderate convergence rate}
        $r(c^*) \in (0, \infty ) $.
\end{definition}

\def\vph{\vphantom{2^{2^2}_{2_2}}}

\begin{table*}[t]
\setlength{\tabcolsep}{4pt}  
\renewcommand{\arraystretch}{1.1}  
\centering
\begin{tabular}{|p{1.8 cm}|p{1.5cm}|p{1.6cm}|c|c|p{7cm}|}
\hline
\multicolumn{2}{|c|}{\textbf{PPM Type}} & \textbf{Equilibria} & \textbf{Conv. Rate} & \textbf{Jury Thm} & \textbf{Example PPMs} \\
\hline\hline
\multirow{4}{*}{\parbox{2cm}{$q$-tie-sensitive}} 
& no v.p. & at $q$ & - & - & \shortstack{Network (fixed $\kappa$);  Altruist ($f(n)=\exp(\omega(n))$)} \\
\cline{2-6}
& \multirow{3}{*}{\shortstack{weak\\vanishing\\pivotality}} 
& \multirow{3}{*}{\shortstack{at all \\ pivot points \\ below $q$}} 
& slow & Yes & \shortstack{Poly. ($2\beta>\alpha$); Altruist ($f(n)=\omega(\sqrt{n})$)} \\
\cline{4-6}
& & & moderate & Weak & \shortstack{Poly. ($2\beta=\alpha$);  Altruist ($f(n)=\Theta(\sqrt{n})$)} \\
\cline{4-6}
& & & fast & No & Poly. ($2\beta<\alpha$) \\
\cline{1-2} \cline{3-6}
- & strong v.p. & at $0$ & - & - & \shortstack{CoV; Binomial; Poisson;  Altruist ($f(n)=o(\sqrt{n})$)} \\
\hline
\end{tabular}
\caption{(From L to R) First two columns show the three main types of PPM, as per Def.~\ref{def:vp} and \ref{def:tie_PPM}. The third column summarizes the main results of Sec.~\ref{sec:charEquilibria}. Next two columns shows the finer partition of weakly v.p. models, and the winning probability of the leader following the results in Sec.~\ref{sec:CJT}. The rightmost column shows the classification of the models in Sections~\ref{sec:PPM} and \ref{sec:PPM_class}. 
}
\label{tab:results}
\end{table*}

\paragraph{Local Jury theorems}
We say that $I$ admits a \emph{jury theorem} at $c^*\in [0,1]$, if there is  an issue equilibrium $\ol c$ with limit $c^*$ s.t. $\mathbb{WP}_A(I,\ol c)=1$. Similarly, $I$ admits a \emph{non-jury theorem} at $c^*$ if $\mathbb{WP}_A(I,\ol c)<1$, meaning that regardless of the size of the population, there is some constant probability that the better candidate $A$ will lose. $I$ admits a \emph{strong non-jury theorem} if $\mathbb{WP}_A(I,\ol c)=\frac12$. 

 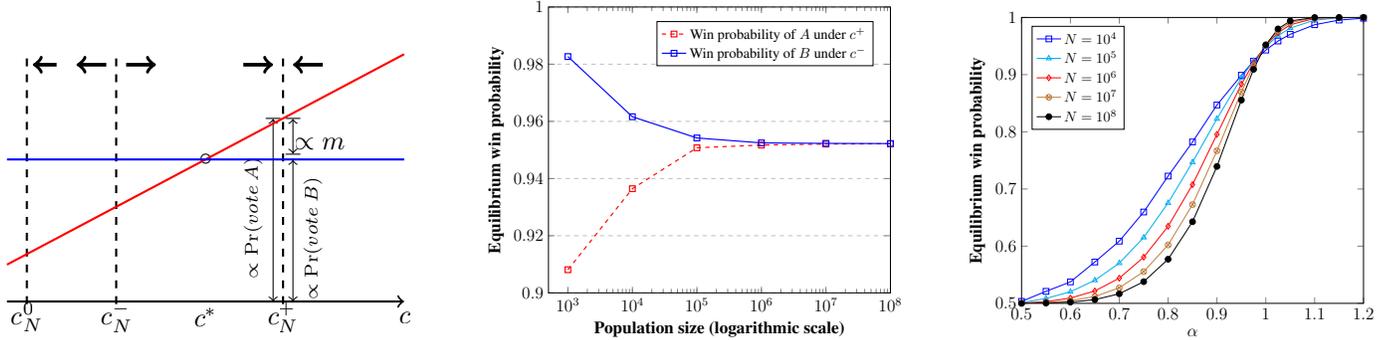
\begin{figure*}[ht]
\centering
\begin{subfigure}{0.28\textwidth}
     \centering 
 \begin{tikzpicture}[xscale=0.66, yscale=0.7, blend group=soft light]
\def\a{40}
\def\bx{-20}
\def\by{-13}
\draw[blue,thick] (0.5*\a+\bx, 0.4*\a+\by) -- (0.7*\a+\bx, 0.4*\a+\by);
\draw[red,thick] (0.5*\a+\bx, 0.35*\a+\by) -- (0.7*\a+\bx, 0.45*\a+\by);
\draw[thick,->] (0,0.3) -- (8,0.3);
\draw[dashed,thick] (0.639*\a+\bx,0.25) -- (0.639*\a+\bx,5);
\draw[dashed,thick] (0.555*\a+\bx,0.25) -- (0.555*\a+\bx,5);
\draw[dashed,thick] (0.51*\a+\bx,0.25) -- (0.51*\a+\bx,5);
\draw[double,->] (0.55*\a+\bx,4.8) -- (0.535*\a+\bx,4.8);
\node at (0.55*\a+\bx + 0.6,4.8 +0.4 ) {$ $};
\node at (0.55*\a+\bx - 0.31,4.8 +0.4 ) {$ $};
\draw[double,->] (0.56*\a+\bx,4.8) -- (0.575*\a+\bx,4.8);
\draw[double,->] (0.619*\a+\bx,4.8) -- (0.634*\a+\bx,4.8);
\draw[double,->] (0.659*\a+\bx,4.8) -- (0.644*\a+\bx,4.8);
\draw[double,->] (0.525*\a+\bx,4.8) -- (0.513*\a+\bx,4.8);

\node at (0.6*\a+\bx, 0.4*\a+\by) {$\circ$};

\node at (0.639*\a+\bx,-0) {$c^+_N$};
\node at (8,0) {$c$};
\node at (0.555*\a+\bx,-0) {$c^-_N$};
\node at (0.51*\a+\bx,-0) {$c^0_N$};

\node at (0.6*\a+\bx,-0) {$c^*$};
\draw[|<->|] (0.639*\a+\bx+0.2, 0.4*\a+\by+0.09) -- (0.639*\a+\bx+0.2, 0.414*\a+\by + 0.23 );
\draw[<->|] (0.639*\a+\bx+0.2, 0.3) -- (0.639*\a+\bx+0.2, 0.4*\a+\by + 0.01 );
\draw[<->|] (0.639*\a+\bx-0.2, 0.3) -- (0.639*\a+\bx-0.2, 0.414*\a+\by + 0.23 );
\begin{scriptsize}
\node[rotate=90] at (0.639*\a+\bx+0.6,1.5) {$ \propto \Pr({vote}~B)$};
\node[rotate=90] at (0.639*\a+\bx-0.6,1.9) {$ \propto \Pr({vote}~A)$};
\end{scriptsize}
\node at (0.639*\a+\bx+0.75, 0.4*\a+\by+0.3) {$\propto m$};




\end{tikzpicture}
 
\label{fig:linearSupport}
\end{subfigure}
\hfill 
\begin{subfigure}{0.28\textwidth}

    \centering
 \begin{tikzpicture}[scale=0.54] 
\begin{axis}[
    xmode=log,
    log ticks with fixed point,
    xlabel={\textbf{Population size (logarithmic scale)}},
    ylabel={\textbf{Equilibrium  win probability}},
    xmin=500, xmax=10^8,
    ymin=0.9, ymax=1,
    label style={font=\Large},
    tick label style={font=\Large}, 
    xtick={10^3, 10^4, 10^5, 10^6,10^7,10^8},
    xticklabels = {$10^3$,$10^4$, $10^5$, $10^{6}$,$10^7$,$10^8$},
    ytick={0.88,0.9,0.92,0.94,0.96,0.98,1},
    legend pos=north west,
    ymajorgrids=true,
    grid style=dashed,
    legend style={
        at={(axis description cs:0.95,0.95)},
        anchor=north east
    }
]
\addplot[thick, 
    color=red,
    mark=square, 
    mark options={solid},  
    dashed
]
    coordinates {
        (10^3, 0.90810929)(10^4, 0.93644931)(10^5, 0.95076779)(10^6, 0.95171826)(10^7, 0.95206293)(10^8, 0.95216058)
    };
\addplot[thick,
    color=blue,
    mark options={solid},
    mark=square
]
    coordinates {
        (10^3, 0.98266767)(10^4, 0.96161482)(10^5, 0.95419222)(10^6, 0.95247038)(10^7, 0.95230315)(10^8, 0.95223574)
    };
\legend{Win probability of $A$ under $c^{+}$, Win probability of $B$ under $c^-$}
\end{axis}
\end{tikzpicture}

    \label{fig:win-probability}
\end{subfigure}
\hfill 
\begin{subfigure}{0.28\textwidth}

 \begin{tikzpicture}[scale=0.54]
    \begin{scope}[
    ]
        \begin{axis}[
            ylabel={\textbf{Equilibrium win probability}},
            xlabel={$\alpha$},
            xmin=0.5, xmax=1.2,
             ymin=0.5, ymax=1,
             label style={font=\Large},
                    tick label style={font=\Large}, 
            xtick={0.5,  0.6,  0.7, 0.8,  0.9,   1,  1.1,  1.2}, legend pos=north west
        ]

\addplot[ 
    color=blue,
    mark=square
    ]
    coordinates {  (0.5,0.5034463993584801)(0.55, 0.5209968252423642)(0.6, 0.5374775010447751) (0.65, 0.572291459126275)( 0.7, 0.6087449472248506 ) (0.75, 0.6597175923323482)(0.8,0.7225977102031234) (0.85, 0.7822348675774462)(0.9,0.8469174102819855) (0.95,0.8989513770811633) (0.975,0.9235848422584841)(1,0.9429125196891085) (1.025, 0.9587758575127688) (1.05,0.9705707189361287)(1.1,0.9873102229031069) (1.15, 0.9954117338584801 ) (1.2,0.9986352165177431)
    };

    \addplot[
    color=cyan,
    mark=triangle
    ]
    coordinates {  (0.5, 0.5010963154212058) (0.55,0.5085490588469159) (0.6,0.5202529716971387)  (0.65, 0.5403472945607244) (0.7, 0.5702710110874876) (0.75, 0.6151564360550874)  (0.8, 0.6756813110811366) (0.85, 0.7469505587253285) (0.9,  0.8227145240083575)  (0.95, 0.8941930364247158) (0.975, 0.9241895067950894)  (1, 0.9492574754059184)  (1.025, 0.9681185863906605) (1.05, 0.9815032301789244)  (1.1, 0.9953040321681711)(1.15, 0.999242990990767)(1.2, 0.9999299920173971)
    };
    \addplot[
    color=red,
    mark=diamond
    ]
    coordinates { (0.5,0.5003468902059904) (0.55, 0.5029925155005879) (0.6, 0.509174210623359) (0.65, 0.521629510461562) (0.7, 0.5439623696839793)(0.75, 0.5807451241705572)  (0.8, 0.6348391717851638) ( 0.85, 0.7076024419027669) (0.9,  0.7954745493948241) (0.95, 0.8832932944912986) (0.975, 0.9207629392877437) (1,  0.9511620674243744)(1.025, 0.9732500282846483) (1.05, 0.9871896441097167)(1.1,  0.9982903515939576) (1.15,  0.9999105144690333) (1.2, 0.9999986813739195)
    
    };
    \addplot[ 
    color=brown,
    mark=otimes
    ]
    coordinates {  (0.5, 0.5001097028971535) (0.55, 0.5012324305739424) (0.6, 0.5045321111698506)( 0.65,  0.5119818568953474)( 0.7, 0.5271480860811699)( 0.75,  0.5554127150201131)( 0.8,  0.6021369500337498)(0.85,  0.6728180753803086)(0.9, 0.7668542167736896)(0.95, 0.8696779614724737)(0.975,  0.9154547179317182)(1, 0.9518786838592211) (1.025,  0.9769308242369477)(1.05,  0.9910949188400563)(1.1, 0.9994568515423824)(1.15, 0.9999940908764602)(1.2, 0.9999999949959482)
    };
    \addplot[ 
    color=black,
    mark=*
    ]
    coordinates {  (0.5, 0.5000346912422677)(0.55, 0.5005458562520027)(0.6,  0.5020799539199186)(0.65, 0.506433665179634)(0.7, 0.5167118739393539)(0.75, 0.5378743422746576)(0.8,  0.5771469031991653)(0.85, 0.6427745084476345)(0.9,  0.7394354680494846)(0.95, 0.8554562683942735)(0.975, 0.9093048027818923)(1, 0.9521161311947801)(1.025, 0.979997219313373)(1.05, 0.9939073988915024)(1.1,  0.9998608347908574) (1.15, 0.999999832098464)(1.2, 0.9999999999983229)
    };



            \addlegendentry{$N = 10^4$}
\addlegendentry{$N = 10^5$}
\addlegendentry{$N = 10^6$}
\addlegendentry{$N = 10^7$}
\addlegendentry{$N = 10^8$}
        \end{axis}
    \end{scope}
\end{tikzpicture}

     \label{fig:fourth}
\end{subfigure}
\caption{(L)  demonstrates election instance from Section~\ref{sec:simulation}. For  large value $N$, the pivot point $c^*$, two non-trivial equilibria $c_N^+$ and $c_N^-$, and the trivial equilibrium $c^0$ are shown. 
The bold arrows  indicate that $c^+_N, c^0_N$ are stable  equilibria whereas $c^-_N$ is not stable. (C) Win probability for different values of $N$ under respective induced equilibria. (R) Win probability of  $A$ for $\beta = 0.5$ and different values of $\alpha$ in polynomial PPM model  for different values of  $N$. The trend reversal can be observed at   $\alpha =1$.  \label{fig:sim} 
}
\end{figure*}

\begin{restatable}[Characterization of Stable Jury Equilibria]{theorem}{JuryCharacterization}\label{thm:JT}
    Let $I$ be an issue and let $\ol c$ be a nontrivial right equilibrium of $I$ with limit $c^*$. 
    There are three cases, where $\ol c$ admits a 
    
    \begin{enumerate}
        \item  \emph{Jury theorem} if $\ol c$ converges slowly;
        \item \emph{weak non-Jury} theorem if $\ol c$ converges moderately; and
         \item \emph{strong non-Jury theorem} if $\ol c$ converges fast.
    \end{enumerate}
\end{restatable}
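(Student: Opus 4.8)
The plan is to reduce the statement to a single central-limit estimate for the signed vote difference $S_N:=V_A-V_B$ evaluated at the equilibrium threshold $c_N$, and then read the three cases off the limiting Gaussian $z$-score. Fix the nontrivial right equilibrium $\ol c=(c_N)_N$ with limit $c^*$; by Theorem~\ref{thm:nontrivial_eq}, $c^*$ is a pivot point, and $\delta_N:=c_N-c^*\downarrow0$ with $cr:=cr(\ol c)=\lim_N\sqrt N\,\delta_N\in[0,+\infty]$. As in the proof of Theorem~\ref{thm:nontrivial_eq} I may assume without loss of generality that $s_A(c)>s_B(c)$ on a right-neighborhood of $c^*$, so that $A$ is the local leader at $\ol c$ --- the candidate whose winning probability the theorem tracks (cf.\ Fig.~\ref{fig:sim}). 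Since $s_A,s_B$ are nonnegative, non-decreasing and have only finitely many intersections, a pivot point $c^*\in(0,1)$ must have $s^*:=s_A(c^*)=s_B(c^*)>0$ (otherwise $s_A\equiv s_B\equiv0$ on $[0,c^*]$, an infinite intersection set); and using the bounded-derivatives hypothesis I take $c^*$ to be a \emph{simple} zero of $s_A-s_B$, writing $\gamma:=s_A'(c^*)-s_B'(c^*)>0$. Note that the computation below never invokes the equilibrium fixed-point equation, only the rate $cr$ at which $c_N\downarrow c^*$; Theorem~\ref{thm:nontrivial_eq} and the subsequent PPM classification are what guarantee that equilibria realizing each of the three rates exist.

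First I would pin down the exact law of $S_N$: conditioned on the threshold $c_N$, each of the $N$ voters is independently an active $A$-voter with probability $a_N:=s_A(c_N)$, an active $B$-voter with probability $b_N:=s_B(c_N)$, or an abstainer, so $S_N=\sum_{i=1}^N X_i$ with $X_i\in\{+1,-1,0\}$ i.i.d., $\Pr(X_i=1)=a_N$, $\Pr(X_i=-1)=b_N$. Hence $\E S_N=N(a_N-b_N)$ and $\mathrm{Var}(S_N)=N\big[(a_N+b_N)-(a_N-b_N)^2\big]$. By continuity and $c_N\to c^*$, $a_N+b_N\to2s^*$ and $a_N-b_N\to0$, so $\mathrm{Var}(S_N)=(2s^*+o(1))\,N$, which stays bounded away from $0$; differentiability at the simple zero $c^*$ gives $a_N-b_N=\gamma\,\delta_N\,(1+o(1))$.

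Next I would apply the Berry--Esseen theorem to the triangular array $(X_i)_{i\le N}$ --- the summands are bounded by $1$, their variance converges to $2s^*>0$, and their third absolute central moment is $O(1)$ --- to get $\sup_x\big|\Pr\!\big(\tfrac{S_N-\E S_N}{\sqrt{\mathrm{Var}(S_N)}}\le x\big)-\Phi(x)\big|=O(N^{-1/2})\to0$, uniformly in $x$. Since $S_N$ is integer-valued (so the boundary correction is $O(N^{-1/2})$), $\mathbb{WP}_A(I,N,c_N)=\Pr(S_N>0)=\Phi(z_N)+o(1)$, where
\[
z_N:=\frac{\E S_N}{\sqrt{\mathrm{Var}(S_N)}}=\frac{\sqrt N\,(a_N-b_N)}{\sqrt{2s^*+o(1)}}=\frac{\gamma}{\sqrt{2s^*}}\,\sqrt N\,\delta_N\,(1+o(1)).
\]
Therefore $z_N\to\frac{\gamma}{\sqrt{2s^*}}\,cr$ and $\mathbb{WP}_A(I,\ol c)=\lim_N\mathbb{WP}_A(I,N,c_N)=\Phi\big(\tfrac{\gamma}{\sqrt{2s^*}}\,cr\big)$, with the convention $\Phi(+\infty):=1$.

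The three cases are then immediate. If $\ol c$ converges slowly, $cr=+\infty$ and $\mathbb{WP}_A(I,\ol c)=\Phi(+\infty)=1$: a Jury theorem. If it converges moderately, $cr=r\in(0,\infty)$ and $\mathbb{WP}_A(I,\ol c)=\Phi(\gamma r/\sqrt{2s^*})$; since $\gamma,r,s^*$ are finite and strictly positive, the argument lies in $(0,\infty)$, hence $\mathbb{WP}_A(I,\ol c)\in(\tfrac12,1)$: a weak non-Jury theorem. If it converges fast, $cr=0$ and $\mathbb{WP}_A(I,\ol c)=\Phi(0)=\tfrac12$: a strong non-Jury theorem. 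I expect the main technical obstacle to be the \emph{uniform} central limit step: because the law of $X_i$ drifts with $N$ through $c_N$, a plain CLT does not suffice and one must carry the Berry--Esseen bound uniformly in $N$ --- which works precisely because $\mathrm{Var}(S_N)/N$ stays bounded away from $0$, but together with the bookkeeping of the $o(1)$ and $(1+o(1))$ factors across all three regimes this is the delicate part. A secondary point worth stating explicitly is the use of $c^*$ being a \emph{simple} zero of $s_A-s_B$: for a zero of order $k>1$ the $z$-score scales like $\sqrt N\,\delta_N^{\,k}$ rather than $\sqrt N\,\delta_N$, so the trichotomy would have to be recast in those terms; $k=1$ is the generic case, and the one intended here.
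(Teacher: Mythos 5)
Your proposal is correct and follows essentially the same route as the paper: approximate $V_A-V_B$ by a Gaussian, compute the $z$-score $\mu/\sigma \cong \sqrt{N}\sqrt{s^*}\,m(c_N)$ (your $2s^*$ equals the paper's $s^*=s_A(c^*)+s_B(c^*)$, and your $\gamma\,\delta_N$ linearization is the paper's Lemma on $m(c_N)=|c_N-c^*|(m^*+o(1))$), and read off the trichotomy from the limit of $\sqrt{N}\,\delta_N$. The only substantive difference is that you make the normal-approximation step rigorous via Berry--Esseen for the triangular array, where the paper uses an informal multinomial-to-Gaussian approximation --- a worthwhile tightening, but not a different argument.
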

\begin{proof}[Proof Sketch]
Let $\mu$ and $\sigma$ denote the mean and standard deviation of the random variable $V_A-V_B$ representing the difference between the votes received by two candidates. We approximate this random variable by a Gaussian  distribution. The ratio $\frac{\mu}{\sigma}$ is critical for determining whether the sequence $\overline{c}$ satisfies a specific variant of the Jury theorem, as the winning probability is approaching $\Phi\left(\frac{\mu}{\sigma}\right)$ (here $\Phi(\cdot)$ denotes the CDF of standard normal distribution).

We prove  that the margin $m(c_N)$ is essentially proportional to $|c_N-c^*|$.
In particular, when the sequence $|c_N-c^*|$ converges at a slow rate, so does the margin, meaning that $m(c_N) = \omega(1/\sqrt{N})$, it can be shown that $\frac{\mu}{\sigma}$ tends to infinity with $N$, and consequently $\Phi\left(\frac{\mu}{\sigma}\right)$ tends to $1$. This result implies that $\overline{c}$ adheres to the Jury theorem since, in this case, the ratio $\frac{\mu}{\sigma}$ grows unbounded as $N$ increases.

 On the other hand, under conditions of fast  convergence, where $m(c_N) = o(1/\sqrt{N})$, the ratio $\frac{\mu}{\sigma}$ approaches zero as $N$ becomes large. Consequently, $\Phi\left(\frac{\mu}{\sigma}\right) \rightarrow 0$, indicating weak non-Jury theorem. This result reflects that the distribution of the voting difference becomes more centered around zero with a diminishing spread, leading to an increasingly uncertain outcome.

For the intermediate case of moderate convergence, where $m(c_N) = \Theta(1/\sqrt{N})$, the ratio $\frac{\mu}{\sigma}$ converges to a finite positive limit as $N$ increases. This behavior implies that $\Phi\left(\frac{\mu}{\sigma}\right)$ remains bounded strictly between 0 and 1. Consequently, $\overline{c}$ does not fully satisfy the Jury theorem but neither does it completely diverge from its principles.  A detailed proof is given in Appendix \ref{apx:proofs}.  
\end{proof}

 \newsec{Classification of PPMs}\label{sec:PPM_class}

 Recall the `rational' PPMs with strong vanishing pivotality we considered in Sec.~\ref{sec:PPM}. As our positive results apply for PPMs that are tie-sensitive, it is important to at least provide some examples of such  models, that can also be justified in practice. 
 
  We suggest simple models demonstrating how tie-sensitivity may emerge from at least three reasons: limited communication, altruism, and heuristics. \rmr{A fourth reason is a belief that votes are correlated. }

\paragraph{Limited Communication}
Several authors in the literature considered models in which voters are embedded in an implicit or explicit network, where they only `see' a limited number of $K$ neighbors~\cite{OSBORNE2003434,michelini2022group}, based on which they can assess their pivotality. Generally, $k$ out of $K$ neighbors will be active in expectation, and $k$ can be a function $k=\kappa(n)$. This  yields the following model:

\begin{example}[Network PPM]
    $p^{Network(\kappa)}(n,m):=p^{Bin}(\kappa(n),m)$.
\end{example}
It is not hard to see that if e.g. $k=\kappa(n)$ is a constant, then the model is $q_k$-tie-sensitive for some $q_k=\Theta(\frac{1}{\sqrt{k}})$ and does not have vanishing pivotality at all.  Thus equiliria are not at pivot points.

Otherwise (i.e. $k$ is strictly increasing with $n$), it has strong vanishing pivotality and thus only the trivial equilibrium.

\paragraph{Altruist voters} Another possibility is that voters correctly estimate their pivotality (say, using the Binomial or Poisson model above), but that their \emph{value} $\texttt{V}_i$  scales with the size of the population as $\texttt{V}_i \cdot f(N)$. That is, voters consider large elections as `more important'. We note that this argument is sometimes used as a possible explanation for the paradox of voting~\cite{downs57}. 
Note that
$$ \frac{\texttt{G}_i-\texttt{D}_i}{\texttt{V}_i \cdot f(N)} < p \iff  c_i = \frac{\texttt{G}_i-\texttt{D}_i}{\texttt{V}_i}< p \cdot f(N). $$

We therefore get another class of PPMs (considering that $n$ and $N$ scale roughly at the same rate):
\begin{example}
    $p^{alt(q,f)}(n,m):=\min\{q,  p^{Bin}(n,m)\cdot f(n)\}$.
\end{example}

It is not hard to see that the Altruist PPM is weakly vanishing for any sub-exponential function $f$, and that it is $q$-tie-sensitive whenever $f=\Omega(\sqrt n)$.  In fact, we can classify all the regimes of altruist PPMs:




\rmr{the exact boundary between the two first conditions is more nuanced. the theorem states c is nontrivial but condition 4 says trivial}
\begin{restatable}{proposition}{PropAltruistic}
    Let $I$ be an issue with an Altruist PPM with function $f$, and let $\ol c$ be a non-trivial equilibrium. Then: 
    \begin{enumerate} 
        \item if $f=e^{\omega(n)}$ then $\ol c$ is a fixed equilibrium at $q$; else
        \item if $f=\omega(\sqrt n)$ then $\ol c$ converges slowly; else
        \item if $f=\Theta(\sqrt n)$ then $\ol c$ converges moderately; else
        \item $\ol c$ is trivial.
    \end{enumerate}
\end{restatable}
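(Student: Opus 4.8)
The plan is to analyze the Altruist PPM $p^{alt(q,f)}(n,m)=\min\{q, p^{Bin}(n,m)\cdot f(n)\}$ by leveraging Theorem~\ref{thm:JT} (characterization via convergence rate) together with the explicit Stirling-type asymptotics for $p^{Bin}$ already noted in the paper, namely $p^{Bin}(n,m)\approx\sqrt{\tfrac{2}{\pi n}}\,((1+m)(1-m))^{n/2}$. The whole proof reduces to computing, for a nontrivial equilibrium sequence $\ol c$ with limit $c^*$ (a pivot point), the rate at which $m(c_N)\to 0$, since by Theorem~\ref{thm:JT} (and the proof sketch there, where $m(c_N)$ is shown to be proportional to $|c_N-c^*|$) slow/moderate/fast convergence of $m(c_N)$ relative to $1/\sqrt N$ is exactly what distinguishes the three regimes. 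So I first want to translate the equilibrium condition $c_N=p^{alt}(n(c_N,N),m(c_N))$ into an asymptotic equation for $m_N:=m(c_N)$.

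First I would treat the easy boundary cases. If $f=e^{\omega(n)}$, then even for a margin $m$ bounded away from $0$, $p^{Bin}(n,m)f(n)=\sqrt{2/(\pi n)}\,((1-m^2)^{n/2})e^{\omega(n)}\to\infty$ (the exponential growth of $f$ dominates the exponential decay $(1-m^2)^{n/2}=e^{-\Theta(m^2 n)}$ as long as $m$ stays bounded — and near a pivot point $m\to 0$, making the decay even milder), so the $\min$ is achieved at $q$ for all large $n$; hence $c_N=q$ eventually and $\ol c$ is the fixed equilibrium at $q$. At the other end, if $f=o(\sqrt n)$, then already $p^{Bin}(n,0)f(n)=\Theta(f(n)/\sqrt n)\to 0$, so the PPM has strong vanishing pivotality, and by Proposition~\ref{prop:strong_trivial} the only equilibrium is trivial — giving case~(4). (One should also check the borderline $f=\Theta(\sqrt n)$ but with a constant below the pivot points; there it is $q$-tie-sensitive with some $q$, and equilibria sit at pivot points below $q$, which is consistent with case~(3).)

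The substantive part is cases (2) and (3): $f=\omega(\sqrt n)$ but sub-exponential. Here the PPM is weakly vanishing and $q$-tie-sensitive, so by Theorem~\ref{thm:nontrivial_eq} the nontrivial equilibria sit just off each pivot point $c^*<q$, and near $c^*$ we are in the interior branch $p^{alt}=p^{Bin}(n,m)f(n)<q$. Writing the equilibrium equation $c_N=p^{Bin}(n(c_N,N),m_N)f(n(c_N,N))$, using $n(c_N,N)=\Theta(N)$ (since $s_A(c^*)+s_B(c^*)>0$ at an interior pivot point) and $c_N\to c^*>0$ a positive constant, the left side is $\Theta(1)$, so I get $\sqrt{2/(\pi n)}\,e^{-\Theta(m_N^2 n)}f(n)=\Theta(1)$, i.e. $e^{-\Theta(m_N^2 n)}=\Theta(\sqrt n/f(n))$, hence $m_N^2 n=\Theta\!\big(\log(f(n)/\sqrt n)\big)$ and therefore $m_N=\Theta\!\Big(\sqrt{\tfrac{\log(f(n)/\sqrt n)}{n}}\,\Big)$. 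Since $n=\Theta(N)$ and $m_N=\Theta(|c_N-c^*|)$, the convergence rate $cr(\ol c)=\lim\sqrt N|c_N-c^*|$ behaves like $\sqrt{\log(f(N)/\sqrt N)}$: when $f=\omega(\sqrt n)$ this $\to\infty$, so $\ol c$ converges slowly (case~2); and when $f=\Theta(\sqrt n)$ the log term is $\Theta(1)$, so $cr(\ol c)$ is a finite positive constant, i.e. moderate convergence (case~3). Then Theorem~\ref{thm:JT} delivers the stated Jury/weak-non-Jury conclusions, although those are about winning probability rather than convergence rate, so strictly I only need to report the convergence rate as the proposition does.

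The main obstacle I anticipate is making the asymptotic manipulation of the equilibrium equation rigorous rather than heuristic: the $\Theta(\cdot)$ bookkeeping through the $\log$ is delicate (e.g.\ for $f=\Theta(\sqrt n)$ one must be careful that $\log(f(n)/\sqrt n)$ is genuinely bounded and bounded away from $-\infty$, so that the square root is well defined and the constant in $cr$ is positive), and one must justify replacing $p^{Bin}$ by its Stirling approximation uniformly over the relevant range of $m\to 0$, and likewise justify $n(c_N,N)=\Theta(N)$ and the proportionality $m(c_N)=\Theta(|c_N-c^*|)$ using the bounded-derivative assumption near the pivot point. I would handle the last two by quoting the relevant steps already established in the proofs of Theorems~\ref{thm:nontrivial_eq} and~\ref{thm:JT}, and handle the Stirling uniformity by a standard local central limit / Stirling remainder estimate, deferring the arithmetic to the appendix.
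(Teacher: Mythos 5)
Your proposal is correct and follows essentially the same route as the paper: classify the easy regimes via the vanishing-pivotality/tie-sensitivity lemma for the Altruist PPM (giving cases 1 and 4), then for the substantive cases plug the Stirling asymptotics $p^{Bin}(n,m)=\Theta(n^{-1/2})(1-m^2)^{n/2}$ into the equilibrium equation and extract the rate of $m(c_N)$, using the bounded-derivative proportionality $m(c_N)=\Theta(|c_N-c^*|)$. The only organizational difference is that you solve the asymptotic equation directly for $m_N$ in a unified way (obtaining $m_N=\Theta(\sqrt{\log(f(n)/\sqrt n)/n})$), whereas the paper proves case 2 by contradiction and case 3 by the direct limit $(1-\eps^2)^n\to Z<1$; the delicacy you flag about the constant being bounded away from zero in the $\Theta(\sqrt n)$ case is present, and handled only implicitly, in the paper's own proof as well.
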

Interestingly, there is no $f$ for which there is a non-trivial equilibrium with fast convergence (See also Table~\ref{tab:results}). 
\paragraph{Polynomial heuristics}
Another PPM is induced by defining the dependency on $n$ and $m$ directly: 
\begin{example}[Polynomial PPM] For $q,\alpha,\beta>0,  p^{Poly(q,\alpha,\beta)}(n,m)=\min\{q,\ m^{-\alpha}\cdot  n^{-\beta}\}.$ 
\end{example}
It is immediate from the definition that the Polynomial PPM is both $q$-tie-sensitive and has a weakly vanishing pivotality, so it remains to classify the models according to rate of convergence. 

Suppose that the support functions have different derivatives at $c^*$ (see Appendix~\ref{apx:overlap}).
\begin{restatable}{lemma}{lemNonTrivial}\label{lemma:poly_margin}Let $I$ be an issue with a Polynomial PPM and let $\ol c=(c_N)_N$ be a  non-trivial equilibrium of $I$ with limit $c^*$. Then $c_N=c^*+\Theta(N^{-\frac{\beta}{\alpha}})$.
\end{restatable}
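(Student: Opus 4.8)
The plan is to turn the fixed-point condition $c_N=p(n(c_N,N),m(c_N))$ into an explicit monomial relation between $|c_N-c^*|$ and $N$, exploiting the near-linear behaviour of the margin near a pivot point.

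First I would record two local estimates valid on a small neighbourhood of $c^*$. Write $\sigma^*:=s_A(c^*)=s_B(c^*)$; under the standing assumption that $s_A,s_B$ intersect in only finitely many points, $\sigma^*>0$, since otherwise $s_A\equiv s_B\equiv 0$ on all of $[0,c^*]$, an interval of intersection points. Because $s_A+s_B$ is continuous and $s_A(c_N)+s_B(c_N)\to 2\sigma^*$, we get $n(c_N,N)=(s_A(c_N)+s_B(c_N))N=\Theta(N)$, so $n(c_N,N)^{-\beta}=\Theta(N^{-\beta})$. For the margin, the hypothesis that $s_A,s_B$ are differentiable at $c^*$ with $s_A'(c^*)\neq s_B'(c^*)$ gives, by first-order Taylor expansion, $s_A(c)-s_B(c)=(s_A'(c^*)-s_B'(c^*))(c-c^*)+o(|c-c^*|)$ and $s_A(c)+s_B(c)=2\sigma^*+O(|c-c^*|)$, hence
\[
m(c)=\frac{|s_A(c)-s_B(c)|}{s_A(c)+s_B(c)}=\kappa\,|c-c^*|\,(1+o(1)),\qquad \kappa:=\frac{|s_A'(c^*)-s_B'(c^*)|}{2\sigma^*}>0,
\]
i.e.\ $m(c)=\Theta(|c-c^*|)$ near $c^*$. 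Note also that $c_N\neq c^*$ for large $N$: if $c_N=c^*$ then $m(c_N)=0$ and $p(n(c_N,N),0)=q>c^*$, contradicting the equilibrium condition; and since $c^*$ is an isolated pivot point, $c_N$ eventually lies in a neighbourhood of $c^*$ containing no other pivot point, so $m(c_N)=\Theta(|c_N-c^*|)>0$.

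Then I would pin down the form of the equilibrium equation. By Theorem~\ref{thm:nontrivial_eq} the limit $c^*$ of a nontrivial equilibrium of a Polynomial PPM is a pivot point with $c^*<q$, so there is $N_0$ with $c_N<q$ for all $N\geq N_0$; for such $N$ the minimum defining $p^{Poly(q,\alpha,\beta)}$ is not attained at $q$ (if it were, $c_N=q$), so the equilibrium condition reads $c_N=m(c_N)^{-\alpha}\,n(c_N,N)^{-\beta}$. Plugging in the two estimates above together with $c_N=\Theta(1)$ (since $c_N\to c^*>0$) yields $\Theta(1)=\Theta(|c_N-c^*|^{-\alpha})\cdot\Theta(N^{-\beta})$, i.e.\ $|c_N-c^*|^{\alpha}=\Theta(N^{-\beta})$ and therefore $|c_N-c^*|=\Theta(N^{-\beta/\alpha})$ (using that $x\mapsto x^{-\alpha}$ is a monotone bijection for $\alpha>0$). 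Finally, for a right-equilibrium (resp.\ left-equilibrium) $c_N-c^*$ keeps a fixed sign, so this reads $c_N=c^*+\Theta(N^{-\beta/\alpha})$ (resp.\ $c_N=c^*-\Theta(N^{-\beta/\alpha})$), as claimed.

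The main obstacle is the margin estimate in the first step: one must ensure the $o(|c-c^*|)$ and $O(|c-c^*|)$ errors in the Taylor expansions of $s_A\pm s_B$ do not degrade the two-sided $\Theta$-bound on $m(c)$, and this is precisely where differentiability at $c^*$ and $\sigma^*>0$ are used; dropping those hypotheses is what forces the more general treatment of non-smooth or overlapping support functions in Appendix~\ref{apx:overlap}. Everything else --- checking that the PPM's minimum is inactive for large $N$, and inverting a monomial identity --- is routine.
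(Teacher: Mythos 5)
Your proof is correct and follows essentially the same route as the paper's: establish $n(c_N,N)=\Theta(N)$ and $m(c)=\Theta(|c-c^*|)$ from the local (near-linear) behaviour of the support functions at the pivot point, check the $\min$ with $q$ is inactive for large $N$, and invert the monomial fixed-point identity $c_N=m(c_N)^{-\alpha}n(c_N,N)^{-\beta}$. The only cosmetic difference is that the paper carries explicit sandwiching constants $\ul m,\ol m,\ul n,\ol n$ (which it reuses afterwards to extract the precise limiting constant for Theorem~\ref{thm:JT}), whereas you phrase the same estimates in $\Theta$-notation; your added justifications that $s_A(c^*)=s_B(c^*)>0$ and that $c_N\neq c^*$ are welcome details the paper leaves implicit.
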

From the lemma (and Theorem~\ref{thm:JT}),  we  conclude that the critical threshold for the Jury theorem is at $\alpha=2\beta$: If $\alpha>2\beta$ then convergence is fast and there is a strong non-jury theorem; and if $\alpha<2\beta$ then convergence is slow enough to guarantee a local jury theorem. See Table~\ref{tab:results}.

The threshold includes the special case where $\beta=\frac12, \alpha=1$. This  is case is interesting and natural because the dependency on the number of active voters is $\frac{1}{\sqrt n}$---just as in the fully rational models---whereas the dependency on the margin is linear.

Moreover, in the  case of moderate convergence rate where $\alpha=2\beta$,  $A$ wins w.p. $\Phi((c^*)^{-\frac{1}{\alpha}})$. Conveniently, most terms cancel out and we get that $\mathbb{WP}_A(I,\ol c)=\Phi ((c^*)^{-\frac1\alpha})<1$. Interestingly, the probability does not depend on the shape of the support functions at all (except  their intersection point), and neither does it depend on  $\beta$.  

 \if 0
The expected margin is voters assessment of how robust the winner's position is in terms of vote share. The larger the margin, higher is the vote share of the winner.  Expected margin is given by \begin{equation}
    m(c):=\frac{|s_A(c)-s_B(c)|}{s_A(c)+s_B(c)}.
\end{equation} 
 Given a voter population $N$ and a threshold cost $c$, the expected number of total votes is given by \begin{equation}
     n(c) := N (s_A(c) + s_B(c)). 
 \end{equation}
Furthermore, the probability that a random voter votes for candidate $A$ is given as 
\begin{equation}
    \mathbbm{P}(A) = \frac{s_A(c)}{s_A(c) + s_B(c)}
\end{equation}

We remark here that  the parameters $m(c), n(c)$ and $\mathbbm{P}(A) $ are voters assessment of expected margin of victory, voting population and fraction of votes for candidate $A$ respectively; i.e., these are not the  parameters obtained after the voting process. However, given the knowledge of cost distribution, these are unbiased estimates of the true parameters.   Furthermore,  since the cost function is common knowledge each voter has same estimate for all these parameters. 
\fi

\subsection{Sensitivity to model parameters} \label{sec:simulation}
Our theoretical results provide a sharp threshold between `positive' and 'negative' results, which depends on the PPM or its parameters. However, there results hold \emph{in the limit} as $N$ grows. 

We next study via an example how $A$'s winning probability changes as population size increases and/or when we vary the parameters. For simplicity and easy computation we use the Polynomial PPM with $\beta=\frac12$.  

We set $\calD$ such that each voter supports  $A$ w.p. 0.6  overall, of which $\frac16$ of the voters (10\% of entire population) are core supporters, and the rest $\frac56$ have $c_i\sim U[0,1]$, see Fig.~\ref{fig:sim}(L).

Note that the unique pivot point is $0.6$, so we expect the winning probability  to approach $\Phi(0.6^{-1})\cong 0.952$ in both $c^+_N$ and $c^-_N$ as $N$ increases.

\paragraph{Insights from the empirical example}
While the winning probability of the equilibrium winner indeed approaches its limit value, it is increasing with $N$ for the right equilibrium (where $A$ wins), but \emph{decreases} towards $c^*$ in the left equilibrium $c^-$ (Fig.~\ref{fig:sim}(Center)). 

In the right panel of the figure we observe that with  higher values of $N$ equilibrium win probability of at $c^+$   approaches a step function around $\alpha=1=2\beta$. Yet this occurs slowly and   even for  high $N$ there is  a gradual increase of the win probability of $A$ with $\alpha$.

\newsec{Discussion}\label{sec:discussion}
The starting point of this paper was the Paradox of Voting, which entails the Condorcet Jury Theorem (CJT) would not hold in a population with heterogeneous voting costs, due to abstention and bias. 
We showed that when voters are sufficiently responsive to the expected margin, the (locally) more popular candidate wins with a probability approaching one, aligning with the classical CJT.

There is of course ample literature on the CJT analyzing theoretical and practical conditions where it may fail (see \cite{mccannon2015condorcet}).
In the introduction, we already mentioned \emph{epistemic voting}, that is supposedly outside the scope of models we consider: e.g., in epistemic voting minority voters are unaware of this and everyone gains from high turnout. However, a recent paper by Michellini et al.~\citep{michelini2022group} suggest a way to re-obtain a CJT when voters are only exposed to few neighbors.  This can be thought of as a special case of heuristics that increases their perceived pivotality, so perhaps our results could be extended to cover `heuristic' epistemic voting as well, with some modifications.

Strategic behavior in the epistemic voting scenario is more problematic though, leading to the `swing voter curse'~\cite{feddersen1996swing,austen1996information}. It is less likely  that considering self as more pivotal would help, as the problem is with the voter's belief about her \emph{competence}.

\rmr{Miller~\cite{miller1986information} suggests an extension of CJT to a political setting, where there is an initial partition between A and B as in our model, but each voter can `mistake' w.p. $p$. This gives rise to a different distribution of votes even under full participation (that depends both on $n/n_A$ and $p$), but again it might be possible to extend our results. }

 Other factors affecting vote decisions like external pressure~\cite{bolle2022voting},  bandwagon effect~\cite{morton2015exit}, and information cascades ~\cite{golub2010naive,acemoglu2011bayesian}, which provide alternative reasons for the failure of group wisdom,  are outside the scope of our paper.


\paragraph{Future directions}
I n light of the above discussion, we believe that relaxing the rigid rationality assumption in favor of more general pivot probability models is an important step in understanding the boundaries of positive and negative results in the social choice literature (in our case, CJT and paradox fo voting, respectively). 

While  this paper focused on the limit case,  studying  the winning probabilities in small-population settings is also important, following our preliminary empirical example. 

Another avenue is to explore heterogeneity not only in participation costs but also in how voters estimate their pivotality. While it is relatively straightforward to show that an equilibrium still exists in this extended model (Appendix~\ref{apx:diverse}), characterizing these equilibria and understanding how this diversity affects the CJT remains an open and intriguing question.

\bibliography{aaai2026}
\onecolumn
\appendix

 \setcounter{secnumdepth}{2} 

\section{Missing Proofs}\label{apx:proofs}
\label{apx:missingProofs}
\support*
\begin{proof} Suppose there are $N$ voters sampled from $\calD$.
    Given a cost threshold $c$,  let the random variable $X_A^{(i)}(c) := \frac{1}{N-1} \sum_{j \neq i}   \mathbb{1}[ c_j \leq c \text{ and } A \succ_{j} B ]$ denote fraction of supporters---other than voter $i$---of candidate $A$ having the voting cost at-most $c$.  Similarly, define  $X_B^{(i)}(c):= \frac{1}{N-1} \sum_{j \neq i}^{N-1} \mathbb{1}[ c_j \leq c \text{ and } B \succ_{j} A ]$. Note that, as $X_{T}^{(i)}(c)$ does not depend on her type $T_i$ and her private cost $c_i$. Since  every agent  is exposed to the same knowledge, we have, $ X_{T}(c) := X_{T}^{(1)}(c) = X_{T}^{(2)}(c) = \cdots = X_{T}^{(N)}(c)$ for all $T \in \{A,B\}$. Finally, let $ s_T(c) := \mathbb{E}[X_{T}(c)] $, and note that it depends neither on $i$ or $N$.

    To see that $s_T(c)$ is continuous, let $s^+, s^-$ the right and left limits of $s_T(c)$ at $c^*\in(0,1)$. If $s^-<s^+$ then $Pr_{\calD}[(c^*,T)] = s^+ - s^- >0$, in contradiction to our assumption that $\calD$ has no atoms in  $(0,1)$.

    To show the direction, we define $Pr_\calD[T_i=A]:=s_A(1)$, then by assumption $Pr_\calD[T_i=B]:=1-s_A(1)=s_B(1)$. For each $T\in \{A,B\}$, the function $s_T(c)/s_T(1)$ is the CDF of $Pr_\calD[c_i| T_i]$.
\end{proof}

\propTwo*
\begin{proof}
    For any $c\in [0,1]$, define $f(c):=p(n(c),m(c))$. Since the support functions are continuous, $n(c)$ and $m(c)$ are continuous (by Eqs.~\eqref{eq:n_c},\eqref{eq:m_c}). By continuity of $p$, the function $f$ is also continuous. Finally, every continuous function from $[0,1]$ to itself has a fixed point, due to intermediate value theorem applied to $f(x)-x$.
\end{proof}
n size).

\begin{restatable}{lemma}{altruistic}
\label{lemma:alt_vanish}
    The Altruist PPM has v.p. iff $f=\exp(o(n))$  and is $q$-tie-sensitive for some $q>0$ iff $f=\Omega(\sqrt n)$. 
\end{restatable}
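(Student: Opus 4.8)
\textbf{Proof plan for Lemma~\ref{lemma:alt_vanish}.}
The plan is to analyze the two defining quantities of the Altruist PPM, $p^{alt(q,f)}(n,m)=\min\{q,\ p^{Bin}(n,m)\cdot f(n)\}$, separately: its behavior for fixed $m>0$ (for vanishing pivotality) and its behavior at $m=0$ (for tie-sensitivity). Throughout I will use the Stirling-based estimates already noted in the excerpt for the Binomial PPM: $p^{Bin}(n,0)=\Theta(n^{-1/2})$, and $p^{Bin}(n,m)=\Theta\!\big(n^{-1/2}\,((1+m)(1-m))^{n/2}\big)=\Theta\!\big(n^{-1/2}e^{-cn}\big)$ for a constant $c=c(m)>0$ whenever $m>0$.

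\emph{Vanishing pivotality.} Fix any $m>0$. Then $p^{alt}(n,m)\le p^{Bin}(n,m)\cdot f(n)=\Theta\big(n^{-1/2}e^{-c n} f(n)\big)$. If $f=\exp(o(n))$, i.e.\ $\log f(n)=o(n)$, then $e^{-cn}f(n)=e^{-cn+o(n)}\to 0$, so $p^{alt}(n,m)\to 0$; since this holds for every $m>0$, $p^{alt}$ has vanishing pivotality. Conversely, if $f$ is \emph{not} $\exp(o(n))$, then $\log f(n)\ge c' n$ along a subsequence for some constant $c'>0$; pick $m>0$ small enough that the corresponding rate constant $c(m)<c'$ (this is possible since $c(m)\to 0$ as $m\to 0$, as $(1+m)(1-m)=1-m^2\to 1$). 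Along that subsequence $p^{Bin}(n,m)f(n)=\Theta(n^{-1/2}e^{(c'-c(m))n})\to\infty$, so the $\min$ with $q$ gives $p^{alt}(n,m)=q\not\to 0$, and vanishing pivotality fails. This establishes the first equivalence.

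\emph{Tie-sensitivity.} At $m=0$ we have $p^{alt}(n,0)=\min\{q,\ \Theta(1)\cdot n^{-1/2} f(n)\}$. If $f=\Omega(\sqrt n)$, then $n^{-1/2}f(n)\ge a>0$ for some constant $a$ and all large $n$; for small $n$ the quantity $p^{Bin}(n,0)f(n)$ is bounded below by a positive constant as well (only finitely many values, all positive), so $p^{alt}(n,0)\ge q'$ for some $q'>0$ for all $n$, hence the PPM is $q'$-tie-sensitive. Conversely, if $f\ne\Omega(\sqrt n)$ then $n^{-1/2}f(n)\to 0$ along a subsequence, so $p^{alt}(n,0)\to 0$ along that subsequence and no positive lower bound $q'$ can hold for all $n$; thus the PPM is not $q'$-tie-sensitive for any $q'>0$. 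This gives the second equivalence and completes the proof.

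\emph{Main obstacle.} The only delicate point is the converse direction of the vanishing-pivotality claim: one must choose the margin $m$ \emph{after} seeing the growth rate of $f$, exploiting that the exponential decay rate $c(m)$ of $p^{Bin}(\cdot,m)$ can be made arbitrarily small, so that a super-exponential (but still, say, only barely faster than exponential) $f$ can overwhelm it. Making the quantifier order precise---``there exists $m>0$ such that $\lim_n p^{alt}(n,m)\neq 0$''---is what needs care; everything else is routine Stirling bookkeeping.
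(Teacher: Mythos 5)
Your proof is correct and follows essentially the same route as the paper: apply the Stirling asymptotics $p^{Bin}(n,m)=\Theta(n^{-1/2}e^{-c(m)n})$ for $m>0$ and $p^{Bin}(n,0)=\Theta(n^{-1/2})$, then analyze the product with $f(n)$ separately in the two regimes. You are in fact more careful than the paper on the converse directions---choosing $m$ small enough that the decay rate $c(m)$ is overwhelmed by the subsequential exponential growth of $f$, and replacing the paper's non-exhaustive $o(\sqrt n)/\Theta(\sqrt n)/\omega(\sqrt n)$ trichotomy with the correct dichotomy between $f=\Omega(\sqrt n)$ and $\liminf_n n^{-1/2}f(n)=0$---so your write-up actually tightens two terse spots in the published argument.
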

\begin{proof}
For any $m>0$, we have  
$$p^{Bin}(n,m)\cdot f(n) = \exp(-\Theta(n))f(n),$$
which goes to 0 if $f(n)$ is subexponential, and goes to infinity otherwise. 

Similarly, at $m=0$, we have
$$p^{Bin}(n,m)\cdot f(n) = \Theta(\frac{1}{\sqrt{n}})f(n),$$
which goes to 0 if $f(n)=o(\sqrt n)$ (in which case the model has strong vanishing pivotality); goes to infinity if $f(n)=\omega(n)$; and approaches some positive constant $q'$ if $f(n)=\Theta(\sqrt n)$, in which case we define $q:=\min\{1,q'\}$.
\end{proof}

\propThree*
\begin{proof}
    Assume towards a contradiction that there is a nontrivial equilibrium $\ol c=(c_N)_N$ with limit $c^*>\delta$ for some $\delta>0$. This means  that $c_N>q$ for all sufficiently large $N>N'$. 
    
    Now, since $\lim_{n\rightarrow \infty}p(n,0)=0$, there is some $n^*$ such that $p(n,0)<q$ for all $n\geq n^*$. Let $N^*:=\frac{n^*}{s_A(c_{N^*})+s_B(c_{N^*})}$, so that $n(c_{N^*},N^*) = n^*$. Also w.l.o.g. $N^*>N'$. We then have that 
    \begin{align*}
    q<c_{N^*}& = p(n(c_{N^*},N^*),m(c_{N^*})) \\ & \leq p(n(c_{N^*},N^*),0) \\ & = p(n^*,0) <q,  \end{align*}
i.e. a contradiction.
\end{proof}
\propFour*
\begin{proof}
For any $\delta>0$, we should show that there is some $N_\delta$ and $c_{\delta}<\delta$ such that $c_\delta \in C(I,N_\delta)$.

W.l.o.g. $s_B(0)>s_A(0)$, and by the bounded derivatives, the margin $m(c)$ is at least some $\eps>0$ in some neighborhood $c\in [0,q]$.
Let $\uld:=\min\{q,\delta\}$.

 By weakly vanishing pivotality, 
 $$p(n(\uld),m(\uld))\leq p(n(\uld),\eps) = p(s_B(0)N,\eps)\rightarrow 0$$
 as $N$ grows. In particular there is some $N_\delta$ for which $p(s_B(0)N_\delta,\eps)<\uld$. We argue that there must be an equilibrium in the range $[0,\uld]$.

 Indeed, fix $N_\delta$ and let $f(x):=p(n(x),m(x))-x$. We know that $f$ is continuous, that $f(0)\geq 0$ and that $f(\uld) = p(n(\uld),m(\uld))-\uld< \uld-\uld=0$. From the intermediate value theorem, there must be some $c_\delta\in [0,\uld)$ for which $f(c_\delta)=0$, meaning $c_\delta=p(n(c_\delta),m(c_\delta))$. Thus $c_\delta\in C(I,N_\delta)$.
\end{proof}

\begin{lemma}\label{lemma:margin_bound}$m(c_N)=|c_N-c^*|(m^*+o(1))$.
\end{lemma}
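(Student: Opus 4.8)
\textbf{Proof plan for Lemma~\ref{lemma:margin_bound}.}

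The plan is to expand the support functions around the pivot point $c^*$ using their (bounded) derivatives and track how the margin $m(c)$ behaves to leading order. Write $\Delta(c):=s_A(c)-s_B(c)$ and $\Sigma(c):=s_A(c)+s_B(c)$, so that $m(c)=|\Delta(c)|/\Sigma(c)$. Since $c^*$ is a pivot point, $s_A(c^*)=s_B(c^*)$, hence $\Delta(c^*)=0$; moreover $\Sigma(c^*)=2s_A(c^*)>0$ because $c^*\in(0,1)$ is an interior intersection and the support functions are non-decreasing with $s_A(1)+s_B(1)=1$ (if $\Sigma(c^*)$ were $0$ the functions would be identically zero up to $c^*$, contradicting $c^*$ being an \emph{isolated} intersection with bounded derivatives in a neighborhood, or at the very least we may work in the regime where $\Sigma(c^*)>0$; the degenerate overlap case is deferred to Appendix~\ref{apx:overlap}).

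The key step is a first-order Taylor expansion. By the assumption that all derivatives of $s_A,s_B$ are bounded near $c^*$, we have $\Delta(c)=\Delta'(c^*)(c-c^*)+O((c-c^*)^2)$ and $\Sigma(c)=\Sigma(c^*)+O(c-c^*)$. Here $\Delta'(c^*)=s_A'(c^*)-s_B'(c^*)$; in the generic situation (assumed in the surrounding text, cf.\ the statement of Lemma~\ref{lemma:poly_margin}) this derivative is nonzero, so set $m^*:=|\Delta'(c^*)|/\Sigma(c^*)>0$. Then
\begin{equation*}
m(c)=\frac{|\Delta(c)|}{\Sigma(c)}=\frac{|\Delta'(c^*)|\,|c-c^*|+O((c-c^*)^2)}{\Sigma(c^*)+O(c-c^*)}=|c-c^*|\bigl(m^*+o(1)\bigr),
\end{equation*}
where the $o(1)$ term is taken as $c\to c^*$. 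Now specialize to $c=c_N$: since $\ol c$ is an issue equilibrium with limit $c^*$, we have $c_N\to c^*$, so the $o(1)$ becomes $o(1)$ as $N\to\infty$, which is exactly the claimed identity $m(c_N)=|c_N-c^*|(m^*+o(1))$.

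The main thing to be careful about — rather than a deep obstacle — is the justification that $\Delta'(c^*)\neq 0$ and $\Sigma(c^*)>0$, i.e.\ that we are in the nondegenerate case; this is precisely the hypothesis flagged as ``the support functions have different derivatives at $c^*$'' in the setup preceding Lemma~\ref{lemma:poly_margin}, and the genuinely degenerate cases (tangential intersections, flat overlaps) are handled separately in Appendix~\ref{apx:overlap}. A secondary point is that the Taylor remainder bounds are uniform in a fixed neighborhood of $c^*$ because the derivatives are bounded there; once $N$ is large enough that $c_N$ lies in that neighborhood, all the $O(\cdot)$ constants are absorbed uniformly, so passing from the ``$c\to c^*$'' statement to the ``$N\to\infty$'' statement is immediate.
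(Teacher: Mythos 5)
Your proof is correct and takes essentially the same route as the paper: a first-order expansion of the margin around the pivot point using the bounded-derivative assumption, with the same constant $m^*=m'(c^*)=|s_A'(c^*)-s_B'(c^*)|/(s_A(c^*)+s_B(c^*))$. The only cosmetic difference is that you expand numerator and denominator of $m(c)$ separately, whereas the paper bounds the derivative of $m(c)$ directly on a small interval; your explicit flagging of the nondegeneracy hypotheses ($\Delta'(c^*)\neq 0$, $\Sigma(c^*)>0$) is a welcome clarification of what the paper asserts implicitly.
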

\begin{proof}
Since the first and second derivatives of $s_A,s_B$ are bounded, so are those of the functions $n(c,N)$ and $m(c)$ (as per Eq.~\eqref{eq:n_c},\eqref{eq:m_c}).

    We denote $m'(c):=\frac{\partial m(c)}{\partial c}$ and $n'(c):=\frac{1}{N}\frac{\partial n(c,N)}{\partial c}$, so neither function depends on $N$.
    
    Also note that $n'(c)\geq 0$ everywhere and $m^*=m'(c^*)>0, n'(c^*)>0$ by the definition of pivot point. We also denote $s^*:=s_A(c^*)+s_B(c^*)$ and note that $s^*\cdot N = n(c^*)$.

Due to  bounded derivatives,  there must be some interval $[c^*-\delta,c^*+\delta]$ and constants $\ol m,\ul m,\ol n,\ul n>0$ such that $m'(c)\in [\ul m,\ol m]$ and $n'(c)\in[\ul n,\ol n]$ for all $c\in [c^*,c^*+\delta]$. Moreover, $\ul m,\ol m$ can be arbitrarily close to $m^*$ and likewise for $n'$ (the functions are nearly-linear near $c^*$).  We consider only $N$'s large enough so that $c_N$ is in this interval, and so that $p(n(c_N),m(c_N))<1$.

For any $c\in [c^*,c^*+\delta]$, we have that 
\begin{align*}
    m(c&)\in [m(c^*)+|c-c^*|\ul m, m(c^*)+|c-c^*|\ol m] \\
    &= |c-c^*|\cdot [\ul m,\ol m] = |c-c^*|(m^*+o(1)),
\end{align*}
where the last equality is due to the bounded second derivatives of the support functions, and in particular of their difference.
\end{proof}
\JuryCharacterization*
\begin{proof}
    Let $\ol c=(c_N)_N$ be a nontrivial equilibrium with limit at pivot point $c^*<q$. 

Denote $p_A:=s_A(c_N); p_B:=s_B(c_N)$ the probabilities that a single random voter is voting actively for $A$ or $B$, respectively with $p_A > p_B$ and let $s^*:=p_A+p_B$. 

The  variables $V_A,V_B$ are coming from a single multinomial distribution with parameters $N$ and $(p_A,p_B, 1-(p_A+p_B))$. In the limit of $N$, these are two \emph{correlated} Normal variables, with mean $\mu_A = N\cdot p_A, \mu_B = N\cdot p_B$  and variance $\sigma_A^2=N p_A(1-p_A), \sigma_B^2=Np_B(a-p_B)$. Their difference $V_A-V_B$ is a Normal variable with  expectation 
\begin{align*}
 \mu &= \mu_A-\mu_B= N(p_A-p_B)    = N\cdot m(c_N)s^*
\end{align*}
and with variance
\begin{align*}
  \sigma^2 &= \sigma_A^2+\sigma_B^2-2COV(V_A,V_B) \\
&= N(p_A(1-p_A) + p_B(1-p_B)) +2N(p_A p_B)\\
&= N(p_A+p_B - (p_A^2 -2p_A p_B +p_B^2))\\
&= N(p_A+p_B - (p_A - p_B)^2)\\
&=N(p_A+p_B-(m(c_N)(p_A+p_B))^2)\\
&=N((p_A+p_B)(1-m(c_N)^2(p_A+p_B)))\\
&=n(c_N,N)-O(N\cdot m(c_N)^2) = N\cdot s^*-o(N),
\end{align*}
where the last equality is since $m(c_N)$ is diminishing.

From the calculations above, 
\begin{equation}\label{eq:mu_sigma}
  \frac{\mu}{\sigma}=\frac{N\cdot s^* \cdot m(c_N)}{\sqrt{s^*\cdot N- o(N)}}\cong \sqrt{N}\sqrt{s^*}\cdot m(c_N),
\end{equation}
where  $x\cong y$ is a shorthand for $\lim_{N\rightarrow \infty}|x-y|=0$. This is the main fact we need for all three cases.

    \paragraph{Slow convergence}
    Suppose that $\ol c$ converges slowly to $c^*$. Then by Lemma~\ref{lemma:margin_bound}, $m(c_N)=\Theta(1)|c_N-c^*|=\omega(1/\sqrt N)$, and from Eq.~\eqref{eq:mu_sigma},
    \begin{align*}
      \frac\mu\sigma&=\sqrt{N}\sqrt{s^*}\cdot \omega(1/\sqrt N) = \omega(1),
    \end{align*}
    i.e., $\frac{\mu}{\sigma}$ goes to infinity as $N$ grows. As a result, using Normal approximation of $V_A-V_B$, we have
     \begin{align*}
        &\mathbb{WP}_A(I,N,c_N)\cong \Pr_{x\sim \calN(\mu,\sigma^2)}(x>0) = \Pr_{x\sim \calN(0,1)}(x<\frac{\mu}{\sigma}) \Rightarrow \\
       &\mathbb{WP}_A(I,N,c_N)
       \cong  \Phi(\frac{\mu}{\sigma}) \xrightarrow[N\rightarrow \infty]{} 1,
    \end{align*}
    where $\Phi$ is the CDF of the standard Normal  distribution function.
    
    \paragraph{Fast convergence} This case is very similar, except now $m(c_N)=\Theta(1)|c_N-c^*|=o(1/\sqrt N)$, which by Eq.~\eqref{eq:mu_sigma} means $\frac\mu\sigma=\sqrt{N}\sqrt{s^*}\cdot o(1/\sqrt N) = o(1),$
    i.e. in this case $\frac{\mu}{\sigma}$ goes to $0$ as $N$ grows, and thus
     \begin{align*}
  \mathbb{WP}_A(I,N,c_N)\cong  \Phi(\frac{\mu}{\sigma}) \xrightarrow[N\rightarrow \infty]{} 0.5.
    \end{align*}

    \paragraph{Moderate convergence}  In the knife-edge case where $cr(\ol c)=r(c^*)$ is a positive constant,  we have by Lemma~\ref{lemma:margin_bound} that $m(c_N)=\Theta(1/\sqrt N)$, and more precisely, that 
    $$m(c_N) = |c_N-c^*|(m^*+o(1)) = \frac{r(c^*)\cdot m^*}{\sqrt{N}}.$$

    This means that the ratio $\frac\mu\sigma$ also has a finite positive limit, specifically
    $$  \frac\mu\sigma=\sqrt{N}\sqrt{s^*}\cdot m(c_N)  =\sqrt{s^*}\cdot r(c^*)(m^*+o(1)),$$
    and by continuity of $\Phi$, we can define the constant $\phi(c^*):=\lim_{N\rightarrow \infty}\Phi(\frac{\mu}{\sigma}) = \Phi(\sqrt{s^*}\cdot r(c^*)m^*)$,
    and it holds that 
    $$ \mathbb{WP}_A(I,N,c_N)\cong  \Phi(\frac{\mu}{\sigma}) \xrightarrow[N\rightarrow \infty]{} \phi(c^*)<1,$$
    as required.
    \end{proof}

\section{Parameters of Specific PPMs}\label{apx:PPM_char}
\subsection{Polynomial PPM}
\lemNonTrivial*
That is, the point $c_N$ must be at distance that decreases proportionally to $N^{-\frac{\beta}{\alpha}}$: not closer neither farther away. 
\begin{proof}
We will prove for $c_N>c^*$. The proof for $c_N<c^*$ is symmetric.

    By Theorem~\ref{thm:nontrivial_eq}, $c^*$ is a pivot point. 

    Since the first and second derivatives of $s_A,s_B$ are bounded, so are those of the functions $n(c)$ and $m(c)$ (as per Eq.~\eqref{eq:n_c},\eqref{eq:m_c}).

    We denote $m'(c):=\frac{\partial m(c)}{\partial c}$ and $n'(c):=\frac{1}{N}\frac{\partial n(c,N)}{\partial c}$, so neither function depends on $N$.
    
    Also note that $n'(c)\geq 0$ everywhere and $m'(c^*)>0, n'(c^*)>0$ by the definition of pivot point. We also denote $s^*:=s_A(c^*)+s_B(c^*)$ and note that $s^*\cdot N = n(c^*)$.

    Therefore there must be some interval $[c^*,c^*+\delta]$ and constants $\ol m,\ul m,\ol n,\ul n>0$ such that $m'(c)\in [\ul m,\ol m]$ and $n'(c)\in[\ul n,\ol n]$ for all $c\in [c^*,c^*+\delta]$. Moreover, $\ul m,\ol m$ can be arbitrarily close to $m'(c^*)$ and likewise for $n'$ (the functions are nearly-linear near $c^*$).  We consider only $N$'s large enough so that $c_N$ is in this interval, and so that $p(n(c_N),m(c_N))<1$.

    Now, let $\eps:=c_N-c^*$, and w.l.o.g. $\eps<\min\{0.1,0.1/\ol n\}$. We want to show $\eps=\Theta(\frac{1}{N^\frac\beta\alpha})$. From the definition of equilibrium,
    \begin{align*}
        c^*+\eps &= c_N = p(n(c_N),m(c_N)) = \frac{1}{m(c_N)^\alpha n(c_N)^\beta}\\
        &=\frac{1}{m(c^*+\eps)^\alpha n(c^*+\eps)^\beta} \Rightarrow
    \end{align*}
    \begin{equation}\label{eq:m_c_star_bound}
        m(c^*+\eps) 
        = \frac{1}{(c^*+\eps)^\frac1\alpha n(c^*+\eps)^\frac\beta\alpha}.
    \end{equation}
        
    Now, by the constant bounds on the derivatives, 
    \begin{align}
        m(c^*+\eps) &\in [m(c^*)+\ul m \eps, m(c^*)+\ol m \eps] \label{eq:m_eps_bounds}\\
        &=  [\ul m \eps, \ol m \eps], \label{eq:m_u_o}\\
        n(c^*+\eps) &\in [n(c^*)+\ul n \eps\cdot N, n(c^*)+\ol n \eps\cdot N] \notag \\
        &= [(s^*+\ul n \eps)N, (s^*+\ol n\eps)N]\notag \\
        &\subset [s^* N, (s^*+0.1)N] \label{eq:s_star}
    \end{align}
For the upper bound, we get from Eqs.\eqref{eq:m_c_star_bound},\eqref{eq:m_u_o},\eqref{eq:s_star} 
\begin{align*}
    \ul m \eps &\leq m(c^*+\eps) \leq \frac1{(c^*)^\frac1\alpha (s^* N)^\frac{\beta}{\alpha}} \Rightarrow\\
    \eps & \leq \frac1{\ul m\cdot (c^*)^\frac1\alpha (s^*)\frac{\beta}{\alpha} N^\frac{\beta}{\alpha}}=O(\frac{1}{N^\frac{\beta}{\alpha}}),
\end{align*}
since $\ul m, c^*$ and $s^*$ are all constants. 
Likewise, for the lower bound, 
\begin{align*}
     \ol m \eps &\geq m(c^*+\eps) \geq \frac1{(c^*+0.1)^\frac1\alpha ((s^*+0.1) N)^\frac{\beta}{\alpha}} \Rightarrow\\
     \eps &= \Omega(\frac{1}{N^\frac{\beta}{\alpha}}) \text{ as required.}
\end{align*}
\end{proof}
Clearly the term ``$0.1$" used in the proof is arbitrary and could be replaced by a smaller constant. Thus the lemma shows 
$$c^* + m(c^*)^{-1}\cdot \ul t \cdot N^{-\frac\beta\alpha}< c_N < c^* + m(c^*)^{-1}\cdot \ol t \cdot N^{-\frac\beta\alpha},$$
where $m'$ is the derivative of $m(c)$ at $c^*$. 
Moreover, from Eq.~\eqref{eq:m_eps_bounds},
$$ \ul t\cdot N^{-\frac\beta\alpha}<m(c_N) < \ol t\cdot   N^{-\frac\beta\alpha},$$
which we use in the proof of Theorem~\ref{thm:JT}. In the limit as $N$ grows, both $\ol t, \ul t$ converge to \begin{equation}\label{eq:t_bounds}
    t^*=(c^*)^{-\frac1\alpha} (s^*)^{-\frac\beta\alpha},
\end{equation}
and thus $r^* = t^*\cdot \sqrt{s^*} =(c^*)^{-\frac1\alpha} (s^*)^{0.5-\frac\beta\alpha}$.

\subsection{Altruist PPM}
\PropAltruistic*

Note that there is no $f$ for which there is a non-trivial equilibrium with fast convergence. 
\begin{proof}
  Case~1. To see that $c=q$ is an equilibrium, note that  the margin at $m(q)$ is strictly less than 1 (say, $>1-\delta$) and this $p^{Bin}(n,m(q))>\delta^n$. 
  Also, $f(n)>e^{Z\cdot n}$ for any constant $Z$ and in particular for $Z=-\log \delta$.
  
  Thus 
  \begin{align*}
  p^{alt(q,f)}(n,m(q))&=\min\{q,f(n)\delta^n\}>\min\{q,e^{-\log\delta\cdot n}\delta^n\}\\
  &=\min\{q,\delta^{-n}\delta^n\}=\min\{q,1\}=q.
  \end{align*}
  That is, all voters with cost $c\leq q$ consider themselves pivotal.

    \medskip
    For Cases~2 and 3, we know from Lemma~\ref{lemma:alt_vanish} and Theorem~\ref{thm:nontrivial_eq} that $\ol c$ converges to a pivot point $c^*\in(0,1)$.  
    
    For Case~2, assume towards a contradiction that convergence is (at least) moderate, i.e., there is a constant $X>0$ s.t.  $|c_N-c^*| < X/\sqrt{N}$ for all $N$.

     Now, we have that in equilibrium $c=c_N$ with $n:=n(c_N,N)$ and $$\eps:=m(c_N)\leq \sup_c m'(c) \cdot |c_N-c^*|\leq Y /\sqrt{N},$$
     for some constant $Y$,
     due to bounded derivatives.

      Set $Z:=e^{Y^2}$.
 Since $f(n)=\omega(\sqrt n)$, for any sufficiently large $n$, we have $f(n)>4Z\sqrt{n}$.
 Note that the number of active voters at the pivot point $c^*$ is $s^*\cdot N$, and that $n(c_N)= s_N\cdot N \cong s^*N$. 
 
     Note that for all $n$,
     \begin{align*}
         p^{Bin}&(2n+1,2\eps)=\binom{2n}{n}(\frac12+\eps)^n(\frac12-\eps)^n \\ & > \frac{4^n}{2\sqrt{\pi n}} (\frac14-\eps^2)^n \\
         &>\frac{1}{\sqrt n}(1-\eps^2)^n >\frac{1}{\sqrt n}(1-(Y/\sqrt N)^2)^n \\
         &= \frac{1}{\sqrt n}(1- Y^2/N)^n\geq \frac{1}{\sqrt n}(1-\frac{Y^2}{n})^n \\ &>\frac{1}{2\sqrt n} e^{-Y^2}
     \end{align*}
     for sufficiently large $n$.

    Finally, pick $N$ large enough so that $n=n(c_N)$ respects the conditions above, then
     \begin{align*}
         1&>q\geq  c_N=p^{alt(q,f)}(n,\eps)=p^{Bin}(n,\eps)f(n) \\
         &> p^{Bin}(n,\eps) Z\sqrt n\\
         &>\frac{1}{2\sqrt{n/2}} e^{-Y^2} \cdot 4Z\sqrt n\\
         &>e^{-Y^2} \cdot Z = 1,
         \end{align*}
         which is a contradiction. 

   \medskip
   Case 3. We start from
$p^{Bin}(n,\eps)= O(1)\frac{1}{\sqrt n}(1-\eps^2)^n$, which means $p^{alt(f)}(n,\eps) = O(1)(1-\eps^2)^n$.
Since this converges to some pivot point $c^*$, we have that 
$$(1-\eps^2)^n\rightarrow Z,$$
for some constant $Z<1$.
Thus the margin at equilibrium holds
$$\eps=\eps(C_N) \cong \sqrt{1-Z^{1/n}} = \sqrt{\frac1n\log\frac1Z}=\Theta(\sqrt{1/N}),$$
where we use the fact that  $\lim_{n\rightarrow \infty}(1-Z^{1/n})n=\log\frac1Z$.
Finally, the gap $|c_N-c^*|$ is linear in the margin $\eps$ due to bounded derivatives. 

\medskip
  Case~4 follows immediately from Lemma~\ref{lemma:alt_vanish} and Prop.~\ref{prop:strong_trivial}, as the PPM is strongly vanishing. 
 \end{proof}

\section{Partially-coinciding Support Functions}\label{apx:overlap}

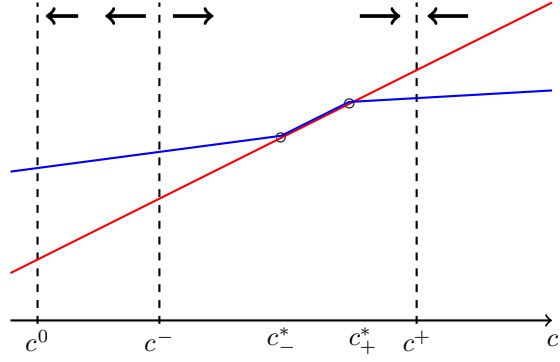
\begin{figure}[t!]
 \centering 
 \begin{tikzpicture}[scale=0.9, blend group=soft light]
\def\a{40}
\def\bx{-20}
\def\by{-13}
\draw[blue,thick] (0,2.5) -- (4,3.03) -- (5,3.53) -- (8,3.7);
\draw[red,thick] (0.5*\a+\bx, 0.35*\a+\by) -- (0.7*\a+\bx, 0.45*\a+\by);
\draw[thick,->] (0,0.3) -- (8,0.3);
\draw[dashed,thick] (0.65*\a+\bx,0.25) -- (0.65*\a+\bx,5);
\draw[dashed,thick] (0.555*\a+\bx,0.25) -- (0.555*\a+\bx,5);
\draw[dashed,thick] (0.51*\a+\bx,0.25) -- (0.51*\a+\bx,5);
\draw[double,->] (0.55*\a+\bx,4.8) -- (0.535*\a+\bx,4.8);
\node at (0.55*\a+\bx + 0.6,4.8 +0.4 ) {$ $};
\node at (0.55*\a+\bx - 0.31,4.8 +0.4 ) {$ $};
\draw[double,->] (0.56*\a+\bx,4.8) -- (0.575*\a+\bx,4.8);
\draw[double,->] (0.629*\a+\bx,4.8) -- (0.644*\a+\bx,4.8);
\draw[double,->] (0.669*\a+\bx,4.8) -- (0.654*\a+\bx,4.8);
\draw[double,->] (0.525*\a+\bx,4.8) -- (0.513*\a+\bx,4.8);

\node at (0.6*\a+\bx, 0.4*\a+\by) {$\circ$};

\node at  (5,3.5) {$\circ$};
\node at (0.65*\a+\bx,-0) {$c^+$};
\node at (8,0) {$c$};
\node at (0.555*\a+\bx,-0) {$c^-$};
\node at (0.51*\a+\bx,-0) {$c^0$};

\node at (0.6*\a+\bx,-0) {$c^*_-$};
\node at (0.63*\a+\bx,-0) {$c^*_+$};






\end{tikzpicture}
\caption{Pivot points and equilibria when the support functions are partially overlapping.
}
\label{fig:overlap_support}
\end{figure}

The assumption of a finite number of pivot points for $s_A,s_B$ is somewhat restrictive. Suppose there are segments of the cost distribution where voters are equally likely to prefer $A$ or $B$: this would translate to intervals $[c^*_-,c^*_+]$ where $s_A(c)=s_B(c)$ for all $c\in [c^*_-,c^*_+]$.  See Fig.~\ref{fig:overlap_support}.  By our assumption on the introduction the support functions should be differentiable near the pivot points (so the figure is not allowed), but we actually only need them to have bounded derivative in some \emph{open right environment} of $c^*_+$, and likewise for the left pivot point. In other words, the left- and right- derivative at $c^*$ could be different but has to exist.

The effect on our result in Theorem~\ref{thm:nontrivial_eq} is that the right equilibrium will converge to $c^*_+$ (from the right), and the left equilibrium will converge to $c^*_-$ from the left. The proof itself remains unchanged. 

Can points on the interval $[c^*_-,c^*_+]$ be an equilibrium? note that since $m(c)=0$ for all $c\in[c^*_-,c^*_+]$, we have $p(n,m(c))=p(n,0)>q$ by $q$-tie-sensitivity, and for sufficiently large $N$ we should get $p(n(c,N),0)=q$.  We conclude that if the interval contains $q$ then $q$ will also be an equilibrium. Also note that this equilibrium will be stable, as with any small perturbation all voters still think they are $q$-pivotal.

An assumption we cannot relax is that the margin at the [right side of the right-] pivot point must be strictly increasing. If the support functions have the same asymptote at $c^*$ then equilibrium will still exist but we cannot say anything about the rate of convergence of $c_N$ to $c^*$, or on the winning probability.

\section{Stability}\label{apx:stable}
Recall that every agent  has two (undominated) actions: vote/abstain. A Profile is a mapping from the nonatomic set of agents to a subset of active voters.  So for a given election, we can think of every profile $P$ as a subset of agents. Every agent in profile $P$ has a best response (to $P$), which is either to keep her action or switch to the other action. We thus denote by $BR(P)$ the profile obtained from $P$ if all voters simultaneously switch to their best-response. Note that the Bayes-Nash equilibria are those profiles where $BR(P)=P$.

\begin{definition} Let $P, Q$ be two  profiles. 
    \begin{enumerate}
        \item Write $d(P,Q)$ to denote  their \emph{difference}, i.e. the fraction of agents playing differently. 
        \item 
The \emph{$\eps$-neighborhood} of a profile $P$ is the set of profiles with difference at most $\eps$.
    \end{enumerate}
\end{definition}

Recall that $P$ is a \emph{threshold profile} if there is $c$ such that exactly all voters with $c_i\leq c$ vote in $P$. In such a case we denote the threshold by $c(P)$.

An easy observation is that for any $P$, $BR(P)$ is a threshold profile.  This is since in $P$ there is some expected number of active voters $n_P$ and some expected margin $m_P$ and thus the threshold of $BR(P)$ would be $c(BR(P))=p(n_P,m_P)$.

In particular, $BR(P)$ for some threshold profile with $c(P)=a$ is also a threshold profile with some $c(BR(P))=b$. We then denote $b=BR(a)$. 

\begin{definition}
    We say that a threshold profile $P$ is \emph{moving towards} a threshold $q$ if both $q$ and $c(BR(P))$ are on the same side of $c(P)$.
\end{definition}

\begin{definition}[Stable equilibrium]
    An election equilibrium $Q$ with threshold $c_N$ is \emph{stable}, if there is some $\eps>0$ such that for any profile $P$ in a $\eps$-neighborhood of $Q$:
    \begin{itemize}
        \item If $P$ is a threshold profile then it is moving towards $c_N$;
        \item Otherwise, $BR(P)$ (which is a threshold profile) is moving towards $c_N$.
    \end{itemize}
\end{definition}
This means that when starting from profile $P$, a best response by few agents will get closer to $Q$, so it will still be in the $\eps$-neighborhood of $Q$ and will continue to get closer in every step until convergence. 

\begin{lemma}\label{lemma:p_mon}There is some interval  $Z=[c^*,c^*+z]$  where for every $N$, $p(n(c,N),m(c))$ is monotonically decreasing in $c\in Z$. Moreover, for every $\eps>0$ there is sufficiently large $N_\eps$ such that 
 \begin{enumerate}
     \item $p(n(c,N_\eps),m(c))$ is strictly decreasing in $c$ in the range $c\in [c^*+\eps,c^*+z]$;
     \item $c_{N_\eps}\in (c^*+\eps,c^*+z/2]$.
 \end{enumerate}
  \end{lemma}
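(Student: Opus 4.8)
The plan is to first shrink the interval so that both arguments fed to $p$ vary monotonically on it — which instantly gives the weak monotonicity for every $N$ — and then to push the population large enough that the perceived pivotality is strictly below $1$ on all of $[c^*+\eps,c^*+z]$, which upgrades the weak monotonicity to strict there. For the fixed interval: since $c^*$ is the limit of a nontrivial equilibrium it is a pivot point (Theorem~\ref{thm:nontrivial_eq}), so $c^*\in(0,1)$, $s_A(c^*)=s_B(c^*)$ and $m(c^*)=0$; also $c^*<q$ (a right-equilibrium cannot converge from above to a point $\ge q$, since the margin there is positive and pivotality would vanish). By the standing assumption on pivot points (Appendix~\ref{apx:overlap}) the margin $m$ is strictly increasing just to the right of $c^*$, so we may choose $z>0$ small enough that on $Z:=[c^*,c^*+z]$ the functions $s_A,s_B$ have bounded derivatives and $m$ is strictly increasing. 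For every $N$, on $Z$ the map $c\mapsto n(c,N)=(s_A(c)+s_B(c))N$ is non-decreasing and $c\mapsto m(c)$ is strictly increasing, so since $p$ is non-increasing in each argument, $c\mapsto p(n(c,N),m(c))$ is non-increasing on $Z$; this is the first claim, with ``monotonically decreasing'' read as non-increasing — strictness for \emph{every} $N$ is impossible, since for small $N$ the pivotality can be pinned at $1$ on part of $Z$.

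For the strictness in part~1 of the ``moreover'': fix $\eps\in(0,z/2)$ and put $\eps_m:=m(c^*+\eps)>0$ (positive because $m$ increases from $m(c^*)=0$) and $\sigma_\eps:=s_A(c^*+\eps)+s_B(c^*+\eps)>0$. For any $c\in[c^*+\eps,c^*+z]$ we have $n(c,N)\ge\sigma_\eps N$ and $m(c)\ge\eps_m$, hence $p(n(c,N),m(c))\le p(\sigma_\eps N,\eps_m)$; by (weak) vanishing pivotality $p(\sigma_\eps N,\eps_m)\to 0$, so there is $N_1(\eps)$ with $p(\sigma_\eps N,\eps_m)<1$, and therefore $p(n(c,N),m(c))<1$ throughout the range, whenever $N\ge N_1(\eps)$. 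Where $p$ is strictly below $1$ it is strictly decreasing in each argument, so for $c^*+\eps\le c_1<c_2\le c^*+z$ and $N\ge N_1(\eps)$,
\[
p(n(c_1,N),m(c_1))\ \ge\ p(n(c_2,N),m(c_1))\ >\ p(n(c_2,N),m(c_2)),
\]
the last inequality strict because $m(c_1)<m(c_2)$ and $p(n(c_2,N),m(c_1))\le p(n(c_1,N),m(c_1))<1$. Thus $c\mapsto p(n(c,N),m(c))$ is strictly decreasing on $[c^*+\eps,c^*+z]$ for all $N\ge N_1(\eps)$.

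It remains to locate $c_{N_\eps}$. As $\overline c$ is a right-equilibrium with limit $c^*$ we have $c^*<c_N\to c^*$, so $c_N\le c^*+z/2$ for all $N$ past some $N_2$, and we must pick an index $N_\eps\ge\max\{N_1(\eps),N_2\}$ with $c_{N_\eps}>c^*+\eps$. Set $h_N(c):=p(n(c,N),m(c))-c$ on $Z$: the first part makes $h_N$ strictly decreasing there, with $h_N(c^*)=p(n(c^*,N),0)-c^*\ge q-c^*>0$ by tie-sensitivity and $h_N(c^*+z)<0$ for $N$ large, so $c_N$ is the unique root of $h_N$ in $Z$ as soon as $c_N\in Z$ (true for large $N$). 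Now $h_N(c^*+\eps)=p(\sigma_\eps N,\eps_m)-(c^*+\eps)$ is negative for $N$ large and positive for $N$ small (for small $\eps$, $p(\sigma_\eps N,\eps_m)\approx p(\sigma_\eps N,0)\ge q>c^*+\eps$), so letting $N_\eps$ be the last index with $h_{N_\eps}(c^*+\eps)>0$ forces the root $c_{N_\eps}$ into $(c^*+\eps,c^*+z)$; and $N_\eps\to\infty$ as $\eps\to 0$ (fix $N$: as $\eps\to 0$, $(\sigma_\eps N,\eps_m)\to(n(c^*,N),0)$, so $p(\sigma_\eps N,\eps_m)\to p(n(c^*,N),0)\ge q>c^*$, hence eventually $>c^*+\eps$), which gives $N_\eps\ge N_1(\eps),N_2$ and, since $c_{N_\eps}\to c^*$ along a large-index tail, $c_{N_\eps}\le c^*+z/2$.

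The main obstacle is exactly this balancing act: strict monotonicity on $[c^*+\eps,c^*+z]$ wants $N_\eps$ large, whereas $c_{N_\eps}>c^*+\eps$ forbids $N_\eps$ from being \emph{too} large, so $N_\eps$ must lie in an intermediate window, and showing the window is nonempty uses a mild no-jump property of the PPM — that $p(\cdot,\eps_m)$, as a function of $n$, does not plunge from $1$ past $c^*+\eps$ across a single grid step $n\mapsto n+\sigma_\eps$, equivalently $p(\sigma_\eps N_\eps,\eps_m)\in(c^*+\eps,1)$ for the chosen $N_\eps$. This holds for every PPM of Section~\ref{sec:PPM_class}, where the $1\to 0$ transition in $n$ has width $\Theta(\eps_m^{-2})\gg\sigma_\eps$. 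An alternative that sidesteps the issue entirely — and is all the stability proof of Theorem~\ref{thm:stable} actually invokes — reverses the quantifiers: fix a large $N$ first and set $\eps:=\tfrac12(c_N-c^*)$; by continuity of $p,n,m$, $p(n(c^*+\eps,N),m(c^*+\eps))\to p(n(c_N,N),m(c_N))=c_N<1$ as $\eps\uparrow c_N-c^*$, so $p(n(c^*+\eps,N),m(c^*+\eps))<1$, the strictness argument of the second paragraph runs on $[c^*+\eps,c^*+z]$, and $c_N=c^*+2\eps$ trivially lies in $(c^*+\eps,c^*+z/2]$.
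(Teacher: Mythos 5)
Your argument follows the same skeleton as the paper's: weak monotonicity on $Z$ comes from $n(\cdot,N)$ being non-decreasing and $m$ being strictly increasing just to the right of the pivot point, and strictness on $[c^*+\eps,c^*+z]$ comes from (weak) vanishing pivotality pushing $p(n(c^*+\eps,N),m(c^*+\eps))$ below the maximal value so that the ``strictly decreasing when strictly below $1$'' clause of the PPM definition applies; your two-step inequality chain for $c_1<c_2$ is exactly the intended justification. Where you genuinely add something is in part~2, locating $c_{N_\eps}$. The paper dispatches $c_{N_\eps}>c^*+\eps$ in one sentence by taking the pivotality at $c^*+\eps$ to equal $1$ for a borderline choice of $N_\eps$ and then ``increasing $N$ further as needed''; this tacitly assumes that, as the integer $N$ grows, the non-increasing quantity $p(n(c^*+\eps,N),m(c^*+\eps))$ actually visits the open window $(c^*+\eps,1)$ rather than dropping from $1$ past $c^*+\eps$ between consecutive values of $N$ --- precisely the ``no-jump'' property you isolate. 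Your two remedies are both sound: checking the window is wide for every concrete PPM of Section~\ref{sec:PPM_class}, and, more usefully, noting that the sole consumer of this lemma (the proof of Theorem~\ref{thm:stable}) only needs the reversed-quantifier form --- fix a large $N$, then choose $\eps'<c_N-c^*$ by continuity so that $p(n(c^*+\eps',N),m(c^*+\eps'))<1$ --- which holds unconditionally and avoids the tension between wanting $N$ large for strictness and not so large that $c_N$ falls back below $c^*+\eps$. So the proposal is correct in substance and on the same route, but it closes (or at least honestly flags) a quantifier gap that the paper's own write-up leaves open.
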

 \begin{proof}
 Note that $n(c,N)$ is non-decreasing in $c$ everywhere. In addition, $m(c^*)=0$, and by our assumption that $s_A,s_B$ have different derivatives at $c^*$, we get  $m'(c^*)>0$. So from our bounded derivative assumption $m$ must be strictly increasing in some range $[c^*,c^*+z]$ that is independent of $N$. Since $p(n,m)$ is monotonically non-increasing in both parameters, $p$ is non-increasing in $c$ in the entire range $[c^*,c^*+z]$, and this holds for every $N$.

Showing that it strictly decreasing is not immediate because in principle it could equal the maximum value in the entire range. Indeed this is the case for the Polynomial PPM if $N$ is low.
However, from weak vanishing pivotality, we know that  $p(n(c^*+\eps,N), m(c^*+\eps))$ goes to 0 with $N$, and thus drops below the maximal value for sufficiently large $N_\eps$. From our assumption on $p(n,m)$, it is strictly decreasing in both parameters below its maximal value, and thus strictly decreasing in $c$ in the range $(c^*+\eps,c^*+z]$ whenever $N\geq N_\eps$. 

For the minimal $\eps$ for which this holds, we have $p((n(c^*+\eps,N_\eps), m(c^*+\eps))=1>c_N$ and so $c_N$ must be more to the right, where the pivotality decreases. 

We increase $N$ further as needed until $c_N$ is in the range $(c^*+\eps, c^*+z/2)$ (must occur eventually as the limit of $(c_N)_N$ is $c^*$).
 \end{proof}
 The next definition generalizes Def.~\ref{def:stable_thresh}: 
 \begin{definition}[Stable equilibrium]\label{def:stable}

An election equilibrium $Q$ with threshold $c_N$ is stable, if there is some $\varepsilon > 0$ such that for
any profile $P$ in a $\varepsilon$-neighborhood of $Q$:
\begin{itemize}
    \item  If $P$ is a threshold profile then it is moving towards $c_N$ ;
    \item Otherwise, $BR(P)$ (which is a threshold profile) is moving towards $c_N$. 
\end{itemize}
\end{definition}

\stableEq*
\begin{proof}
We first explain how to select $N$.
Suppose that $N$ is sufficiently large so that by Lemma~\ref{lemma:p_mon}, $p$ is strictly decreasing in $c$ in some range $(c^*+\eps',c^*+z]$ and $c_N \in (c^*+\eps',c^*+z/2]$. 

\medskip
Consider a profile $P$ in the $\eps$-neighborhood of $c_N$, where $\eps$ is determined below.

\noindent\textbf{Threshold profile:} Suppose first that $P$ is a threshold profile with threshold $\hat c$. Then choose $\eps$ small enough so that $|c_N-\hat c|<\min\{c_N-(c^*+\eps'),z/2\}$.
 This guarantees that:
 \begin{align*}
     \hat c &> c_N- (c_N-(c^*+\eps') = c^*+\eps',\\
     \hat c&<c_N+z/2< c^*+z.
 \end{align*}

    Since both $\hat c, c_N$ are in the interval $[c^*+\eps',c^*+z]$ where $p$ is strictly decreasing, when $\hat c<c_N$:
    $$BR(\hat c) = p(n(\hat c),m(\hat c))>p(n(c_N),m(c_N))=c_N>\hat c,$$
so both $c_N, BR(\hat c)$ are above $\hat c$. Similarly, when $\hat c>c_N$ then both would be below. 

\medskip
\noindent\textbf{Non-threshold profile:}
Next, suppose that $P$ is \emph{not} a threshold profile. Then we only need to show that $\hat c= c(BR(P))$ is also in the interval $[c^*+\eps',c^*+z]$.

Denote $\eps:=(c_N-(c^*+\eps'))/t$, where $t$ will be determined later.

Denote by $\hat n, \hat m$ the expected number of voters and expected margin in the current profile $P$. Note that
      $$\hat n \in (1-\eps)(s_A(c_N)+s_B(c_N))N \pm \eps\cdot N\subseteq  n(c_N,N)\pm \eps$$

 and similarly
     $$\hat m \in \frac{|s_A(c_N)-s_B(c_N)|\pm \eps}{s_A(c_N)+s_B(c_N)\pm \eps} \subseteq  m(c_N)\pm 2\eps\cdot s',$$
     where $s'=s_A(c_N)+s_B(c_N)$ is a constant. 
      
    Now set $\hat c:=p(\hat n,\hat m)$ and note that since $p$ has bounded derivatives, 
    $$\hat c =p(\hat n,\hat m) \in p(n(c_N,N),m(c_N)) \pm O(\eps) = c_N\pm h\cdot\eps,$$
    for some constant $h$.
    We now set $t>h$, so that  
    $$\hat c> c_N-h\cdot \eps > c_N-t\cdot \eps=c_N-(c_N-(c^*+\eps'))  = c^*+\eps',$$
    and
    \begin{align*}
        \hat c &< c_N+h\cdot \eps \\ &<  c_N+(c_N-c^*)
        \\ & =c^*+2(c_N-c^*) \\ &<c^*+2(z/2) \\&=c^*+z.
    \end{align*}
Thus $\hat c$ (which is the threshold of the best-response to profile $P$) is still on the monotone interval and the previous part of the proof shows it is moving towards $c_N$.



 \end{proof}

\section{Calculus of Voting PPM}\label{apx:CoV}
 \begin{example}[CoV PPM] The Calculus  of Voting PPM (note it also depends on the size of the entire population $N$)
 \begin{align}
  \ppm^{CoV}&(n,m,N) := E_{n' \sim Bin(N,n/N)}[p^{Bin}(n',m)] \nonumber \\ & =   E_{n' \sim Bin(N,n/N)}[\Pr_{x\sim \text{Bin} \big ( n', (1+m)/2 \big)}(x = \floor{n'/2})]. 
  \end{align}
 \end{example}

 Note that from Equations~\eqref{eq:n_c},\eqref{eq:m_c} we can also derive the expected fraction of A and B supporters from $n,m$ and $N$, as 
 $$s_A,s_B=\frac{n^2 \pm mN^2}{2n^2} = \frac12 \pm (\frac{N}{n})^2m.$$
 \begin{proposition}
     The following three terms coincide:
     \begin{enumerate}
         \item The probability of a tie, i.e. $Pr(V_A=V_B|I,N,c)$;
         \item $\ppm^{CoV}(n,m,N)$;
         \item $Pr_{(V_A,V_B,V_0)\sim Mult(N,(s_A,s_B,1-s_A-s_B))}[V_A=V_B]$.
     \end{enumerate}
 \end{proposition}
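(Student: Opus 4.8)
The plan is to establish the two equalities $(1)=(3)$ and $(2)=(3)$ separately. The first is essentially a restatement of the sampling model; the second is the standard ``chain rule'' generation of a multinomial vector as a compound of two binomials --- first decide active-versus-abstain, then split the active voters between $A$ and $B$.

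For $(1)=(3)$ I would argue directly from the model. Each of the $N$ voters is an independent draw from $\calD$, and under the threshold profile $c$ voter $i$ votes for $A$ iff $c_i\le c$ and $T_i=A$, votes for $B$ iff $c_i\le c$ and $T_i=B$, and abstains otherwise. By the defining reading of the support functions (cf.\ Proposition~\ref{prop:support}) these three events have probabilities $s_A(c)$, $s_B(c)$ and $1-s_A(c)-s_B(c)$, and they are independent across $i$. Hence $(V_A,V_B,V_0)$ with $V_0:=N-V_A-V_B$ is, by definition, distributed as $\mathrm{Mult}\bigl(N,(s_A(c),s_B(c),1-s_A(c)-s_B(c))\bigr)$, so $\Pr(V_A=V_B\mid I,N,c)$ \emph{is} the multinomial probability in item~(3); there is nothing further to prove here.

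For $(2)=(3)$ I would generate $(V_A,V_B,V_0)$ in two stages. Merging categories $A$ and $B$ shows the number of non-abstainers $n':=V_A+V_B$ is $\mathrm{Bin}(N,s_A+s_B)$, and by \eqref{eq:n_c} we have $s_A+s_B=n(c,N)/N$, so $n'\sim\mathrm{Bin}(N,n/N)$ --- exactly the outer expectation in $\ppm^{CoV}$. Conditioned on $n'$, the pair $(V_A,V_B)$ is a binomial split, $V_A\mid n'\sim\mathrm{Bin}\bigl(n',\tfrac{s_A}{s_A+s_B}\bigr)$ with $V_B=n'-V_A$; assuming w.l.o.g.\ $s_A\ge s_B$ (the reverse case follows by swapping the labels $A\leftrightarrow B$, which alters neither the event $\{V_A=V_B\}$ nor a binomial tie probability), the definition \eqref{eq:m_c} of the margin gives $\tfrac{s_A}{s_A+s_B}=\tfrac{1+m}{2}$, hence $V_A\mid n'\sim\mathrm{Bin}(n',(1+m)/2)$. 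Taking expectation over $n'$ yields $\Pr(V_A=V_B)=\Exp_{n'\sim\mathrm{Bin}(N,n/N)}\!\bigl[\Pr_{x\sim\mathrm{Bin}(n',(1+m)/2)}(x=n'/2)\bigr]$, where the inner probability is $0$ for odd $n'$; since $n'/2=\floor{n'/2}$ for even $n'$, this matches the summand in the definition of $\ppm^{CoV}(n,m,N)$ term by term, the only combinatorial fact needed being $\binom{N}{k,k,N-2k}=\binom{N}{2k}\binom{2k}{k}$ after substituting $s_A=\tfrac{n}{2N}(1+m)$, $s_B=\tfrac{n}{2N}(1-m)$.

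The one place that needs care, and the only real obstacle, is parity: for odd $n'$ a genuine tie is impossible, yet the summand $\Pr(x=\floor{n'/2})$ appearing in $\ppm^{CoV}$ is strictly positive (it is the probability of losing by exactly one vote), so the literal event $\{V_A=V_B\}$ coincides with the $\ppm^{CoV}$ summands only on the even-$n'$ part. This is precisely the simplification the paper already adopts (its footnote on the exact-tie convention for the pivot event): taking the rounded notion $x=\floor{n'/2}$ uniformly for the ``pivot/tie'' quantity, all three expressions collapse to the single sum $\sum_{2k\le N}\binom{N}{2k}\binom{2k}{k}(n/N)^{2k}\bigl(\tfrac{1+m}{2}\bigr)^{k}\bigl(\tfrac{1-m}{2}\bigr)^{k}(1-n/N)^{N-2k}$, and the proposition holds verbatim; without that convention the three agree up to the odd-index contribution, which is $O(1/\sqrt n)$ and hence immaterial for the asymptotic statements of Sections~\ref{sec:charEquilibria} and \ref{sec:CJT}. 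I would therefore carry out the proof under that convention so that the equality is exact.
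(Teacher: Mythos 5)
Your proof is correct and follows essentially the same route as the paper's: generate the multinomial vector in two stages (first active-vs-abstain as $\mathrm{Bin}(N,n/N)$, then split the active voters as $\mathrm{Bin}(n',(1+m)/2)$), which is exactly the paper's argument. You are in fact more careful than the paper on the one delicate point—the odd-$n'$ terms, where $\Pr(x=\floor{n'/2})>0$ but a literal tie is impossible—and you correctly observe that exact equality of all three expressions requires the rounded-pivot convention the paper adopts in its footnote; the paper's own proof silently elides this.
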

 \begin{proof}
     There is a tie if there is exactly the same number of active $A$ voters and $B$ voters. One way to compute this is to first sample $n'$ active voters by flipping a coin for each of the $N$ voters. Then for each of the $n'$ active voters decide (w.p. $s_A$ vs. $s_B$) if she is an $A$ or $B$ supporter. There is a tie if there are exactly $n'/2$ A supporters, which happens w.p. $p^{Bin}(n',m)$.

     Alternatively, we could just decide for each of the $N$ voters whether she is an active A supporter (which occurs w.p. $s_A = Pr_{(c_i,T_i)\sim \calD}[i \text { is active and supports }A]$), an active $B$ supporter, or inactive. Then we check if the two (correlated) multinomial variables $V_A$ and $V_B$ are the same.  
 \end{proof}
\begin{proposition}
    The CoV PPM has strongly vanishing pivotality. I.e., for every fixed $c$, $\lim_{N\rightarrow \infty}Pr(V_A=V_B|I,N,c)=0$. Moreover,
    \begin{itemize}
        \item $p^{CoV}(n,m,N) =\Theta(\frac{1}{\sqrt{n}})$ if $m=0$; and
        \item $p^{CoV}(n,m,N) = exp(-\Theta(n))$ if $m>0$.
    \end{itemize}
    (assuming the ratio $n/N$ remains fixed)
\end{proposition}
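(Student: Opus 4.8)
The plan is to reduce both asymptotic estimates to the Binomial PPM, using the identity $p^{CoV}(n,m,N)=\E_{n'\sim Bin(N,n/N)}[p^{Bin}(n',m)]$ from the definition of the CoV PPM together with $p^{CoV}(n,m,N)=\Pr(V_A=V_B\mid I,N,c)$ from the preceding proposition, and then pushing the estimate through the expectation over the random number of active voters $n'$. The first step is to record the only input about $p^{Bin}$ that is needed: a Stirling bound uniform in $n'\ge1$, namely $\binom{n'}{\lfloor n'/2\rfloor}=\Theta(2^{n'}/\sqrt{n'})$ (handling even $n'$ directly and odd $n'$ via $\binom{2k+1}{k}=\tfrac12\binom{2k+2}{k+1}$), which—after absorbing the $O(1)$ factor caused by the parity mismatch between $\lfloor n'/2\rfloor$ and $\lceil n'/2\rceil$ for fixed $m$—yields
\[ p^{Bin}(n',m)=\Theta\!\Big(\tfrac{1}{\sqrt{n'}}(1-m^2)^{n'/2}\Big),\qquad n'\ge1, \]
with constants depending on $m$ but not $n'$; in particular $p^{Bin}(n',0)=\Theta(1/\sqrt{n'})$.

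For the case $m=0$ I would bound $\E_{n'}[p^{Bin}(n',0)]$ from below by Jensen's inequality applied to the convex map $x\mapsto x^{-1/2}$, giving $\Omega(1/\sqrt{\E n'})=\Omega(1/\sqrt n)$, and from above by splitting the expectation across $\{n'\ge n/2\}$, where $p^{Bin}(n',0)=O(1/\sqrt n)$, and its complement, which has probability $\exp(-\Theta(N))$ by a Chernoff bound while $p^{Bin}\le1$; the second piece is negligible, so $p^{CoV}(n,0,N)=\Theta(1/\sqrt n)$. For $m>0$ set $\lambda:=\sqrt{1-m^2}\in(0,1)$, a constant. The upper bound comes from $p^{Bin}(n',m)\le C\lambda^{n'}$ and the Binomial moment generating function, $\E[\lambda^{n'}]=(1-\tfrac nN(1-\lambda))^N=\rho^N$ with $\rho:=1-\tfrac nN(1-\lambda)\in(0,1)$ because $n/N$ is a fixed constant in $(0,1)$; hence $p^{CoV}\le C\rho^N=\exp(-\Theta(N))=\exp(-\Theta(n))$ since $n=\Theta(N)$. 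The matching lower bound comes from conditioning on $n'$ hitting its mode: $\Pr(n'=\lfloor n\rfloor)=\Theta(1/\sqrt n)$ by Stirling, so $p^{CoV}\ge\Theta(1/\sqrt n)\cdot p^{Bin}(\lfloor n\rfloor,m)=\Theta(\lambda^{n}/n)=\exp(-\Theta(n))$, because $\ln\lambda<0$ is a constant and $1/n$ is subexponential.

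Finally, the strong vanishing pivotality statement follows by combining the two cases: by the preceding proposition $p^{CoV}(n,m,N)=\Pr(V_A=V_B\mid I,N,c)$, and for any fixed $c$ with $s_A(c)+s_B(c)>0$ we have $n=n(c,N)=(s_A(c)+s_B(c))N\to\infty$ as $N\to\infty$; substituting $m=m(c)$ into the estimates above gives $p^{CoV}\to0$ whether $m(c)=0$ (rate $\Theta(1/\sqrt n)$) or $m(c)>0$ (rate $\exp(-\Theta(n))$), which is exactly Definition~\ref{def:vp}. I expect the main obstacle to be the bookkeeping for the $m=0$ upper bound: $p^{Bin}(\cdot,0)$ is neither monotone nor convex in $n'$, so Jensen controls only one direction and one genuinely has to pair the concentration of $n'$ with a rare-event tail bound; everything else is routine Stirling together with a geometric-series / MGF computation.
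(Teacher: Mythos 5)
Your proof is correct, and it follows the same overall strategy as the paper's: write $p^{CoV}(n,m,N)=\E_{n'\sim Bin(N,n/N)}[p^{Bin}(n',m)]$, establish the Stirling asymptotics $p^{Bin}(n',m)=\Theta\bigl(\tfrac{1}{\sqrt{n'}}(1-m^2)^{n'/2}\bigr)$, and use concentration of $n'$ around $n=\Theta(N)$. Where you differ is in how the expectation over $n'$ is controlled. The paper simply plugs in the endpoints of a $[0.99n,1.01n]$ concentration window, i.e.\ it sandwiches $\Pr(V_A=V_B)$ between $p^{Bin}(1.01n,m)-\exp(-\Theta(N))$ and $p^{Bin}(0.99n,m)+\exp(-\Theta(N))$; this tacitly treats $p^{Bin}(\cdot,m)$ as monotone in $n'$, which is false pointwise (parity oscillations) though harmless asymptotically, and in any case it leaves the endpoint substitution unjustified. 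You avoid monotonicity entirely: for $m=0$ you get the lower bound from Jensen applied to the convex map $x\mapsto x^{-1/2}$ and the upper bound from a Chernoff split on $\{n'\ge n/2\}$; for $m>0$ you get the upper bound from the probability generating function $\E[\lambda^{n'}]=\rho^N$ and the lower bound by conditioning on $n'$ hitting its mode, which costs only a polynomial factor against the exponential. Your version is somewhat longer but strictly more rigorous, and your closing observation---that the two directions of the $m=0$ bound genuinely require different tools because $p^{Bin}(\cdot,0)$ is neither monotone nor convex---is exactly the point the paper's two-line proof glosses over. The only (cosmetic) loose ends are the event $n'=0$, where $p^{Bin}(0,m)=1$ and $1/\sqrt{n'}$ is undefined (handled by noting $\Pr(n'=0)=\exp(-\Theta(N))$), and the degenerate case $n=N$ where $n'$ is deterministic; neither affects the argument.
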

\begin{proof}
    As $N$ goes to infinity, w.h.p we have $n'\in[0.99n,1.01n]$. Also note that $m$ remains fixed. Thus 
    $$Pr(V_A=V_B|I,N,c)\leq p^{Bin}(0.99n,m)+exp(\Theta(-N)),$$
    and
    $$Pr(V_A=V_B|I,N,c)\geq p^{Bin}(1.01n,m)-exp(\Theta(-N)).$$
    Now we have $p^{Bin}(n,m)=\Theta(\frac{1}{\sqrt{n}})$ for $m=0$ and  $p^{Bin}(n,m)=exp(-\Theta(n))$ for $m>0$, and changing $n$ by a constant fraction does not change the asymptotic result. 
\end{proof}

\subsection{Poisson PPM}
\begin{example}
  [Poisson PPM]
  The Poisson PPM  considers the perceived pivotality as the probability that an equal number of supporters are drawn from Poisson distributions with parameters $N \cdot s_A $ and $N \cdot s_B$:
  \begin{equation}
      \ppm^{Poi}(n,m):=\Pr_{\substack{X_A\sim \text{Poisson}((1+m)n/2)\\ X_B\sim \text{Poisson} ((1-m)n/2)}}(X_A=X_B)\label{eq:Poisson}
  \end{equation}
 \end{example}
Note that while $s_A, s_B$ and $N$ cannot be inferred from $n,m$,  the Poisson parameters can be inferred as $s_A\cdot N= \frac{s_A}{s_A+s_B}n=\frac{1+m}{2}n$, and $s_B\cdot N=\frac{1-m}{2}n$.
 
\section{Diverse PPM}\label{apx:diverse}
Suppose that each type includes not just voting cost $c_i$ and preferred candidate $T_i$, but also $i$'s own pivotality estimation function $p_i(n,m)$, which we still assume to be continuous and decreasing in both parameters. 

We fix an election $(I,N)$, where $I$ is a distribution over types.

\begin{proposition}
    Every election $(I,N)$ has a pure Bayes-Nash equilibrium. 
\end{proposition}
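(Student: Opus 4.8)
The plan is to reduce the existence question to a fixed point, exactly as in the proof of Proposition~\ref{prop:eq_exists}, but over a two-dimensional domain since voters no longer share a single threshold. The key observation is that a voter's best response depends on the rest of the profile only through the \emph{aggregate} pair $(n,m)$ of expected active turnout and expected margin: voter $i$ of type $(c_i,T_i,p_i)$ votes iff $p_i(n,m)\ge c_i$. Hence, fixing any belief $(n,m)$, the induced best-response profile is the ``type-threshold'' profile $v_{(n,m)}$ in which $i$ votes iff $c_i\le p_i(n,m)$, and this profile in turn induces new aggregates $a(n,m):=\Pr_{(c,T,p)\sim I}[T=A,\ c\le p(n,m)]$ and $b(n,m):=\Pr_{(c,T,p)\sim I}[T=B,\ c\le p(n,m)]$, from which we define $\Phi(n,m):=\big((a+b)N,\ |a-b|/(a+b)\big)$. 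Since $a+b\le 1$, the map $\Phi$ sends the compact convex set $D=[0,N]\times[0,1]$ into itself, and a pair $(n^*,m^*)$ with $\Phi(n^*,m^*)=(n^*,m^*)$ yields, via $v_{(n^*,m^*)}$, a profile whose expected turnout is $n^*$ and expected margin is $m^*$ and in which every voter best-responds; i.e., a pure Bayes--Nash equilibrium. So it suffices to produce a fixed point of $\Phi$.

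Next I would verify the hypotheses of Brouwer's theorem. Continuity of $a(\cdot,\cdot)$ and $b(\cdot,\cdot)$ follows from continuity of each $p_i$ together with dominated convergence: if $(n_k,m_k)\to(n_0,m_0)$ then $p(n_k,m_k)\to p(n_0,m_0)$ for every type, so the indicators $\mathbb{1}[c\le p(n_k,m_k)]$ converge for every type except those on the boundary set $\{c=p(n_0,m_0)\}$, which has $I$-measure zero because $p(n_0,m_0)>0$ and the cost component of $I$ is atomless away from $0$ (the standing assumption on $\calD$, carried over to the diverse model). Provided the induced turnout $a+b$ is bounded away from $0$ on $D$ --- which holds whenever $I$ assigns positive probability to every neighborhood of cost $0$, in particular whenever a positive mass of core voters is present, since then $\Pr_I[c\le p(n,m)]>0$ for all $(n,m)$ because $p(n,m)>0$ --- the map $(a,b)\mapsto|a-b|/(a+b)$ is continuous, hence $\Phi$ is continuous and Brouwer delivers the required fixed point; reading off $v_{(n^*,m^*)}$ then completes the proof.

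The one place that needs care --- and the main obstacle --- is the degenerate configuration in which \emph{no} voter turns out, where the expected margin $|a-b|/(a+b)=0/0$ is undefined and $\Phi$ can genuinely be discontinuous; this is exactly the subtlety already latent in Proposition~\ref{prop:eq_exists}. Under the mild assumption above it never arises, so for the clean statement I would simply invoke that assumption. In full generality I would instead pass to the correspondence $\Psi(n,m)$ equal to $\{\Phi(n,m)\}$ when $a+b>0$ and to $\{0\}\times[0,1]$ when $a+b=0$; since $a(\cdot)+b(\cdot)$ is continuous and vanishes precisely at the latter points, $\Psi$ is nonempty-, convex-, and closed-valued, so Kakutani's theorem applies, and any fixed point still corresponds to an equilibrium: at a zero-turnout fixed point $(0,m^*)$ one has $a(0,m^*)+b(0,m^*)=0$, so $I$-almost every voter has $c_i>p_i(0,m^*)$ and strictly prefers to abstain (with the harmless convention that the margin at zero turnout may be taken to be any value in $[0,1]$).
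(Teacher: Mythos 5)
Your proof is correct and follows essentially the same route as the paper's: a Brouwer fixed-point argument applied to the aggregate best-response map, using continuity of the PPMs and atomlessness of the cost distribution. The only difference is cosmetic---the paper parametrizes the domain by the vote shares $(s_A,s_B)$ on the set $\{(x,y): x\ge a,\ y\ge b,\ x+y\le 1\}$ (which sidesteps the zero-turnout singularity whenever core voters exist), whereas you work in $(n,m)$-coordinates and patch that degenerate point explicitly, a corner case the paper's own proof does not address.
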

\begin{proof}
The functions $s_A(c)$ and $s_B(c)$ become useless, since there is no meaningful cost threshold that applies to all agents. 

Also let $a,b$ be the fraction of core supporters of $A$ and $B$, respectively. 
However, we will maintain the notation $s_A,s_B$
for the fraction of all voters who actively vote for $A$ and $B$, respectively. 

Note that 
$$(s_A,s_B)\in \Delta := \{(x,y) : (x\geq a) \wedge (y\geq b) \wedge (x+y\leq 1)\}.$$

Given a pair of numbers $s_A, s_B$, we can still compute the expected number of active voters and expected margin as in Eq.~\eqref{eq:n_c},\eqref{eq:m_c} (recall that $N$ is fixed):

\begin{align}
    n(s_A,s_B)&:=(s_A+s_B)N \label{eq:n_s}; &\text{(active voters)}\\
    m(s_A,s_B)&:=\frac{|s_A-s_B|}{s_A+ s_B}\label{eq:m_s}.&\text{(expected margin)}
\end{align}

We now define functions $S_A,S_B$ which map $n$ and $m$ to $(s_A,s_B)\in \Delta$, by integrating over the distribution $I$:
\begin{align*}
    S_A(n,m) &:= Pr_{i\sim I}[c_i \leq p_i(n,m) \wedge T_i=A];\\
    S_B(n,m) &:= Pr_{i\sim I}[c_i \leq p_i(n,m) \wedge T_i=B];\\
\end{align*}

Finally we get a function $F:\Delta\rightarrow\Delta$ defined as
$$F(s_A,s_B):=\left(S_A(n(s_A,s_B),m(s_A,s_B)), S_B(n(s_A,s_B),m(s_A,s_B))\right).$$
Since $I$ is atomless, $p_i$ are continuous, and $n(), m()$ are continuous,  so are $S_A()$ and $S_B()$. Thus $F$ is a continuous function from a compact and convex set $\Delta$ onto itself. From Brouwer's fixed point theorem, $F$ has a fixed point. This point is a pure Nash equilibrium of $(I,N)$.
\end{proof}
\section{Additional Insights from Simulations}

 \begin{restatable}{example}{exOne}
Suppose $s_A(c)=0.1+c/2$ and $s_B(c)=0.4$. That is, $B$ has 40\% overall support, all of them core supporters and $A$ has 60\% overall support and a fraction of  voters are distributed  over cost range $[0,1]$. We assume the Tie-sensitive PPM with $\alpha=1$. 
\label{ex:first}
\end{restatable}

The code used for the simulation results is available in anonymized repository:

\url{https://anonymous.4open.science/r/NJT-E778/}.

 Our goal here is to see how $A$'s winning probability changes as population size increases and/or when we vary the parameters. For simplicity and easy computation we use the Polynomial PPM with $\beta=\frac12$.

 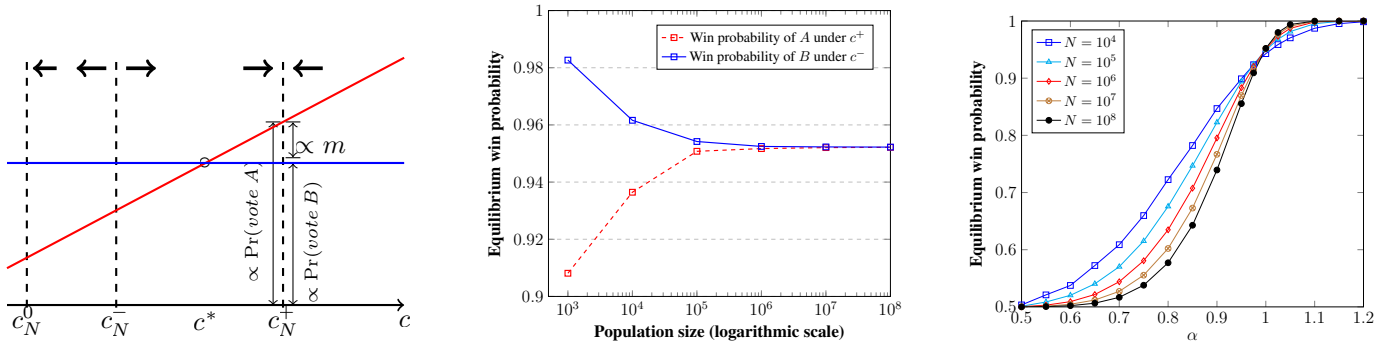
\begin{figure*}[ht]
\centering
\begin{subfigure}{0.28\textwidth}
     \centering 
 \begin{tikzpicture}[xscale=0.66, yscale=0.7, blend group=soft light]
\def\a{40}
\def\bx{-20}
\def\by{-13}
\draw[blue,thick] (0.5*\a+\bx, 0.4*\a+\by) -- (0.7*\a+\bx, 0.4*\a+\by);
\draw[red,thick] (0.5*\a+\bx, 0.35*\a+\by) -- (0.7*\a+\bx, 0.45*\a+\by);
\draw[thick,->] (0,0.3) -- (8,0.3);
\draw[dashed,thick] (0.639*\a+\bx,0.25) -- (0.639*\a+\bx,5);
\draw[dashed,thick] (0.555*\a+\bx,0.25) -- (0.555*\a+\bx,5);
\draw[dashed,thick] (0.51*\a+\bx,0.25) -- (0.51*\a+\bx,5);
\draw[double,->] (0.55*\a+\bx,4.8) -- (0.535*\a+\bx,4.8);
\node at (0.55*\a+\bx + 0.6,4.8 +0.4 ) {$ $};
\node at (0.55*\a+\bx - 0.31,4.8 +0.4 ) {$ $};
\draw[double,->] (0.56*\a+\bx,4.8) -- (0.575*\a+\bx,4.8);
\draw[double,->] (0.619*\a+\bx,4.8) -- (0.634*\a+\bx,4.8);
\draw[double,->] (0.659*\a+\bx,4.8) -- (0.644*\a+\bx,4.8);
\draw[double,->] (0.525*\a+\bx,4.8) -- (0.513*\a+\bx,4.8);

\node at (0.6*\a+\bx, 0.4*\a+\by) {$\circ$};

\node at (0.639*\a+\bx,-0) {$c^+_N$};
\node at (8,0) {$c$};
\node at (0.555*\a+\bx,-0) {$c^-_N$};
\node at (0.51*\a+\bx,-0) {$c^0_N$};

\node at (0.6*\a+\bx,-0) {$c^*$};
\draw[|<->|] (0.639*\a+\bx+0.2, 0.4*\a+\by+0.09) -- (0.639*\a+\bx+0.2, 0.414*\a+\by + 0.23 );
\draw[<->|] (0.639*\a+\bx+0.2, 0.3) -- (0.639*\a+\bx+0.2, 0.4*\a+\by + 0.01 );
\draw[<->|] (0.639*\a+\bx-0.2, 0.3) -- (0.639*\a+\bx-0.2, 0.414*\a+\by + 0.23 );
\begin{scriptsize}
\node[rotate=90] at (0.639*\a+\bx+0.6,1.5) {$ \propto \Pr({vote}~B)$};
\node[rotate=90] at (0.639*\a+\bx-0.6,1.9) {$ \propto \Pr({vote}~A)$};
\end{scriptsize}
\node at (0.639*\a+\bx+0.75, 0.4*\a+\by+0.3) {$\propto m$};




\end{tikzpicture}
 
\label{fig:linearSupport_apx}
\end{subfigure}
\hfill 
\begin{subfigure}{0.28\textwidth}

    \centering
 \begin{tikzpicture}[scale=0.54] 
\begin{axis}[
    xmode=log,
    log ticks with fixed point,
    xlabel={\textbf{Population size (logarithmic scale)}},
    ylabel={\textbf{Equilibrium  win probability}},
    xmin=500, xmax=10^8,
    ymin=0.9, ymax=1,
    label style={font=\Large},
    tick label style={font=\Large}, 
    xtick={10^3, 10^4, 10^5, 10^6,10^7,10^8},
    xticklabels = {$10^3$,$10^4$, $10^5$, $10^{6}$,$10^7$,$10^8$},
    ytick={0.88,0.9,0.92,0.94,0.96,0.98,1},
    legend pos=north west,
    ymajorgrids=true,
    grid style=dashed,
    legend style={
        at={(axis description cs:0.95,0.95)},
        anchor=north east
    }
]
\addplot[thick, 
    color=red,
    mark=square, 
    mark options={solid},  
    dashed
]
    coordinates {
        (10^3, 0.90810929)(10^4, 0.93644931)(10^5, 0.95076779)(10^6, 0.95171826)(10^7, 0.95206293)(10^8, 0.95216058)
    };
\addplot[thick,
    color=blue,
    mark options={solid},
    mark=square
]
    coordinates {
        (10^3, 0.98266767)(10^4, 0.96161482)(10^5, 0.95419222)(10^6, 0.95247038)(10^7, 0.95230315)(10^8, 0.95223574)
    };
\legend{Win probability of $A$ under $c^{+}$, Win probability of $B$ under $c^-$}
\end{axis}
\end{tikzpicture}

    \label{fig:win-probability_apx}
\end{subfigure}
\hfill 
\begin{subfigure}{0.28\textwidth}

 \begin{tikzpicture}[scale=0.54]
    \begin{scope}[
    ]
        \begin{axis}[
            ylabel={\textbf{Equilibrium win probability}},
            xlabel={$\alpha$},
            xmin=0.5, xmax=1.2,
             ymin=0.5, ymax=1,
             label style={font=\Large},
                    tick label style={font=\Large}, 
            xtick={0.5,  0.6,  0.7, 0.8,  0.9,   1,  1.1,  1.2}, legend pos=north west
        ]

\addplot[ 
    color=blue,
    mark=square
    ]
    coordinates {  (0.5,0.5034463993584801)(0.55, 0.5209968252423642)(0.6, 0.5374775010447751) (0.65, 0.572291459126275)( 0.7, 0.6087449472248506 ) (0.75, 0.6597175923323482)(0.8,0.7225977102031234) (0.85, 0.7822348675774462)(0.9,0.8469174102819855) (0.95,0.8989513770811633) (0.975,0.9235848422584841)(1,0.9429125196891085) (1.025, 0.9587758575127688) (1.05,0.9705707189361287)(1.1,0.9873102229031069) (1.15, 0.9954117338584801 ) (1.2,0.9986352165177431)
    };

    \addplot[
    color=cyan,
    mark=triangle
    ]
    coordinates {  (0.5, 0.5010963154212058) (0.55,0.5085490588469159) (0.6,0.5202529716971387)  (0.65, 0.5403472945607244) (0.7, 0.5702710110874876) (0.75, 0.6151564360550874)  (0.8, 0.6756813110811366) (0.85, 0.7469505587253285) (0.9,  0.8227145240083575)  (0.95, 0.8941930364247158) (0.975, 0.9241895067950894)  (1, 0.9492574754059184)  (1.025, 0.9681185863906605) (1.05, 0.9815032301789244)  (1.1, 0.9953040321681711)(1.15, 0.999242990990767)(1.2, 0.9999299920173971)
    };
    \addplot[
    color=red,
    mark=diamond
    ]
    coordinates { (0.5,0.5003468902059904) (0.55, 0.5029925155005879) (0.6, 0.509174210623359) (0.65, 0.521629510461562) (0.7, 0.5439623696839793)(0.75, 0.5807451241705572)  (0.8, 0.6348391717851638) ( 0.85, 0.7076024419027669) (0.9,  0.7954745493948241) (0.95, 0.8832932944912986) (0.975, 0.9207629392877437) (1,  0.9511620674243744)(1.025, 0.9732500282846483) (1.05, 0.9871896441097167)(1.1,  0.9982903515939576) (1.15,  0.9999105144690333) (1.2, 0.9999986813739195)
    
    };
    \addplot[ 
    color=brown,
    mark=otimes
    ]
    coordinates {  (0.5, 0.5001097028971535) (0.55, 0.5012324305739424) (0.6, 0.5045321111698506)( 0.65,  0.5119818568953474)( 0.7, 0.5271480860811699)( 0.75,  0.5554127150201131)( 0.8,  0.6021369500337498)(0.85,  0.6728180753803086)(0.9, 0.7668542167736896)(0.95, 0.8696779614724737)(0.975,  0.9154547179317182)(1, 0.9518786838592211) (1.025,  0.9769308242369477)(1.05,  0.9910949188400563)(1.1, 0.9994568515423824)(1.15, 0.9999940908764602)(1.2, 0.9999999949959482)
    };
    \addplot[ 
    color=black,
    mark=*
    ]
    coordinates {  (0.5, 0.5000346912422677)(0.55, 0.5005458562520027)(0.6,  0.5020799539199186)(0.65, 0.506433665179634)(0.7, 0.5167118739393539)(0.75, 0.5378743422746576)(0.8,  0.5771469031991653)(0.85, 0.6427745084476345)(0.9,  0.7394354680494846)(0.95, 0.8554562683942735)(0.975, 0.9093048027818923)(1, 0.9521161311947801)(1.025, 0.979997219313373)(1.05, 0.9939073988915024)(1.1,  0.9998608347908574) (1.15, 0.999999832098464)(1.2, 0.9999999999983229)
    };



            \addlegendentry{$N = 10^4$}
\addlegendentry{$N = 10^5$}
\addlegendentry{$N = 10^6$}
\addlegendentry{$N = 10^7$}
\addlegendentry{$N = 10^8$}
        \end{axis}
    \end{scope}
\end{tikzpicture}

     \label{fig:fourth_apx}
\end{subfigure}
\caption{(L)  demonstrates election instance from Section~\ref{sec:simulation}. For  large value $N$, the pivot point $c^*$, two non-trivial equilibria $c_N^+$ and $c_N^-$, and the trivial equilibrium $c^0$ are shown. For $c^+_N$, the probability of a random voter to vote $A$ is proportional to $s_A(c^+_N)$. The $m(c^+_N)$ is   proportional to the margin of victory.
The bold arrows  indicate that $c^+_N, c^0_N$ are stable  equilibria whereas $c^-_N$ is not stable. (C) Win probability for different values of $N$ under respective induced equilibria. (R) Win probability of  $A$ for $\beta = 0.5$ and different values of $\alpha$ in polynomial PPM model  for different values of  $N$. The trend reversal can be observed at   $\alpha =1$.  \label{fig:sim_apx} 
}
\end{figure*}

As noted earlier in the main paper, the unique pivot point is $0.6$, so we expect the winning probability  to approach $\Phi(0.6^{-1})\cong 0.952$ in both $c^+_N$ and $c^-_N$ as $N$ increases.

Figure~\ref{fig:sim_apx} (C) demonstrates the Non-Jury Theorem, showing that the winning probability of A in the stable equilibrium $c^+$ is bounded away from 1, even as the total population size $N$ is increasing. Interestingly, the winning probability of B behaves differently and is \emph{decreasing} towards the $c^-$ equilibrium point. Hence  $B$ would always prefer a smaller fraction of the   population to vote whereas the popular candidate prefers a large  population to vote. Showing that the equilibrium $c^-$ is not stable.

We observe that the  win probabilities under $c^+$ and $c^-$ for respective winning candidates are significantly different. While $B$ wins with a high probability under $c^-$, $A$ wins under $c^+$ with a high probability for the same population size, for  a fixed population size, the win probabilities of these candidates under their favoured equilibria are not the same.

 Figure~\ref{fig:sim_apx} (R) shows the win probability of candidate $A$ under $c^+$ and  for different values of $\alpha$. 
We observe that with  higher values of $N$ equilibrium win probability of at $c^+$   approaches a step function around $\alpha=1=2\beta$. The larger value of $\alpha$ pushes the equilibrium   point $c^+$ towards the right, increasing the win probability of the popular candidate.  Yet this occurs slowly and   even for  high $N$ there is  a gradual increase of the win probability of $A$ with $\alpha$. 

\end{document}